\let\c@author\relax
\appto{\bibsetup}{\raggedright}
\newcolumntype{$}{>{\global\let\currentrowstyle\relax}}
\newcolumntype{^}{>{\currentrowstyle}}
\newcolumntype{i}{>{\columncolor{gray!70}}c}
\newcolumntype{a}{>{\columncolor{gray!20}}c}
\newcolumntype{g}{>{\columncolor{white}}c}
\newcommand{\tphi}{\tilde{\bm{\phi}}}
\newcommand{\Db}{\mathbf{D}}
\newcommand{\Ib}{\mathbf{I}}
\newcommand{\Mb}{\mathbf{M}}
\newcommand{\Cb}{\mathbf{C}}
\newcommand{\Lb}{\mathbf{L}}
\newcommand{\Sib}{\mathbf{\Sigma}}
\newcommand{\Wb}{\mathbf{W}}
\newcommand{\Qb}{\mathbf{Q}}
\newcommand{\Qbt}{\tilde{\mathbf{Q}}}
\newcommand{\Hb}{\mathbf{H}}
\newcommand{\tSig}{\tilde{\mathbf{\Sigma}}}
\newcommand{\one}{\mathbf{1}}
\newcommand{\yb}{\bm{y}}
\newcommand{\by}{\bm{y}}
\newcommand{\tby}{\tilde{\bm{y}}}
\newcommand{\thetab}{\bm{\theta}}
\newcommand{\bX}{\bm{X}}
\newcommand{\ttheta}{\tilde{\bm{\theta}}}
\newcommand{\ty}{\tilde{\bm{y}}}
\newcommand{\bphi}{\bm{\phi}}
\newcommand{\brho}{\bm{\rho}}
\newcommand\redsout{\bgroup\markoverwith{\textcolor{red}{\rule[0.5ex]{2pt}{0.4pt}}}\ULon}
\date{}
\numberwithin{equation}{section}
\theoremstyle{plain}
\newtheorem{thm}{Theorem}[section]
\newtheorem{lemma}{Lemma} 
\newtheorem{definition}{Definition} 
\newtheorem{corollary}{Corollary}
\title{Structure induced by a multiple membership transformation on the Conditional Autoregressive model}
\author[1]{Marco Gramatica}
\author[1, 2]{Silvia Liverani}
\author[3]{Peter Congdon}
\affil[1]{School of Mathematical Sciences, Queen Mary University of London, London, UK}
\affil[2]{The Alan Turing Institute, The British Library, London, UK}
\affil[3]{School of Geography, Queen Mary University of London, London, UK}
\begin{document}
	\maketitle

		\begin{abstract}
			The objective of disease mapping is to model data aggregated at the areal level. In some contexts, however, (e.g. residential histories, general practitioner catchment areas) when data is arising from a variety of sources, not necessarily at the same spatial scale, it is possible to specify spatial random effects, or covariate effects, at the areal level, by using a multiple membership principle (MM) \parencite{Petrof2020, Gramatica2021}. A weighted average of conditional autoregressive (CAR) spatial random effects embeds spatial information for a \textit{spatially-misaligned} outcome and estimate relative risk for both frameworks (areas and memberships). In this paper we investigate the theoretical underpinnings of these application of the multiple membership principle to the CAR prior, in particular with regard to parameterisation, properness and identifiability. We carry out simulations involving different numbers of memberships as compared to number of areas and assess impact of this on estimating parameters of interest. Both analytical and simulation study results show under which conditions parameters of interest are identifiable, so that we can offer actionable recommendations to researchers. Finally, we present the results of an application of the multiple membership model to diabetes prevalence data in South London, together with strategic implications for public health considerations.
		\end{abstract}

	\section{Introduction}
	\label{sec:intro}
	
	The scope for geo-referenced data analysis is vast, though particular methodological paradigms are more common. In the context of disease mapping, the information available is typically aggregated at areal level, thus constraining the range of viable modelling strategies. Bayesian hierarchical models have long been a widely popular choice for areal disease mapping \parencite{Lawson2018, Congdon2019, Blangiardo2015}. Our interest focuses on the conditional autoregressive prior (CAR) \parencite{Besag1974} which has been employed extensively in the disease mapping literature to account for spatial autocorrelation which is a common occurrence in areal data \parencite{MacNab2010}. 
	
	The most common framework in this context arises from the observations of an outcome of interest $ \bm{y} = (y_1, ... , y_n) $ and some explanatory covariates, both observed on the same set of \textit{n} areas that partition a certain region of interest, modelled with a distribution that includes spatial random effects such as the CAR prior. In the present work we will investigate a different type of setup, where covariates are observed at the areal level but outcomes refer to a different spatial framework.
	
	An instance of this non-standard setup can be found in \citet{Gramatica2021}, where they investigate general practitioner (GP) level data of diabetes prevalence jointly modelled with areal level mortality data for the same condition. In this instance covariates of interest are available by area (i.e. census tracts) but not for practices. However, the patients of each practice do not necessarily reside in a unique set of areas. This gives rise to two misalignment problems: first, the population of each practice can be attributed to several non-contiguous areas and second, residents of a certain area can be registered at different practices. The former type of misalignment is more related to the (non) spatial structure of the data, given that spatial modelling strategies usually require continuity. On the other hand, the second misalignment problem concerns the \textit{imperfect nesting} of areas in GP practices.
	
	Another example of the same ilk can be found in \citet{Petrof2020}, where the impact of residential history of people  affected by a rare type of cancer caused by exposure to asbestos (mesothelioma), is investigated. For each patient the probability of contracting mesothelioma is modelled using age at death as a covariate together with a spatial random effect. Clearly, given that each individual might have resided in several different locations throughout the course of their life, their residential history has to be taken into account to correctly estimate the spatial risk. Hence, the random effect is  specified in terms of a weighted average of the random effects of the areas where the individual has resided. The weights correspond to the share of time spent by the individual living in a certain area, hence the multiple membership of each area to different individuals' residential history.

	Both of the examples above can be framed as applying the multiple membership principle (MM) \parencite{Browne2001} transformation to an areal framework. Hence, we will denote this new prior as CAR-MM. Given that this model is a weighted average of areal random effects, it can be seen as a linear transformation of a CAR prior. It is well known that any covariance matrix of a normal distribution can be represented as a CAR prior \parencite{VerHoef2018}, as well as that the normal distribution is closed under affine transformations \parencite{Rao1973}. Consequently, it is possible to use tools from Gaussian distribution theory to investigate the properties and assumptions underlying the CAR-MM prior, such as its identifiability.
	
	Despite both \citet{Petrof2020} and \citet{Gramatica2021} including simulation studies to verify the performance of the model, they did not investigate the identifiability and the structure of the prior. In this paper, we develop the underpinning theory for this methodology, and provide a more general framework able to encompass a wider range of applications. In the case of \citet{Gramatica2021} as well as this paper, the weights are obtained by computing the share of population of a certain GP that resides in a specific area. This approach can take into account the fact that a single area can contribute to more than one membership (where each GP practice constitutes a membership) as well that each membership is usually defined in terms of more than one area, thus addressing the misalignment issues mentioned above.
	
	Instead of a \textit{spatial misalignment} problem, defined as ``different spatial data layers are collected at different scales"  \parencite[p. 165]{Banerjee2014}, one could consider it as a \textit{change of support} (COSP) problem. For instance, \citet{Bradley2016} start from the subregion $ D \subset  \mathbb{R}^d $, the domain of interest, and partition it according to $ L\geq 1 $ potentially non-nesting areal frameworks called \textit{source supports} i.e. the spatial support for which data is available, as opposed to \textit{target supports} where inference should take place. Due to the uniqueness requirement, this approach is not be feasible given the structure of the data that we work with.  In fact, it would require the region of interest to be partitioned into smaller areas, each of which would have to be assigned uniquely to a specific GP. If one were to follow this approach, they would fail to take into account the actual practice population structure. Obviously, with individual level data this problem could be easily overcome, however these are rarely available in disease mapping applications.
	
	The paper is organised as follows: in Section \ref{sec:CAR} we introduce the CAR model and its variant, the ICAR. In Section \ref{section:car-mm} we review the CAR-MM model, its definition and distributional form, we compare the two proposals by \citet{Petrof2020} and \citet{Gramatica2021} and introduce a more compact notation in matrix form. The main contribution of this paper is in Section \ref{sec:rel_car_carmm} which (i) introduces a more general framework for the MM models and discusses results of both uniqueness of the covariance matrix and CAR specification for the CAR-MM prior, (ii) presents two different parameterisations for the CAR-MM model and (iii) uses these results to investigate the issue of identifiability in a Bayesian framework. Section \ref{sec:val_post_checks} is mostly concerned with introducing the methodologies used for the simulation study that will follow and the model fit and model comparison methods that will be used in the data analysis. In Section \ref{sec:sim_study} simulation results are introduced in reference to the considerations of Section \ref{sec:rel_car_carmm}. Lastly, in Section \ref{sec:data_an} we consider an example on a real South London diabetes prevalence dataset providing an application of the CAR-MM prior. Three approaches are compared: one where no spatial prior accounting for spatial autocorrelation is added, and two others employing the CAR-MM and ICAR-MM priors.

	\section{Generalised Linear Models (GLM) and CAR priors} 
	\label{sec:CAR}

	When modelling area referenced data, the region of interest is generally divided into \textit{n} areas for which the outcome of interest is observed.	For each area $ i = 1, ..., n $ a random variable $ Y_i $ is used to model the observed data, whose mean is generally specified as follows:
	\begin{align} \label{eq:GLM}
		g(E(Y_i | \thetab, \bX)) = \eta_i(\thetab_i, \bX) \quad \quad \forall i  = 1, ... , n,
	\end{align}
	
	where $ \eta_i $ is a linear function of the parameters $ \thetab_i $ and the \textit{p} covariates $ \bX $, and \textit{g} is a continuous monotonic function called the \textit{link-function} \parencite{mccullagh1983generalized}. The considerations in the remainder of this paper are valid for all models such as in (\ref{eq:GLM}) however, given that for areal data of prevalence or mortality counts Poisson distribution is the most commonly assumed, for ease of presentation,  we will work under the following framework:
	
	\begin{align} \label{eq:glm_poi}
		\begin{split}
			Y_i & \sim Poisson(E_i \rho_i) \\ 
			\log(E_i \rho_i) & =  \log{E_i} + \gamma  + \bm{x}_i^T \bm{\beta} + \phi_i,
		\end{split}
	\end{align}
	
	where the link function in (\ref{eq:GLM}) is the natural logarithm, $ E_i $ is the expected number of cases for area \textit{i} and the relative risk $ \rho_i $ is modelled as a linear function of the parameters $ \thetab_i = (\gamma, \bm{\beta}, \phi_i) $. In turn, the random effects $ \phi_i $ are modelled using a CAR prior which is a random vector  $ \bm{\phi} = \left(\phi_1 , ... , \phi_n \right)^T \in \mathbb{R}^n $ defined on an undirected graph $ \mathcal{G} = (\mathcal{V}, \mathcal{E}) $.

	More specifically, using matrices $ \Mb $ and $ \Cb $ that are specified below in terms of the $ \mathcal{G} $, we define the prior as:
	\begin{align}  \label{eq:cov_car_general}
		\bm{\phi} \sim \mathcal{N}\left(\bm{0}, \Sib = (\Ib- \Cb)^{-1}\Mb \right).
	\end{align}
	\citet{VerHoef2018} state four conditions to ensure the positive definiteness of the covariance matrix $ \Sib $
	\begin{enumerate}
		\item[\textbf{C1}:] $ (\Ib - \Cb) $ has positive eigenvalues
		\item[\textbf{C2}:] $ \Mb $ is diagonal with positive diagonal elements
		\item[\textbf{C3}:] $ (\Cb)_{ii} = 0 $ for all $ i = 1, ..., n $
		\item[\textbf{C4}:] $ (\Cb)_{ij}/(\Mb)_{ii}  = (\Cb)_{ji}/(\Mb)_{jj} $ for all $ i,j = 1, ..., n $
	\end{enumerate}
	
	In practice, a very common specification that fulfils \textbf{C1} - \textbf{C4}, imposes $ \Cb = \alpha\Db^{-1}\Wb $, with $ \alpha \in \left[0, 1\right) $ and $ \Wb $ the adjacency matrix such that $ (\Wb)_{ij} = 1 $ \textit{iff} \textit{i} and \textit{j} are neighbours and 0 otherwise, and $ \Mb = \tau^{-1}\Db^{-1} $  with $ \tau \in \mathbb{R}^+ $ and $ (\Db)_{ii} = \sum_{j=1}^{n} (\Wb)_{ij}  $ for all \textit{i} thus indicating the number of neighbours of area \textit{i}. Finally, we can write the CAR prior as:
	\begin{align} \label{eq:car_prec}
		\bm{\phi} \sim & CAR(\alpha, \tau, \Wb) \quad , \quad \Sib^{-1} = \Qb = \tau(\Db - \alpha \Wb),
	\end{align}
	
	\citet{VerHoef2018} note that the representation of the CAR covariance matrix in (\ref{eq:cov_car_general}) is rather general, and in fact they prove it can represent any positive-definite covariance matrix:
	\begin{thm}
		\label{thm:verhoef_2}
		Any positive-definite covariance matrix $\boldsymbol{\Sigma}$ can be expressed as the covariance matrix of a CAR model $(\mathbf{I} - \mathbf{C})^{-1} \mathbf{M}$, for a unique pair of matrices $\mathbf{C}$ and $\mathbf{M}$.
	\end{thm}
	
	For a proof of this result see \citet{VerHoef2018}
	
	A special case of the CAR prior in equation \ref{eq:car_prec}, arises by setting the parameter $ \alpha = 1 $. This prior is called \textit{intrinsic conditional autoregressive} (ICAR) \parencite{Besag1991} and is widely used in practical applications. We will employ this prior for the data analysis in Section \ref{sec:data_an}. Its precision matrix $ \Qb $ is singular consequently leading to an improper density. In our implementation of the ICAR prior we impose the constraint $ \sum_i\phi_i = 0 $ at every MCMC iteration, as is commonly done in the literature \parencite{Eberly2000} to improve convergence and avoid intercept identifiability issues \parencite{Goicoa2018}.
	
	\section{Multiple membership and the CAR-MM prior}
	\label{section:car-mm}
	
	Both \citet{Petrof2020} and \citet{Gramatica2021} proposed a GLM model with a CAR spatial prior and MM to deal with misaligned data. We will now lay down the model that they developed independently.
	
	Let us consider a region partitioned in \textit{n} areas and define a CAR prior on it as per equation (\ref{eq:car_prec}) and let $ \tilde{\bm{y}} = (\tilde{y}_1, ... , \tilde{y}_m) $ be an outcome observed on \textit{m} units called \textit{memberships}. Typically no areal spatial framework would be available for the memberships, as the main goal of this modelling approach is to use the \textit{n} areas as a latent spatial structure.
	
	For each membership \textit{j} we define a set of \textit{n} weights $0 \leq h_{ji} \leq 1$, corresponding to each area, such that $ \sum_{i=1}^{n} h_{ji} = 1 $ for all $ j = 1,..,m $. For example, the $h_{ji}$ might be the proportions of the population of GP practice \textit{j} living in areas \textit{i}.
	We can thus re-define the relative risk term of the GLM in (\ref{eq:glm_poi}) for outcome $ \tilde{\bm{Y}} $ so that it incorporates the latent areal structure we defined earlier:
	\begin{align} \label{eq:glm_poi_mm}
		\begin{split}
			\tilde{Y}_j \sim  Poisson(&\tilde{E}_j \tilde{\rho}_j),\\
			\log(\tilde{\rho}_j) = \sum_{i=1}^{n}  h_{ji} \log(\rho_i) = 
			\sum_{i=1}^{n}& h_{ji}  (\gamma + \bm{X}_i^T \bm{\beta} + \phi_i),
		\end{split}
	\end{align}
	where $ \tilde{E} $ indicate membership level offsets. This model defines each membership level relative risk $ \tilde{\rho}_j $ as a weighted average of the areal log relative risks $\log(\bm{\rho}) $. In this case each area can contribute to multiple memberships with different weights.
	
	We can more compactly represent the log relative risks using matrix notation, for a $ m \times n $ matrix $ \Hb $ we define it element-wise using the weights in (\ref{eq:glm_poi_mm}) so that $ h_{ji} = (\Hb)_{ji} $. Additionally, we assume $ \Hb $ to be of full column rank and that for each area \textit{i} there is at least one membership \textit{j} for which $ h_{ji} > 0 $, i.e. all areas contribute to at least one membership.
	
	The second line in equation (\ref{eq:glm_poi_mm}) can then be rewritten as:
	\begin{align} \label{eq:l_RR_mm}
		\log(\tilde{\bm{\rho}}) = \Hb\log(\bm{\rho}) = \gamma\Hb\one_m + \Hb\bm{X}\beta + \Hb\bm{\phi} = \gamma\one_m + \tilde{\bm{X}}\bm{\beta} + \tphi,
	\end{align}
	in line with \citet{Gramatica2021} memberships corresponded to General Practitioner (GP) practices, on which an outcome of interest was observed, and areas to English census areal subdivisions (Middle Super-Output Area; MSOA) where the covariates of interest were recorded. The MM matrix $ \Hb $ was obtained by computing the share of patients registered at a GP practice that resided in a specific MSOA. \citet{Petrof2020} instead defined individual patients as memberships and used membership level covariates, while the MM weights corresponded to the share of time that patient \textit{j} resided in a particular area \textit{i}. Nevertheless, both approaches can be represented with equation (\ref{eq:l_RR_mm}), by simply observing that the intercept term $ \gamma $ is unchanged by the linear transformation $ \Hb $ and so are coefficients $ \bm{\beta} $.
	
	
	\section{Relationship between CAR and CAR-MM models} \label{sec:rel_car_carmm}

	We will now focus on the CAR-MM prior and study its properties. Firstly, we can observe that since the CAR-MM is a linear transformation of normally distributed terms $ \bm{\phi} $, $ \bm{\tilde{\phi}} = \Hb\bm{\phi} $, it is also normally distributed \citep{Rao1973}. In fact:	
	\begin{align} \label{eq:phi_mm_dist}
		\bm{\tilde{\phi}} \sim \mathcal{N}_m (\bm{0}, \tSig =	\Hb\boldsymbol{\Sigma}\Hb^T),
	\end{align}
	where $\Sib $ is as in equation (\ref{eq:car_prec}). Additionally, in the case where $ m \leq n$, since we assumed full column rank for $ \Hb $, the resulting covariance matrix $ \tilde{\mathbf{\Sigma}} $, with the corresponding precision matrix $ \Qbt $, will also be full rank (see Corollary 8.3.3 in \citet{harville2008matrix}). Instead, when $ m > n $, $ \tilde{\mathbf{\Sigma}} $ will be of rank $ n $ and thus singular.

	In the remainder of this section we explore the corresponding uniqueness properties of the CAR-MM. As mentioned in Theorem \ref{thm:verhoef_2}, Section \ref{sec:CAR}, \citet{VerHoef2018} show that any zero-mean Gaussian distribution on a finite set of points, $\mathbf{Z} \sim N(0, \mathbf{\Sigma})$, can be written with a covariance matrix parameterised as a CAR model.

	Firstly, let us establish three conditions that will be common to the results presented below:
	
	\begin{enumerate}
		\item[\textbf{M1}:] Let $ \Sib $ be an $ n \times n $ covariance matrix as defined in Equation (\ref{eq:car_prec})
		\item[\textbf{M2}:] Let $\Hb$ be a $m\times n$ matrix such that:
		\begin{enumerate}
			\item for each row $j=1,...,m $ of $\Hb$: $\sum_{i=1}^n (H)_{ji}=1$, $ 0 \leq (H)_{ji} \leq 1 $ for all \textit{i} and \textit{j}
			\item it is of full rank
			\item there is no column \textit{i} such that $ (\Hb)_{\cdot i} = \bm{0} $
		\end{enumerate}
		\item[\textbf{M3}:] Let $ \tSig $ be the covariance matrix of the corresponding CAR-MM model such that $ \tSig = \Hb \Sib \Hb^T  $
	\end{enumerate}
	where $ (\Hb)_{\cdot i} = \bm{0} $ indicates the \textit{i-th} column of matrix $ \Hb $.
	
	We remark that, although our approach is mainly focused on the CAR prior, the following considerations directly apply to any Gaussian model. Therefore, the spatial random effects in equation (\ref{eq:glm_poi_mm}) could also be modelled through Gaussian distributions with arbitrary covariance structure, for instance one could use a Mátern covariance function using the Euclidean distances between areas centroids \parencite[Example 6.1]{MartinezBeneito2019}.

	Above, following Theorem \ref{thm:verhoef_2}, we hinted at the fact that the CAR-MM can be seen as a CAR prior. We now introduce a corollary (proved in the Supplemental material) to that theorem, that draws a parallel between the two matrices $ \Mb $ and $ \Cb $ that determine the CAR prior structure (\textbf{C1}-\textbf{C4} in Section \ref{sec:CAR}), and the CAR-MM.

	\begin{corollary} \label{corollary:covariance}
		Let $\Sib, \, \Hb, \, \tSig $  be as defined in \textbf{M1}-\textbf{M3}, then iff $ m \leq n $, $ \tSig $ can be expressed as the covariance matrix of a CAR model $ (\Ib - \tilde{\Cb})^{-1}\tilde{\Mb} $ for a unique pair of $ m \times m $ matrices $ \tilde{\Cb} $ and $ \tilde{\Mb} $.
	\end{corollary}
	
	Since $ \Cb $ contains the off-diagonal non-zero elements of the CAR precision matrix, it determines the partial correlation structure of the prior, i.e. its zero entries correspond to partially uncorrelated components \parencite[Theorem 2.2]{Rue2005}. Therefore, the same could be said for $ \tilde{\Cb} $. However, this matrix is not straightforward to obtain. In fact, the MM transformation $ \Hb $ is applied to the covariance matrix $ \Sib $ so that to obtain the precision matrix $ \tilde{\Qb} $ we could assume $ m = n $, \textbf{M1}-\textbf{M3} and, starting from (\ref{eq:car_prec}) we could obtain: 
	\begin{align}
		\Qbt = (\Hb(\Ib - \Cb)^{-1}\Mb\Hb^T)^{-1} = (\Hb^T)^{-1}\Mb^{-1}(\Ib - (\Cb\Hb^{-1} - \Hb^{-1} + \Ib)) .
	\end{align}
	
	Given that the resulting matrices $ \tilde{\Mb} $ and $ \tilde{\Cb} $ would have to fulfil \textbf{C1-C4} (Section \ref{sec:CAR}) it is not immediately clear how they would look in terms of $ \Mb $, $ \Cb $ and $ \Hb $. In fact, if we took for example $ \tilde{\Mb}^{-1} =  (\Hb^T)^{-1}\Mb^{-1}$ and $ \tilde{\Cb} =  \Cb\Hb^{-1} - \Hb^{-1} + \Ib$ we would not have any guarantee that $ \tilde{\Mb}^{-1} $ would be diagonal as per \textbf{C2}, or $ \tilde{\Cb} $ fulfils \textbf{C3} (null diagonal elements) or \textbf{C4} $(\tilde{\Cb})_{ij}/(\tilde{\Mb})_{ii}  = (\tilde{\Cb})_{ji}/(\tilde{\Mb})_{jj} $ for all $ i,j = 1, ..., n $. Therefore, differently from the spatial CAR model, we have no way of obtaining analytically the partial correlation structure of the CAR-MM prior.

	Despite these difficulties regarding the precision matrix, we can consider the covariance matrix, together with its transformation $ \Hb $, and prove a crucial result for model identifiability. The following lemma will help evaluate under which circumstances the MM transformation of the CAR prior is injective, and will also be used in the subsequent proofs.

	\begin{lemma} \label{lemma:L_unique}
		Let $ \Hb $ be a matrix as defined in \textbf{M1}-\textbf{M3}. $ \Hb $ has a left inverse $ \Lb $ iff $ m\geq n $, and this matrix is unique iff $ m = n $.
	\end{lemma}
	
	It is worth mentioning that any left inverse $ \Lb $ of $ \Hb $ will satisfy the same constraint, i.e the columns adding up to the $ \one $ vector, in fact: $ \Hb\one_n = \one_m \, \Leftrightarrow \, \Lb\Hb\one_n=\Lb\one_m \, \Leftrightarrow \, \one_n = \Lb\one_m $ (proof in Section 1 of the Supplemental material.).
	
	\begin{thm} \label{thm:unique_Sig}
		Let $\Sib, \, \Hb, \, \tSig $  be as defined in \textbf{M1}-\textbf{M3}, then given $ \tSig $ and $ \Hb $, $ \Sib $ is uniquely defined iff $ m \geq n $.
	\end{thm}
	
	See Section 1 in the Supplemental material for a proof of both results.

	
	\subsection{Parameterisation} \label{subsec:param}
	
	In light of the results presented above, we can devise two different parameterisations for the random effects in equation (\ref{eq:glm_poi_mm}): the \textit{post} parameterisation and the \textit{inverse} parameterisation. In fact vectors $ \bm{\phi} $ and $ \tphi $ are linked deterministically by the linear transformation $ \Hb $ (see equation (\ref{eq:l_RR_mm})), so the model has to specify a prior only for one of the two.
	
	The \textit{post}-transform parameterisation consists of using the CAR prior directly to sample random effects $ \bm{\phi}^* $ from the spatial structure (equation (\ref{eq:car_prec})), and then multiply them by $ \Hb $ to obtain the MM equivalents using $ \Hb \bm{\phi}^* = \bm{\tilde{\phi}}^* $.
	
	The \textit{inverse} parameterisation is defined as follows. We specify a prior on $ \tphi^\dagger $ directly from the model in equation (\ref{eq:phi_mm_dist}) and obtain the corresponding spatial random effects using the inverse MM transformation  $ \Hb^{-}\bm{\tilde{\phi}}^\dagger = \bm{\phi}^\dagger $
	
	In terms of notation, all quantities associated with the \textit{post}-transform parameterisation will be denoted with an asterisk ($ * $) superscript, while for the \textit{inverse} parameterisation a dagger ($ \dagger $) will be used.
	
	In order to be able to specify a proper CAR prior, we need the covariance matrix to be positive definite. This means that due to Theorem \ref{thm:unique_Sig}, the \textit{inverse} parameterisation is available only in the case where $ m \leq n $. Since when \textit{m} is strictly smaller than \textit{n} by Lemma \ref{lemma:L_unique} there is no left inverse that would allow us to obtain $ \Hb^{-}\bm{\tilde{\phi}}^\dagger = \bm{\phi}^\dagger $. Instead, the \textit{post} parameterisation can be implemented in all cases, given that it is only necessary to specify a proper CAR prior on the areal structure as in Equation (\ref{eq:car_prec}). 
	
	It should be noted that whenever $ m \neq n $ the specification of the CAR-MM prior can be seen as a case of Low-rank representation of a spatial process \parencite[Chapter 8]{gelfand2010handbook}, and specifically \textit{fixed rank kriging} \parencite{Cressie2008}. In fact, if we consider the \textit{post} parameterisation for instance, given that $ \Hb \bm{\phi}^* = \bm{\tilde{\phi}}^* $, for $ m > n $ $ \bm{\phi}^* $ is a lower rank representation of $ \tphi^* $, and vice versa for $ m < n $, as one random effect can be specified in terms of a deterministic transformation of a lower rank distribution. 
	
	Low-rank representations can be particularly useful for geostatistical models as they can significantly reduce the computational burden that is generally required for spatial prediction. In this context matrix $ \Hb $ takes the name of \textit{basis} and its selection, and in some cases estimation, is generally arbitrary and depends on the problem at hand. This is different from the CAR-MM case where the MM matrix $ \Hb $ is entirely determined by the population structure of the areas and memberships. Additionally, contrary to most geostatistical applications in this context where the lower rank process is simply used as a latent factor, for the CAR-MM both the membership and the areal level, i.e. both the lower and higher rank process can be of inferential interest.

	\subsection{Identifiability} \label{subsec:ident}
	
	
	We can now investigate the identifiability of the model by introducing two definitions of non-identifiability in a Bayesian context and evaluate each one in the case of CAR-MM random effects. We will not provide explicit considerations for the hyperparameters $ \alpha $ and $ \tau $, as they are generally not parameters of interest.

	Firstly, according to \citet{Dawid1979}:
	\begin{definition}[Likelihood non-identifiability] \label{def:non-id-like}
		For a Bayesian model $ M(\bm{\theta}, \bm{y}) $ with data $\bm{y}$, likelihood $ L $ and prior $ f $, if the parameter vector $\bm{\theta}$ is partitioned into $\bm{\theta} = \left(\bm{\theta}_1, \bm{\theta}_2\right)$ then $\bm{\theta}_2$ is non-identifiable if:
		\begin{align}
			f(\bm{\theta}_2 \mid \bm{\theta}_1, \bm{y}) = f(\bm{\theta}_2 \mid \bm{\theta}_1),
		\end{align}
	\end{definition}
	\citet{Gelfand1999} remark that Definition \ref{def:non-id-like} is equivalent to the classical notion of non-identifiability of the likelihood (see Definition 1 Section 3.1.1 in \cite{cole2020parameter}) since:
	\begin{align}
		f(\bm{\theta}_2 \mid \bm{\theta}_1, \bm{y}) \propto 
		L( \bm{\theta}_1, \bm{\theta}_2 \mid \bm{y})
		f(\bm{\theta}_2 \mid \bm{\theta}_1) f(\bm{\theta}_1),
	\end{align} 
	is not identifiable according to Definition \ref{def:non-id-like} \textit{iff} $ L(\bm{\theta}_1, \bm{\theta}_2 \mid  \bm{y} ) $ can be re-parameterised so that it becomes free of $ \bm{\theta}_2 $.
	
	In addition, it is worth recalling that  \citet{Catchpole1997} define \textit{parameter redundancy} as
	\begin{definition}[Parameter redundancy] \label{def:par_red}
		A model $ M(\bm{\theta}, \bm{y}) $  is parameter redundant if we can write $ \bm{\theta} $ as a function of just $ \bm{\beta} $, where $ \bm{\beta} = g(\bm{\theta}) $ and $ dim(\bm{\beta}) < dim(\bm{\theta})$, where dim is length of the vector.
	\end{definition}
	They also prove that parameter redundancy implies non-identifiability in the likelihood which is then equivalent to Definition \ref{def:non-id-like} (Theorem 4 in \cite{Catchpole1997}). We refer back to the model in equation (\ref{eq:phi_mm_dist}) and conditions \textbf{M1}-\textbf{M3}  imposed on $ \Hb $ in Section \ref{sec:rel_car_carmm} and evaluate the two parameterisations under the definitions we just presented.

	In the case of a \textit{post}-parameterised model, it is evident that when $m > n$ the MM random effects $ \tphi^* $ are parameter redundant as they can be written as a function of the sole areal random effects $ \bm{\phi}^*  $ which are \textit{n} dimensional. More specifically, there is an injective function $ g $ such that $ g(\bm{\phi}^* ) = \Hb \bm{\phi}^*  = \tphi^*  $, where $ g: \mathbb{R}^n \to \mathbb{R}^m $.

	This can also be seen more explicitly, by partitioning the model parameters in equation (\ref{eq:glm_poi_mm}) into $ (\bm{\phi}, \tphi, \bm{\theta}) $, where $ \bm{\theta} = (\gamma, \bm{\beta}, \alpha, \tau) $, we can write the posterior distribution, up to a constant, as
	\begin{align} \label{eq:post_id1}
		\begin{split}
			L(\tby | \tphi^*, \bm{\phi}, \thetab, \Hb, \bX)f(\tphi^*, \bm{\phi}^*, \thetab) & = 
			L(\tby | g(\bm{\phi}^*), \bm{\phi}, \thetab, \Hb, \bX)f(g(\bm{\phi}^*) \bm{\phi}^*, \thetab) \\
			& = L(\tby | g(\bm{\phi}^*), \thetab, \Hb, \bX)f(\bm{\phi}^*, \thetab) 
		\end{split}
	\end{align}
	where the second equality comes from the fact that the $ \tphi^* $ are a deterministic function exclusively of the $ \bm{\phi}^*  $ and thus do not depend on either $ \thetab  $ or $ \{\Hb, \bX\} $. Consequently, in (\ref{eq:post_id1}) we can see that the MM random effects $ \tphi $ are non-identifiable in the likelihood (Definition \ref{def:non-id-like}). In the case of the \textit{post} parameterisation, this holds true for all \textit{m} and \textit{n}.
	
	With regard to the \textit{inverse} parameterisation, where a prior is specified on the MM random effects $ \tphi^\dagger $, in the case $ m \geq n $ by Lemma \ref{lemma:L_unique} there is a left inverse such that $ \bm{\phi}^\dagger = \Lb \tphi^\dagger $. Thus the same considerations hold as for the \textit{post} parameterisation made above. 
	However in this case the areal and MM random effects exchange statuses in terms of identifiability and redundancy. To see this, suffices to swap  $ \bm{\phi}^* $ for $ \tphi^\dagger $  and \textit{k} for \textit{g} in (\ref{eq:post_id1}), where $ \Lb\tphi^\dagger = \bm{\phi}^\dagger = k(\tphi^\dagger)$ and $ k:\mathbb{R}^m \to \mathbb{R}^n$. In the $ m < n $ case for the \textit{inverse} parameterisation, by Theorem \ref{thm:unique_Sig} the areal random effects covariance matrix $ \Sib^\dagger $ is not uniquely defined and thus the $ \bm{\phi}^\dagger $ are not identifiable in the likelihood.

	
	Despite these issues with the likelihood, we can adopt a different perspective following a different definition, by \citet{cole2020parameter} (Definition 12):
	\begin{definition}[Posterior non-identifiability] \label{def:non-id-post}
		A Bayesian model $ M(\bm{\theta}, \yb) $ described by the posterior distribution, is globally identifiable if $ f(\bm{\theta}_a \mid \yb) = f(\bm{\theta}_b \mid \yb) $ implies $ \bm{\theta}_a = \bm{\theta}_b $. A model is locally identifiable if there exists an open neighbourhood of any $ \bm{\theta} $ such that this is true. Otherwise the posterior is non-identifiable.
	\end{definition}
	
	Under this perspective we can see that both membership and areal random effects are identifiable in the posterior for the post parameterisation when $ m\geq n $. In fact, by setting $ \bm{\Delta} = \{ \tilde{\yb}, \Hb, \bm{X} \} $, we can investigate identifiability by writing down the posterior distribution as:
	\begin{align} \label{eq:post_id2}
		\begin{split}
			f(g(\bm{\phi}^*_a) \mid \bm{\phi}^*_a, \bm{\theta}_a , \bm{\Delta}) 
			f(\bm{\phi}^*_a, \bm{\theta}_a \mid \bm{\Delta}) & = 
			f(g(\bm{\phi}^*_b) \mid \bm{\phi}^*_b, \bm{\theta}_b , \bm{\Delta}) 
			f(\bm{\phi}^*_b, \bm{\theta}_b \mid \bm{\Delta}).
		\end{split}
	\end{align}
	In the $ m \geq n $ case, we can easily see that since the function $ g $ is injective, we have that if and only if $ \bm{\phi^*}_a = \bm{\phi^*}_b $ then $ g(\bm{\phi}^*_a) = g(\bm{\phi}^*_b) $. In turn, this entails that for the purpose of verifying Definition \ref{def:non-id-post}, (\ref{eq:post_id2}) reduces to:
	\begin{align} \label{eq:post_id2_cont}
		f(\bm{\phi}^*_a, \bm{\theta}_a \mid \bm{\Delta})  = 
		f(\bm{\phi}^*_b, \bm{\theta}_b \mid \bm{\Delta})
	\end{align}
	where the terms $ \bphi^* $ are simple CAR random effects.
	
	Therefore, for $ m \geq n $ Equation \ref{eq:post_id2_cont} shows that a post parameterised CAR-MM model as in (\ref{eq:glm_poi_mm}), is identifiable if and only if the underlying CAR model (\ref{eq:glm_poi}) is as well. Conversely, we remark that in the $ m < n $ case, given that \textit{g} is not injective, (\ref{eq:post_id2}) does not hold in general, thus making the random effects non-identifiable in the posterior.
	
	In the case of the \textit{inverse} parameterisation for $ m > n $, Equation \ref{eq:post_id2} remains the same with simply the $ \bm{\phi}^* $ swapped for $ \tphi^\dagger $ and \textit{k} for \textit{g}. However, in this instance the function \textit{k} is not injective, thus leading to non-identifiability. Furthermore, for $ m < n $ there is no function \textit{k}, given that $ \Hb $ has no left inverse. Only the $ m = n $ case is identifiable under both parameterisations.
	
	With regard to the covariates coefficients $ \bm{\beta} $, it is apparent from equation \ref{eq:l_RR_mm} that the MM transformation of the log relative risks $ \log(\bm{\rho}) $ does not affect these parameters. Therefore, the question of identifiability of both the areal and membership relative risks hinges on the random effects $ \bm{\phi} $ and $ \tphi $ as discussed above.

	To conclude, we recommend the use of the CAR-MM model only when a $ m \geq n $ dataset is available. Only in this case, in fact, parameters of interest, such as the relative risks as well as risk factor coefficients will be identifiable. 
	
	
	\section{Model validation and posterior checks}
	\label{sec:val_post_checks}
	\subsection{Simulation based Calibration (SBC)}
	
	\label{subsec:sbc}
	In order to both check the software implementation of our posterior sampler and verify that the model is able to correctly recover the true parameters, we devised a simulation study following a simulation based calibration (SBC) approach \parencite{Talts2018}. In an SBC study, instead of comparing posterior samples to some fixed \textit{ground truth} parameter, the procedure assesses the overall calibration of the posterior samples with respect to the joint distribution $ \pi(\by, \thetab) $. In this way, it is possible to verify the performance of the posterior sampler for multiple \textit{ground truths} $ \ttheta $ generated from the prior distribution.
	
	In summary, the SBC procedure can be written as follows:
	\begin{align}
		\ttheta \sim \pi(\thetab), \quad \ty \sim l(\by | \ttheta), \quad \{\thetab^{(1)}, ... , \thetab^{(B)}\} \sim \pi(\thetab | \ty).
	\end{align}
	Since we have $ \pi(\by, \thetab) = \pi(\thetab)l(\by | \thetab)$, by sampling a parameter vector $ \ttheta $ from the prior and then generate new data from the likelihood $ \ty \sim l(\by | \ttheta) $, this would be equivalent to drawing a sample from the joint distribution $ (\ty, \ttheta) \sim \pi(\by, \thetab) $. Therefore, whenever we implement an algorithm that correctly samples from a posterior distribution, the $ \{\thetab^{(1)}, ... , \thetab^{(B)}\} $ are going to have the same marginal distribution as $ \tilde{\thetab} $ sampled from the prior \parencite{Cook2006}.

	On the basis of this, we can straightforwardly check that the implemented algorithm performs inference correctly by verifying that prior and posterior samples are distributed according to the same distribution \parencite{Talts2018}: for any one-dimensional random variable $ f:\bm{\Theta} \to \mathbb{R} $, the \textit{rank statistic} $ r(\cdot) $ of the prior sample relative to the posterior samples:
	\begin{align} \label{eq:rank_stat}
		r(\left\{f(\thetab^{(1)}), ..., f(\thetab^{(B)})\right\}, f(\tilde{\thetab})) =
		\sum_{b=1}^{B} \mathbb{I} \left[f(\thetab^{(b)})  < f(\tilde{\thetab})\right] \in \left[0, B\right],
	\end{align}
	will be uniformly distributed across the integers $ \left[0, B\right] $. Therefore, by sampling from the Bayesian joint distribution \textit{N} times, and for each sample drawing \textit{B} posterior samples, computing the rank statistic in equation (\ref{eq:rank_stat}) and finally checking it is actually uniformly distributed, we can build a validation procedure for any Bayesian model. We mention that in practice, given that we have both \textit{N} and \textit{B} very large, we divide the sampled ranks by $ B + 1 $ and verify calibration with the Uniform distribution on $ [0, 1] $
	
	\subsection{Posterior checks and model comparison measures}
	Before illustrating the simulation results we introduce the methodologies used to assess both model fits and comparisons. Posterior predictive p-values (PPP) are a widely used tool in Bayesian analysis to compare the estimated posterior distribution with the observed data \parencite{Gelman2013a}. In the case of marginal predictive checks, model calibration can be verified in two steps: first generating new observations $ \bm{y}^{rep} $ using posterior samples and secondly by computing the marginal PPP's, for each observation $ \bm{y} = \left(y_1, ..., y_m\right)^T$:
	\begin{align}
		p_j = P\left(T( y_j^{rep}) < T(y_j) \mid \bm{y}\right)  +
		\frac{1}{2}P\left(T( y_j^{rep}) = T(y_j) \mid \bm{y}\right),
	\end{align}
	if $ \bm{y} $ is a discrete outcome, where $ T(\cdot) $ is a generic test quantity; in our case we will take the identity function $ T(y_j) = y_j $.

	A computationally inexpensive alternative is the one from \citet{Marshall2003} where they propose a method to approximate this ``leave-one-out" cross-validatory model fit measure: $ p_{j|-\bm{j}} = P\left(y_j^{rep} \leq y_j \mid \bm{y}_{-j}\right) $, which for continuous data will have a uniform distribution if the model is well calibrated. Referring to Equation \ref{eq:glm_poi}, we can simulate a new random effect for area \textit{i} using $ \phi_i \sim p(\phi^{rep}_i \mid \alpha, \tau, \bm{\phi}_{-i}) $ using posterior samples of $ \alpha $ and $ \tau $ and thus approximate the $ p_{j|-\bm{j}} $'s above. This approach can be directly extended to the CAR-MM model, by simply post-multiplying the quantities in the relative risk by the MM matrix $ \Hb $ as per equation (\ref{eq:glm_poi_mm}).
	
	With regard to model comparison, in the data analysis Section we compare different models fit using three criteria. The first one is Pareto smoothed importance-sampling leave-one-out cross-validation (PSIS-LOO) \parencite{Vehtari2017}.
	Similarly to the approach described above for PPP, this method make use of a conditional independence assumption of data given model parameters $ \thetab $, which in our case is implicit in Equations (\ref{eq:glm_poi}) and (\ref{eq:glm_poi_mm}). The PSIS-LOO approach allows to compute the \textit{expected log pointwise predictive density} (elpd) to estimate out-of-sample predictive fit: $elpd_{loo} = \sum_{j=1}^{m} \log p(y_j \mid \bm{y}_{-j})$. Without the need to fit the model \textit{m} times, estimates of $ elpd_{loo} $ can be obtained using
	
	\begin{align}
		\widehat{elpd}_{psis-loo} = \sum_{j=1}^{m} \log \left(
		\frac{\sum_{s=1}^{S} w_j^s p(y_j \mid \bm{\theta}^s)}{\sum_{s=1}^{S} w_j^s}
		\right),
	\end{align} 
	where $ s = 1,..., S $ indexes the posterior sample iterations, and $ w_j^s $ the Pareto smoothed importance weights.
	Model comparison can be carried out by computing the difference in $ \widehat{elpd}_{loo} $ between the two models, so for models A and B: $\text{diff}(\widehat{elpd}_{loo}^{A,B}) = 
	\widehat{elpd}_{loo}^A - \widehat{elpd}_{loo}^B $, where a larger $ \widehat{elpd}_{loo} $ indicates better estimated predictive accuracy of the model.
	The standard error of the difference is thus:
	\begin{align} \label{eq:se_elpd_diff}
		se(\text{diff}(\widehat{elpd}_{loo-psis}^{A,B})) = 
		\sqrt{m V_{j=1}^m \left(\text{diff}(\widehat{elpd}_{loo-psis}^{A,B})\right)}.
	\end{align}
	where $ V_{j=1}^m(a_j) $ is the sample variance of \textit{m} values with mean $ \bar{a} $: $V_{s=1}^S(a_s) =  \frac{1}{m-1} \sum_{j=1}^{m} (a_j -\bar{a})^2   $. The paired estimate of equation (\ref{eq:se_elpd_diff}) relies on the two models being fitted on the same set of \textit{m} data points. 
	
	The $ \widehat{elpd}_{loo} $ (we drop ``psis" in the subscript for conciseness) is easily computable using the R package \verb|loo| \parencite{loo} in conjunction with the RStan package \parencite{Carpenter2017} for sampling from the posterior distribution. The authors suggest checking the estimated shape parameters $ \hat{k} $ of the Pareto weights for all observation to assess reliability of estimates: when $ \hat{k} > 0.7 $ there might be stability issues.

	Additionally, we also considered the mean Ranked Probability Score for count data ($ \overline{RPS} $)  \parencite{Czado2009}
	\begin{align} \label{eq:rps}
		\overline{RPS} = \frac{1}{m} \sum_{j=1}^{m} \left[\frac{1}{B} \sum_{i=1}^{B} \abs{y^{rep(i)}_j - y_j} - \frac{1}{B} \sum_{i=1}^{B/2} \abs{y^{rep(i)}_j - y^{rep(i+B/2)}_j}\right],
	\end{align}
	where $ y^{rep(i)}_j $ is the \textit{i-th} replicate of observation $ y_j $. Finally,  the Dawid-Sebastiani Score ($ \overline{DSS} $) \parencite{Dawid1999} was also included in our comparisons
	\begin{align}
		\overline{DSS} = \frac{1}{m} \sum_{j=1}^{m}\left[
		\left(\frac{y_j - \overline{y^{rep}_j}}{\sigma_{y_j^{rep}}}\right) ^ 2 +
		2\log \sigma_{y_j^{rep}}
		\right],
	\end{align}
	where $ \overline{y^{rep}_j} $ and $ \sigma_{y_j^{rep}} $ are respectively the mean and standard deviations of the replicates for the \textit{j-th} observation. For both $ \overline{RPS} $ and $ \overline{DSS} $ a lower value indicates better fit for the model.
	
	
	\section{Simulation study}
	\label{sec:sim_study}

	We now turn to a simulation study designed to assess the considerations on identifiability of Section \ref{sec:rel_car_carmm}. We designed the simulations to evaluate the two parameterisations (\textit{post} and \textit{inverse}) and the number of memberships relative to areas, i.e. $ m < n $, $ m = n $ and $ m > n $. We will first describe the implementation of the posterior sampler for each model, then the simulation setup in detail and finally the results.
	
	\subsection{Model Implementation} 
	\label{subsec:implementation}
	
	All models are implemented in RStan with significant differences between the two parameterisations. For the \textit{post} parameterised covariance matrices, we followed \citet{Joseph2016}'s sparse representation combined with the  approach suggested by \citet{Jin2005} for the determinant. The former relies on a sparse representation of the matrix $ \Wb $ whereby the terms corresponding to non-neighbouring areas are  from the computation. With regard to the determinant,  following \citet{Jin2005} we can avoid inverting the entire precision matrix $ \Qb $ at each iteration. Taking $ \tau = 1 $, without loss of generality, it can be shown that: $ \Qb = \Db - \alpha \Wb =  \Db^{1/2}(\Ib - \alpha\Db^{-1/2}\Wb\Db^{-1/2})\Db^{1/2} $ then $ \det(\Qb) = \det(\Db)\prod_{i=1}^{n} (1 - \alpha\lambda_i) $, where the $ \lambda_i $'s are the eigenvalues of $ \Db^{-1/2}\Wb\Db^{-1/2} $. This representation significantly speeds up the computation of $ det(\Qb) $ since the  $ \lambda_i $'s do not depend on either $ \alpha $ and $ \tau $ and thus have to be calculated only once.
	
	Conversely, for the \textit{inverse} parameterisation we specified the entire $ m\times m $ precision matrix $ \Qbt $. We remark that in this instance, the sampler can only be applied to the $ m \leq n $ case, as it is the only case where the covariance matrix is proper. Under this parameterisation, in order to obtain the areal random effects $ \bphi^\dagger $ we used the Moore-Penrose inverse of the $ \Hb $ matrix.
	
	The code for both the simulation study and the RStan models can be found at \url{https://github.com/markgrama/CAR-MM}.

	With regard to the model, we used the following  priors (see  equation (\ref{eq:glm_poi_mm})): a uniform on $ (0,1) $ for $ \alpha $, a $ Gamma(2, 0.2) $, parameterised in terms of shape and rate, for the marginal precision $ \tau $, while for $ \gamma $, $ \beta_1 $ and $ \beta_2 $ three independent $ \mathcal{N}(0, 0.7^2) $ distributions were used.

	In terms of computational time, with the same number of iterations (see Section \ref{subsec:ex_comp_par} for details), the \textit{post} parameterisation performed significantly better than the \textit{inverse}, with an average runtime of 2 minutes across membership sizes, against the 10 minutes for the $ m = 70 $ case and 29 minutes for $ m = 100 $ (Table 1 in the Supplemental Material). Our hypothesis is that the \textit{post} parameterisation runs more quickly due to the fact that the sampling of the CAR random effect is done through the sparse representation mentioned above, and the linear transformation $ \Hb $ is only evaluated in the likelihood with RStan optimised matrix operations.
	
	Conversely, the \textit{inverse} parameterisation requires the evaluation of the entire precision matrix, since no adjacency matrix is directly available. This is due to the problem in identifying the partial correlation structure of the CAR-MM mentioned in Section \ref{sec:rel_car_carmm}. Additionally, even if we were able to determine partial independence relationships between components, there is no guarantee that the resulting precision matrix will be sparse.

	\subsection{Setup}
	\label{subsec:ex_comp_par}

	In this simulation study we use the \textit{post} and \textit{inverse} parameterisations to both generate the data and sample from the posterior, thus obtaining 4 different scenarios. In order to refer to a particular scenario we might write \textit{post}-\textit{inverse}, where the first parameterisation refers to the data generation (\textit{post}), while the second indicates the MCMC sampler parameterisation (\textit{inverse}). For each scenario we consider three membership sizes ($ 70, 100, 130 $) so that we can correspondingly evaluate the three cases $ m < n $, $ m = n $, and $ m > n $. However, as we remark in Section \ref{subsec:param}, the \textit{inverse} parameterisation is available only for $ m \leq n $, therefore we simulate the $m=130$ case only for the \textit{post} parameterisation.

	The spatial structure used in the study is a $ 10 \times 10 $ grid, hence $ n = 100 $. Expected rates for each memberships ($ \tilde{E}_j $ in equation (\ref{eq:glm_poi_mm})) were simulated from $ m $ independent $ Poisson(20) $ distributions. For both the MM matrix and the $ \tilde{\bm{E}} $ vector, we simulated the case $ m = 130 $ and removed the extra rows or elements to obtain the other membership sizes.
	
	With regard to the MM matrix $\Hb$, this was simulated starting from each area's neighbourhood structure, encoded in the adjacency matrix $\Wb$, up to second order neighbours. For each area a new membership was created by first generating random weights from an uniform distribution to all first and second order neighbours. Then those quantities were normalised so that the area itself would amount to 50\% of the total weights, the sum of all first order neighbours would be attributable 35\% of the total weight, and the remaining 15\% would be taken up by the second order neighbours. More formally, the simulated $m \times n$ matrix $\Hb $ is obtained by first randomly drawing samples:
	
	\begin{align} \label{eq:mm_mat_1}
		\begin{split}
			(G)_{ji} \overset{iid}{\sim} \mbox{Uniform}(0,1) \quad &\mbox{for}\quad  i=1,...,n\text{ and } j = 1,...,m  \\
			\mbox{s.t.} \quad \{w_{ji}=1\} &\vee \{w_{il}=1 \wedge w_{lj}=1\},
		\end{split}
	\end{align}
	for $l = 1, ..., n $. Therefore the elements of the matrix $\Hb$ are defined as follows
	\begin{align} \label{eq:mm_mat_2}
		(H)_{ji} =  \begin{cases}
			0.5 & \mbox{if } i=j,\\
			\frac{(G)_{ji}}{\sum_{j:w_{ji}=1}(G)_{ji}}\cdot 0.35 & \mbox{if } w_{ji}=1,\\
			\frac{(G)_{ji}}{\sum_{j:\{w_{il}=1 \wedge w_{lj}=1\}}(G)_{ji}}\cdot 0.15 & \mbox{if } w_{il}=1 \mbox{ and } w_{lj}=1.
		\end{cases}
	\end{align}
	
	In order to obtain a 130 membership matrix, the extra rows were created by simply repeating the same process and generating different random weights. Before deleting the bottom rows to obtain the varying number of memberships, in order to ensure that all areas were represented as uniformly as possible in the final matrix its rows were randomly shuffled. Finally, the obtained 70 membership matrix was checked to ensure that each area was represented with at least one weight. In our case the area with the least weights had 2, so it contributed to 2 memberships.

	A total of $ 10^4 $ datasets were simulated for the  memberships structure using the \textit{post} parameterisation, each membership size was obtained by simply truncating the simulated outcome vector accordingly, thus increasing the total of datasets used to 9000. Covariates $ \bm{X} $ were kept fixed across all datasets and they were generated using two independent standard normal samples and min-max normalised. Both the MM matrix and offsets were also the same in all simulated datasets. 
	
	Following the approach described in Section \ref{subsec:sbc}, hyperparameters for the CAR prior where sampled from the same priors described in Section \ref{subsec:implementation} as well as \textit{Hamiltonian Monte Carlo} (HMC) \parencite{Betancourt2018} sampling which was run using RStan \parencite{Carpenter2017} on Queen Mary's Apocrita High Performance Computing facility, supported by QMUL Research-IT \parencite{king_thomas_2017_438045}, using five $ 10^4 $ iterations chains and 50\% warm-up. To reduce autocorrelation, as suggested in \citet{Talts2018}, samples were thinned by 5 leading to $ L = 5000 + 1 $ posterior samples (see Section \ref{subsec:sbc}). 
	
	Additionally, in order to ensure convergence of the posterior sampler, we only keep those simulations for which no parameter report a $ \hat{R} $ statistic over $ 1.01 $, following \citet{Vehtari2020}. In Table 2 in the Supplemental material we report the number of simulations that were excluded. Convergence was satisfactory overall, since we always retained at least 98\% of the simulations.

	Finally we evaluate bias and root mean square error (RMSE) across simulations by computing the following quantities. For a generic parameter $ \theta $ with true value $ \hat{\theta}_k $ for simulated dataset $ k $, $ k = 1,...,10^4 $ we compute 
	\begin{align} \label{eq:bias}
		\text{Bias}_k = \frac{1}{B} \sum_{b=1}^{B} \theta^{(b)}_k - \hat{\theta}_k \, \, ; \, \,
		\text{Abs bias}_k = \frac{1}{B} \sum_{b=1}^{B} \abs{\theta^{(b)}_k  - \hat{\theta}_k}
		\, \, ; \, \,
		\text{RMSE}_k = \sqrt{ \frac{1}{B} \sum_{b=1}^{B} (\theta^{(b)}_k  - \hat{\theta}_k)^2 } ,
	\end{align}
	where $  \theta^{(b)}_k  $ is the $ b^{th} $ HMC sample for the $ k^{th} $ dataset and $ B = 2.5 \cdot 10^4 $.

	\subsection{Results}
	
	\begin{table}[ht]
		\centering
		\begin{tabular}{lgai|ga|ga|ga}
			\toprule
			&\multicolumn{5}{c}{\textbf{Data generation: Post}} & \multicolumn{4}{c}{\textbf{Data generation: Inverse}} \\
			\hline
			&\multicolumn{3}{c}{\textbf{MCMC: Post}} & \multicolumn{2}{c}{\textbf{MCMC: Inverse}} &
			\multicolumn{2}{c}{\textbf{MCMC: Post}} & \multicolumn{2}{c}{\textbf{MCMC: Inverse}} \\
			\hline
			&$70$  & $100$  & $130$ & $70$& $100$& $70$  & $100$  & $70$ & $100$  \\ 
			\midrule
			$\alpha$ & 87.35 & 66.62 & 80.61 & 37.75 & 73.45 & 43.16 & 76.44 & 95.80 & 71.72 \\ 
			$\tau$ & 68.52 & 93.80 & 94.43 & 8.92 & 93.36 & 30.82 & 90.85 & 91.47 & 91.54 \\ 
			$\gamma$ & 98.38 & 99.42 & 98.60 & 99.70 & 99.23 & 99.96 & 99.52 & 99.95 & 99.57 \\ 
			$\beta_1$ & 96.15 & 96.14 & 83.34 & 99.79 & 97.76 & 99.99 & 96.41 & 99.96 & 97.25 \\ 
			$\beta_2$ & 99.97 & 99.96 & 99.75 & 99.96 & 99.77 & 99.95 & 99.52 & 99.81 & 99.46 \\ 
			$\bm{\phi}$ & 94.63 & 94.33 & 94.22 & 50.72 & 94.53 & 57.72 & 94.53 & 93.83 & 94.54 \\ 
			$\bm{\rho}$ & 94.45 & 94.37 & 93.51 & 26.35 & 94.68 & 59.30 & 94.46 & 27.95 & 94.17 \\ 
			$\tilde{\bm{\rho}}$ & 95.37 & 95.78 & 94.72 & 94.32 & 95.78 & 94.38 & 95.72 & 95.16 & 95.88 \\
			\bottomrule
		\end{tabular}
		\caption{Coverage of the $ 95\% $ expected variation intervals of the SBC obtained posterior samples of the CAR-MM parameters for each scenario and membership size}
		\label{tab:calibration_res}
	\end{table}

	In Table \ref{tab:calibration_res} we report the coverage of the $ 95\% $ intervals of the expected variation of the SBC rank statistics against the uniform distribution \footnote{The plots for all parameters can be found in Section 2.2 in the Supplemental material}. For the $ m = 100, 130 $ cases, we observe high coverage for all the parameters in all scenarios, with the exception of the $ \alpha $ parameter, which is generally not of inferential interest. We believe that the relatively lower coverage of $ \beta_1 $ in the $ m = 130 $ scenario is not a cause of concern for two reasons: first, as will be seen below, posterior means are unbiased across the simulation study, and second, this fact does not seem to hinder the good calibration of the relative risks $ \brho $ and $ \tilde{\brho} $.
	
	Overall, in the $ m = 70 $ case, we have satisfactory coverage for the linear term parameters $ \gamma $, $ \beta_1 $ and $ \beta_2 $ in all scenarios. Despite this, we observe poor calibration for the spatial random effects $ \bphi $ and the areal relative risks $ \brho $, the only exception being the \textit{post-post} case. These results are consistent with the considerations on identifiability that we discussed in Section \ref{sec:rel_car_carmm} where we remarked the identifiability only of the $ m = n $ case under both parameterisation and the $ m > n $ for the \textit{post} one. Although theoretically the $ m=70 $ should not be identifiable, the calibration in the \textit{post}-\textit{post} case could be due to the coincidence between data generation and the MCMC parameterisation. Conversely, in the \textit{inverse}-\textit{inverse} case, despite the two parameterisations being the same, we remark that the generalised inverse used to obtain areal relative risks ($ \brho $) from the sampled membership risks ($ \tilde{\brho} $) is not a right inverse (Lemma \ref{lemma:L_unique}). This leads to the poor calibration of the $ \brho $

	All in all, these simulations suggest that in the \textit{post} parameterisation case the posterior samples are well calibrated and support the considerations on identifiability of the CAR-MM model discussed above. Overall these results lead to us recommend using the \textit{post} parameterisations in all applications, not only due to it computational efficiency as mentioned in Section \ref{subsec:implementation}, but also due to its overall better calibration.
	
	However, despite the positive results for the $ m < n $ case, we remind the reader that in Section \ref{subsec:ident} we showed its non identifiability in the posterior, under both parameterisations. The good calibration results in the \textit{post}-\textit{post} (both data and MCMC sampler are \textit{post} parameterised) case could simply an artefact of the simulation setup. In fact, in that scenario the function that maps the areal relative risks to the membership risks, i.e. the MM matrix $ \Hb $ is the same both when simulating the data and sampling from the posterior distribution. Conversely, whenever an inverse parameterisation is involved, either in the data simulation stage or during MCMC sampling, given that a left inverse does not exist (Lemma \ref{lemma:L_unique}), the data does not contain enough information to correctly recover the spatial random effects at the areal level. Finally, high calibration for all parameters across different parameterisations for the two $ m\geq n $ cases is evidence of the reliability of the CAR-MM model.

	\section{Data analysis}
	\label{sec:data_an}
	\subsection{Data structure and modelling}
	We present an application of this methodology to primary care catchment areas. As argued above, the CAR-MM approach is of utility in situations where data are observed for non-spatial units (memberships) but where an underlying spatial framework can be conceived, and where an area to membership population cross-referencing is available \parencite{patients2019}. In particular, UK primary care data on disease prevalence (which is recorded at GP practice level) offers a suitable application. Specifically, we consider diabetes prevalence for a set of GP practices in South East London (SEL), and to the MSOAs (small areas) defining the catchment areas of these GP practices.

	The data pertain to 153 GP practices which together constitute what are known as Clinical Commissioning Groups or CCGs: the 153 practices are those for Bexley, Lewisham, Greenwich and Bromley CCGs. Based on the choice of GPs, a total of 152 MSOAs were included so that at least 90\% coverage of each GPs population was reached.
	
	Prevalence data for GPs refers to October 2015, as does the cross reference information providing the number of each MSOA residents registered at each GP practice \parencite{qof2015}. Age structure information was also available at GP level, so that expected rates could be easily computed using England and Wales national rates of diabetes prevalence.
	
	For MSOA level covariates, two were included: from the 2011 Census the share of South Asian residents $(\bm{X})_{\cdot1}$ i.e. Indian, Pakistani and Bangladeshi, see Table KS201EW in the NOMIS database \parencite{ons_census}, and the 2015 Index of Multiple deprivation (IMD) $(\bm{X})_{\cdot2}$. The two variables were included as a control of socio-economic status and due to the higher prevalence of diabetes among the South Asian population \parencite{Hall2010}. Both variables were min-max normalised. 
	
	\subsection{Results}
	
	Initial runs using the Poisson likelihood model (Eq. \ref{eq:glm_poi_mm}) returned concerning PSIS-LOO weights $ \hat{k} $ which might be a symptom of misspecification of the model. In fact, \citet{Vehtari2017} note that high $ \hat{k} $ mean that the full posterior distribution is a poor approximation of the leave-one-out posterior.
	
	Since the possible misspecification might be attributable to overdispersion, i.e. when the variance of the outcome is larger than its average \parencite{Coly2019} we therefore turned to a Negative Binomial likelihood for the model in equation \ref{eq:glm_poi_mm}, that allows for direct modelling of overdispersion.
	
	The density of the parameterisation we used is the following:	 
	\begin{align} \label{eq:dens_neg_bin}
		f_Y(y_j ; \mu_j, \psi) = 
		\begin{pmatrix}
			y_j + \psi - 1 \\
			y_j
		\end{pmatrix} 
		\left(\frac{\mu_j}{\mu_j + \psi}\right)^{y_j} \left(\frac{\psi}{\mu_j + \psi}\right)^\psi,
	\end{align}
	where $ y_j \in \mathbb{R} $ for all $ j = 1, ..., m $; $ E(Y_j) = \mu_j \in \mathbb{R}^+ $, $ \psi \in \mathbb{R}^+ $ and $ Var(Y_j) = \mu_j + \mu_j^2/\psi $. Therefore the mean of each $ Y_j $ can be modelled directly through the $ \mu_j $'s. Equation \ref{eq:glm_poi_mm} then becomes:
	\begin{align} \label{eq:glm_nb_mm}
		\tilde{Y}_j \sim  NegBin(\tilde{E}_j \tilde{\rho}_j, \psi) \quad , \quad \log(\tilde{\bm{\rho}}) = \log(\bm{\Hb\rho}) = \Hb \left(\gamma\one_n + \bm{X} \bm{\beta} + \bm{\phi}\right),
	\end{align}
	so the overdispersion parameter $ \psi $ is assumed common across all GP practices (i.e. memberships). The priors used were the same as for the simulation study in Section \ref{sec:sim_study}.
	
	In the analysis we compared three models: the CAR-MM as described above, the ICAR-MM and a GLM-MM that simply consists in removing the spatial random effects $ \bm{\phi} $ from equation \ref{eq:glm_nb_mm}. We include the ICAR since it is one of the most widely used prior in the disease mapping literature \parencite{MartinezBeneito2019}. All models were coded in RStan (see Section \ref{subsec:implementation} and \cite{Morris2019} for the ICAR prior) and run for five $ 10^5 $ iterations chains with 50\% warm-up. With regard to the prior for the added overdispersion parameter $ \psi $ for the Negative Binomial likelihood, $ Gamma(2, 0.2) $, parameterised in terms of shape and rate, was used.
	
	\begin{table}[ht]
		\centering
		\caption{Posterior results for SEL data analysis comparing Negative Binomial GLM-MM models (white columns), CAR-MM (light grey columns) and ICAR-MM (dark grey columns). $ \beta_1 $ is the coefficient, associated to the share of South Asian population, and $ \beta_2 $  for the IMD (both min-max normalised)}
		\label{tab:data_res}
		\centering
		\begin{adjustbox}{max width=\textwidth}
			\begin{tabular}{l|gai|gai|gai|gai|gai|gai}
				\toprule
				&\multicolumn{3}{c}{$\alpha$} & \multicolumn{3}{c}{$\tau$} & \multicolumn{3}{c}{$\gamma$} & \multicolumn{3}{c}{$\beta_{1}$} & \multicolumn{3}{c}{$\beta_{2}$} & \multicolumn{3}{c}{$\psi$} \\ 
				\midrule
				mean & - & 0.52 & - & - & 18.42 & 26.02 & -0.75 & -0.74 & -0.69 & 0.75 & 0.69 & 0.54 & 1.02 & 1.02 & 0.99 & 31.61 & 33.85 & 34.71 \\ 
				sd & - & 0.28 & - & - & 7.89 & 9.01 & 0.05 & 0.06 & 0.07 & 0.14 & 0.17 & 0.20 & 0.07 & 0.09 & 0.11 & 4.05 & 4.59 & 4.63 \\ 
				$2.5\%$ & - & 0.03 & - & - & 7.13 & 12.26 & -0.84 & -0.86 & -0.83 & 0.49 & 0.34 & 0.13 & 0.88 & 0.85 & 0.78 & 24.35 & 25.73 & 26.35 \\ 
				$50\%$ & - & 0.54 & - & - & 17.04 & 24.71 & -0.75 & -0.74 & -0.69 & 0.75 & 0.69 & 0.54 & 1.02 & 1.02 & 0.99 & 31.37 & 33.55 & 34.46 \\ 
				$97.5\%$ & - & 0.96 & - & - & 37.64 & 47.06 & -0.66 & -0.62 & -0.56 & 1.02 & 1.02 & 0.94 & 1.16 & 1.19 & 1.19 & 40.20 & 43.68 & 44.58 \\ 
				width CI & - & 0.93 & - & - & 30.51 & 34.80 & 0.18 & 0.24 & 0.28 & 0.53 & 0.68 & 0.80 & 0.28 & 0.34 & 0.41 & 15.85 & 17.94 & 18.23 \\ 
				$\hat{R}$ & - & 1.00 & - & - & 1.00 & 1.00 & 1.00 & 1.00 & 1.00 & 1.00 & 1.00 & 1.00 & 1.00 & 1.00 & 1.00 & 1.00 & 1.00 & 1.00 \\  
				
				ESS  & - & 3414 & - & - & 14048 & 13034 & 10821 & 7586 & 5325 & 12741 & 9850 & 6496 & 12570 & 14058 & 5937  & 14984 & 27963 & 28552 \\ 
				\bottomrule
			\end{tabular}
		\end{adjustbox}
	\end{table}

	Table \ref{tab:data_res} shows posterior estimates for the three models estimated on the SEL dataset. In general estimates are very similar across the three models, with the exceptions of marginal CAR precision $ \tau $ and $ \beta_1 $ coefficient, the former larger and the latter smaller in the ICAR-MM. As expected, given the fact that no term is directly accounting for residual spatial autocorrelation, the GLM-MM model estimates slightly smaller credible intervals for $ \beta_1 $ and $ \beta_2 $ than the two CAR based models.
	
	Model comparisons in Table \ref{tab:data_post_pred} do not seem to show substantial differences between model fits: the ICAR shows the highest $ \widehat{elpd}_{loo} $ although with no significant difference from the other models. Despite the GLM-MM having the lowest $ \overline{RPS} $ and $ \overline{DSS} $, the small differences and the result from other model comparison criteria, there is not sufficient evidence to claim it provides a better fit. In fact, in Figure 10 (Supplemental material), we can see no discernible difference between posterior calibration of marginal PPP's. Therefore, we focus on the CAR-MM model as it is the one on which we focussed our attention in this paper.

	\begin{table}[h!]
		\caption{Comparison of model check measures. ICAR-MM model is used as reference (model A) when computing differences  $ \text{diff}(\widehat{elpd}_{loo}^{A, B}) $ }
		\label{tab:data_post_pred}
		\begin{tabular}{lcccccc}
			\toprule
			& $ \widehat{elpd} $ & $ se(\widehat{elpd}_{loo})  $  &
			$ \text{diff}(\widehat{elpd}_{loo}^{A, B}) $ & 
			$ se\left[\text{diff}(\widehat{elpd}_{loo}^{A,B})\right] $ & $ \overline{RPS} $ & $ \overline{DSS} $ \\
			\midrule
			ICAR-MM & -847.8 & 14.2 & 0 & 0 & 35.93 & 9.39 \\
			CAR-MM & -849.0 & 13.8 &  -1.2  & 1.1 & 35.11 &  9.32 \\
			GLM-MM & -849.1 & 13.0 &  -1.3   &  2.5   & 34.86 & 9.28 \\
			\bottomrule
		\end{tabular}
	\end{table}
	
	\begin{figure}[h!]
		\centering
		\caption{(a) Posterior means for each areal relative risk ($ \bm{\rho} $ in equation \ref{eq:glm_nb_mm}) for the CAR-MM model (b) Probabilities $ P(\rho_i > 1| \yb) $ that the relative risk in each exceeds 1 among posterior samples under the CAR-MM prior
		}
		
		\subfloat[][]{\includegraphics[width=.5\linewidth]{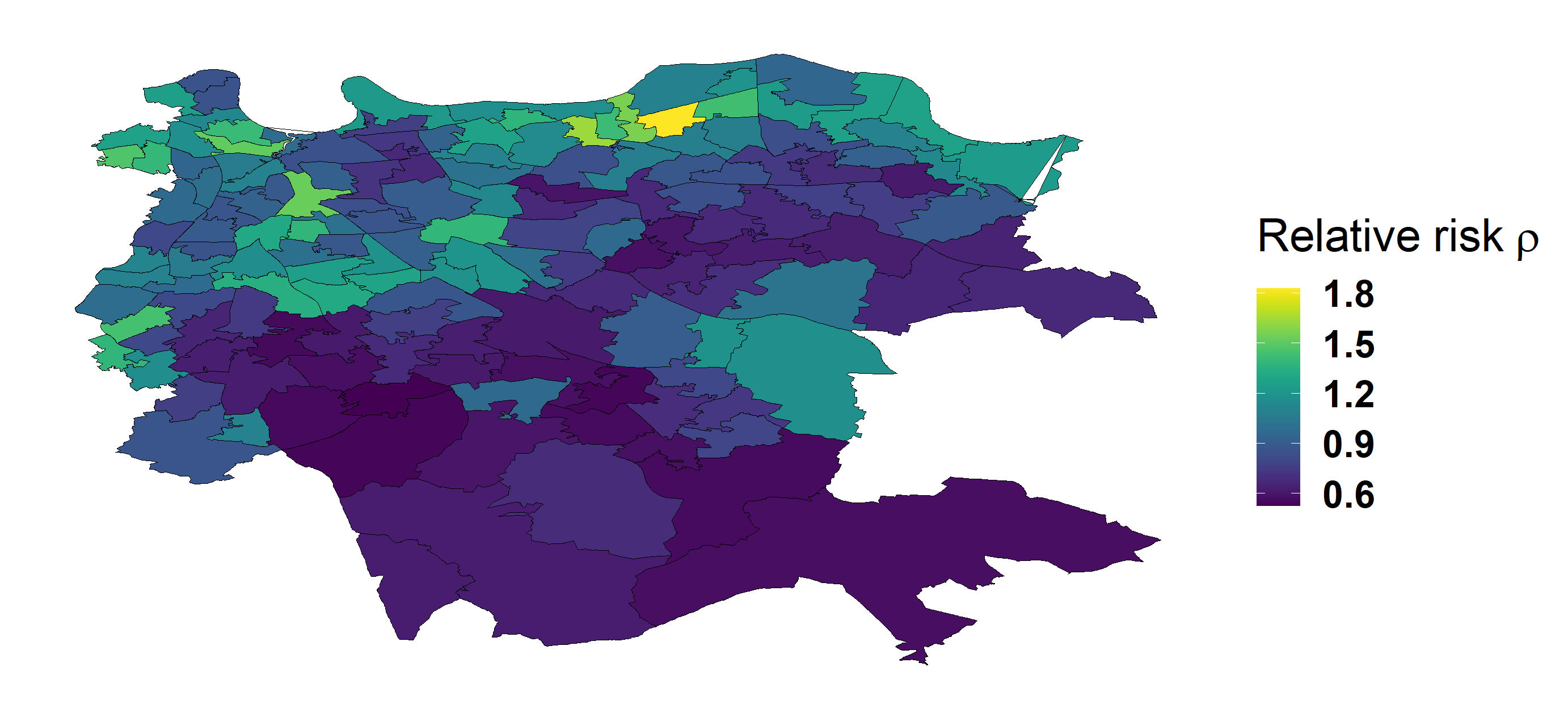}   \label{fig:rr_map_sel_car_nb}}
		\subfloat[][]{\includegraphics[width=.5\linewidth]{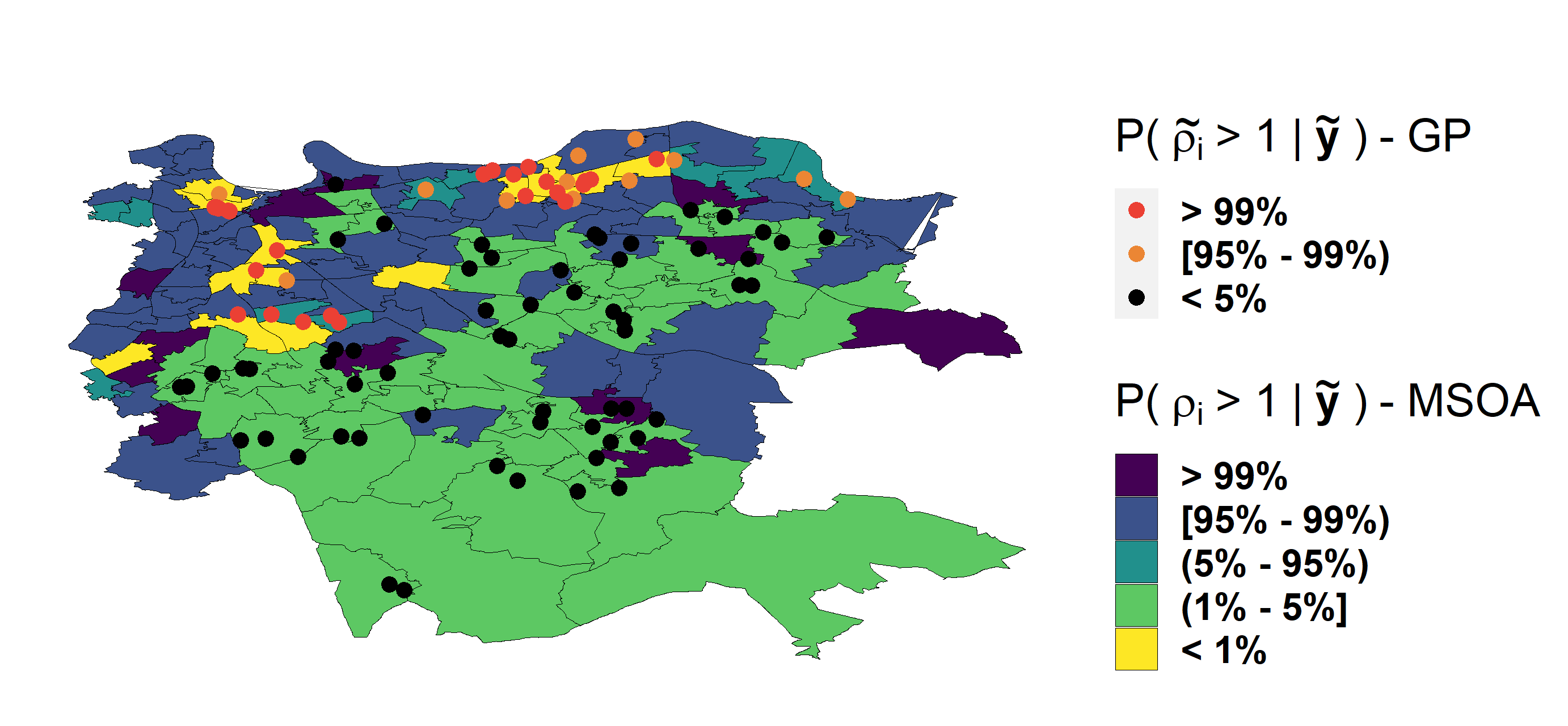}   \label{fig:rr_over_car}}
	\end{figure}
	
	\subsection{Strategic implications}
	
	Comparison of posterior relative risk estimates between the 152 MSOAs in the catchment area shows a wide contrast in estimated prevalence and hence in the need for diabetes care. Such contrasts show substantial ecological variation – that is, variation due to area context. Such variations support locality based or place-based interventions and  allocation of resources for tackling diabetes \parencite{nice_2014}.	
	
	Considering the CAR-MM model, the ratio of the $ 90^{th} $ percentile relative risk to the $ 10^{th} $ percentile relative risk has posterior mean ($ 95\% $ Credible Interval) of $ 2.31 $ $  (2.06, 2.55) $, an over 2-fold variation in risk.
	
	Figure \ref{fig:rr_over_car} maps out the probabilities $ P(\rho_i > 1| \tilde{\yb}) $ that the relative risk exceeds 1 under the CAR-MM in the posterior samples \parencite{Richardson2004}. There are $ 24 $ areas with over $ 95\% $ probability of elevated risk, indicating highly elevated diabetes prevalence, and $ 14 $ areas with this probability exceeding $ 99\% $. By contrast, a considerably larger minority of $ 61 $ areas have under $ 5 \% $ probability of elevated risk, with $ 49 $ areas having exceedance probabilities below $ 0.01 $, indicating significantly low prevalence. 
	
	Figures \ref{fig:rr_map_sel_car_nb} and \ref{fig:rr_over_car} also show the spatial features of elevated diabetes levels in a minority of areas, particularly a spatial concentration in the north of the region, reinforcing findings such as \parencite{Green2003}. The posterior mean ($ 95\% $ CI) of the Moran’s I for the relative risks is $ 0.44 $ $ (0.35, 0.49) $. The clustering of high risk reflects residential concentration in observed area risk factors, namely deprived communities and in the South Asian community, as well as spatial clustering in unobserved influences on diabetes risk. Hence both area risk factors should figure in any area rankings for locality based intervention.

	Table \ref{tab:risk_profiles} disaggregates the 152 MSOAs into quintiles according to their deprivation scores and their percentages South Asian. If we consider the mean of posterior relative risks in the 30 MSOAs which are most deprived we see an average ($ 95\% $ CI) of $ 1.32 $ $ (1.15, 1.55) $, as compared to a much lower relative risk in the least deprived MSOAs. The posterior relative risk in the 30 MSOAs with the relatively largest Asian communities has mean ($ 95\% $ CI) $ 1.07 $ $ (0.64, 1.68) $ as compared to significantly lower relative risks in MSOAs with relatively small Asian communities. Such findings add to existing UK-based findings that area deprivation and ethnic mix significantly affects area diabetes prevalence and incidence variations \parencite{Moody2016, Nishino2015}.
	
	\begin{table}[h!]
		\begin{tabular}{lcccccc}
			\toprule
			& \multicolumn{3}{c}{\textbf{IMD}} & \multicolumn{3}{c}{\textbf{South Asian population}}\\
			\midrule
			& mean & $ 2.5 \% $ & $ 97.5 \% $ & mean & $ 2.5\% $ & $ 97.5 \% $\\
			\hline
			$ 1^{st} $ & 0.62&	0.54&	0.79	&0.92&	0.54&	1.39\\
			$ 2^{nd} $ & 0.74&	0.62&	0.89	&0.91&	0.56&	1.38\\
			$ 3^{rd} $ & 0.90&	0.75&	1.10&	0.87&	0.55&	1.45\\
			$ 4^{th} $ & 1.13&	0.96&	1.64&	0.93&	0.58&	1.45\\
			$ 5^{th} $ & 1.32&	1.15&	1.55&	1.07&	0.64&	1.68\\
			\bottomrule
		\end{tabular}
		\caption{Relative risk profiles of posterior means by covariate quintiles}
		\label{tab:risk_profiles}
	\end{table}
	
	Finally, so far we analysed the areal relative risks $ \bm{\rho} $ however, as mentioned in Section \ref{sec:rel_car_carmm}, the membership relative risks $ \tilde{\bm{\rho}} $ are also identifiable in the posterior (since $ m > n $ in our case) and can be of inferential interest in many applications. In fact, in our case each $ \tilde{\rho_j} $ corresponds to the relative risk of a single GP practice and can be of significant epidemiological and public health interest.
	
	Therefore, we can replicate the analysis of the probabilities of a relative risk exceeding 1 as we did above. In the case of GP practices, we can see that: $ 41 $ practices have over $ 95\% $  probability, $ 29 $ with over $ 99\% $, and $ 68 $ are under $ 5\% $, with $ 62 $ under $ 1\% $. In Figure \ref{fig:rr_over_car} we also show the locations of those practices on the map.

	As expected, in the SEL region, these results are similar to the ones observed for areal relative risks, given that we have a similar number of practices and areas. However, these membership level relative risk can give insight in which GP's could be facing significant public health challenges. In fact, although most of the practices with either high or low probability are in accordance located in areas with similar risk profiles, there are some GP's for which the opposite is true, i.e. high or low risk practices in respectively low or high risk areas. A purely areal analysis of prevalence would have not been able to identify these practices.

	\section{Discussion}
	\label{sec:disc}
	In this work we have studied and evaluated the theoretical underpinnings of the CAR-MM model devising a general framework to evaluate its structure and implications to spatial and relative risk modelling. We compared our approach to that of \citet{Petrof2020}, provided a more general framework able to encompass a potentially wide range of applications, and showed how this could apply to the parameterisation and evaluation of a CAR-MM model. Building on that, we considered identifiability in both the likelihood and the posterior for different membership sizes compared to areas and both parameterisations.
	
	The simulation study highlights how this approach leads to well calibrated posterior samples and small bias for estimated parameters. We showed how the two parameterisations, although theoretically identical in the $ m =n $ case, can lead to significantly different results in terms of both computational time and identifiability when $ m \neq n $. More specifically, we recommend the use of the \textit{post} parameterisation in all cases although we warn of the possible identifiability problems in the $ m < n $ case. In this regard, the present paper supports the previous analyses and findings of both \citet{Gramatica2021} and \citet{Petrof2020}, where $ m > n $ datasets were considered, and where the models were \textit{post} parameterised.

	Finally, the data analysis allowed us to show how this approach can be used in a real world application where the MM principle and CAR frameworks merge naturally, thus allowing for a direct implementation of the CAR-MM prior. Datasets from different institutions, in this case the National Health Service (NHS) and the Census, can be directly integrated leading to a modelling strategy that not only is able to pinpoint areas with increased relative risks, but also specific practices, by making use of the membership level modelling. These evaluations can be of interest for both epidemiological research and public health considerations.

	\printbibliography
	
\end{document}


\maketitle

	\section{Proofs of main results}
	\label{sec:proofs}
	
	\begin{corollary} \label{corollary:covariance}
		Let $\Sib, \, \Hb, \, \tSig $  be as defined in \textbf{M1}-\textbf{M3}, then iff $ m \leq n $, $ \tSig $ can be expressed as the covariance matrix of a CAR model $ (\Ib - \tilde{\Cb})^{-1}\tilde{\Mb} $ for a unique pair of $ m \times m $ matrices $ \tilde{\Cb} $ and $ \tilde{\Mb} $.
	\end{corollary}
	
	\begin{proof}
		To prove the sufficiency part, since Theorem 2.1 states that the statement is true for any positive-definite covariance matrix, we only need to prove that $ \tSig $ is in fact positive definite. This can be achieved by simply noting that: $ \tSig = \Hb \Sib \Hb^T = \Hb^T \Sib \Hb $ and both $ \Sib $ and $ \Hb $ are of full rank, so by applying (3) in \textit{Theorem 14.2.9} by \citet{harville2008matrix} we obtain the positive definiteness of $ \tSig $.
		
		To prove necessity, we observe that when $ m > n $  $rank(\tSig) = n $  making it rank deficient, thus we conclude that it cannot be also positive definite.
	\end{proof}
	
	\begin{lemma} \label{lemma:L_unique}
		Let $ \Hb $ be a matrix as defined in \textbf{M1}-\textbf{M3}. $ \Hb $ has a left inverse $ \Lb $ iff $ m\geq n $, and this matrix is unique iff $ m = n $.
	\end{lemma}
	
	\begin{proof}
		In the case $ m = n $ it is obvious that $ \Lb $ exists, given that $ \Hb $ is of full rank by \textbf{M3}, and in fact coincides with $ \Hb^{-1} $ guaranteeing uniqueness.
		Since an $ m \times n $ matrix has a left inverse iff it is of full column rank, it follows that when $ m < n $, $ \Hb $ will not have any such inverse. 
		
		Moving on to the $ m > n $ case, following Theorem 9.2.7 in \citet{harville2008matrix}: for a generic $ m \times n $ matrix $ \Hb $ and any of its generalised inverses $ \Gb $, an $ n \times m $ matrix $ \Gb^* $ is also a generalised inverse of $ \Hb $ iff
		\begin{align} \label{eq:gen_inv}
			\Gb^* = \Gb + \Zb - \Gb \Hb \Zb \Hb \Gb,
		\end{align}
		for some $ n\times m $ matrix $ \Zb $. We can verify that in our case this is true for \textit{any} $ \Zb $. If we take $ \Gb = \left(\Hb^T \Hb\right)^{-1}\Hb^T $ as a generalised inverse: first, we note that $ \Gb $ is evidently a left inverse, then we can see that so is $ \Gb^* $, in fact for any $ \Zb $ we have:
		\begin{align}
			\Gb^*\Hb = \Gb\Hb + \Zb\Hb + \Gb\Hb\Zb\Hb\Gb = \Ib_n + \Zb\Hb - \Zb\Hb = \Ib_n,
		\end{align}
		thus proving non uniqueness of $ \Hb $'s left inverse.
	\end{proof}
	
	\begin{lemma} \label{lemma:L_constraint}
		Any left inverse $ \Lb $ as in Lemma \ref{lemma:L_unique} will satisfy the same constraint \textbf{M2-1} as $ \Hb $, i.e the columns adding up to the $ \one $ vector
	\end{lemma}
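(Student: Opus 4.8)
The plan is to recast the verbal constraint \textbf{M2-1} as a single matrix equation and then push it through an arbitrary left inverse by one associativity step. Reading ``the columns adding up to the $\one$ vector'' as the statement that the sum of the $n$ columns of $\Hb$ (each a vector in $\mathbb{R}^m$) equals the all-ones vector, \textbf{M2-1} becomes $\Hb\one_n = \one_m$. The claim to be proved is the analogous identity for $\Lb$, namely $\Lb\one_m = \one_n$, since $\Lb$ is $n\times m$ and its $m$ columns live in $\mathbb{R}^n$. So I would first fix this algebraic translation and make explicit the two ambient dimensions of the all-ones vectors, because the whole argument hinges on matching them.

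Next I would take an arbitrary left inverse $\Lb$ supplied by Lemma \ref{lemma:L_unique}, which by definition satisfies $\Lb\Hb = \Ib_n$. The key move is to rewrite $\one_m$ on the left-hand side using \textbf{M2-1} for $\Hb$ and then absorb the factor $\Lb\Hb$:
\[
\Lb\one_m = \Lb(\Hb\one_n) = (\Lb\Hb)\one_n = \Ib_n\one_n = \one_n,
\]
which is exactly \textbf{M2-1} for $\Lb$. I would stress that the only properties used are $\Hb\one_n=\one_m$ and the defining relation $\Lb\Hb=\Ib_n$; in particular the order of operations matters (expand $\one_m$ as $\Hb\one_n$ first, then collapse $\Lb\Hb$), but no further structure of $\Lb$ is needed.

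The point I would flag as the real subtlety, rather than a computational obstacle, is twofold. First, the correct reading of the constraint is essential: the argument works precisely because \textbf{M2-1} acts on the column space via right-multiplication by $\one_n$, which the left-inverse relation is tailored to cancel; the alternative reading $\one_m^{T}\Hb = \one_n^{T}$ would \emph{not} propagate, since for $m>n$ the left null space of $\Hb$ is nontrivial and one could not conclude $\one_n^{T}\Lb = \one_m^{T}$. Second, because the derivation invokes only $\Lb\Hb=\Ib_n$, which every left inverse satisfies, the conclusion holds uniformly over all left inverses, including the infinite family that Lemma \ref{lemma:L_unique} guarantees in the regime $m>n$; the non-uniqueness of $\Lb$ therefore plays no role, and no case distinction between $m=n$ and $m>n$ is required.
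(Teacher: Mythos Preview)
Your proof is correct and is essentially the same as the paper's: both start from $\Hb\one_n=\one_m$, apply an arbitrary left inverse $\Lb$, and collapse $\Lb\Hb$ to $\Ib_n$ to obtain $\Lb\one_m=\one_n$. Your added commentary on the correct reading of \textbf{M2-1} and the irrelevance of the non-uniqueness of $\Lb$ is accurate and goes beyond what the paper spells out, but the underlying argument is identical.
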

	\begin{proof}
		$ \Hb\one_n = \one_m \, \Leftrightarrow \, \Lb\Hb\one_n=\Lb\one_m \, \Leftrightarrow \, \one_n = \Lb\one_m $.
	\end{proof}
	
	\begin{thm} \label{thm:unique_Sig}
		Let $\Sib, \, \Hb, \, \tSig $  be as defined in \textbf{M1}-\textbf{M3}, then given $ \tSig $ and $ \Hb $, $ \Sib $ is uniquely defined iff $ m \geq n $.
	\end{thm}
	\begin{proof}
		In the case $ m \geq n $ it is evident that there is only one $ \Sib $ for a given $ \Hb $ and $\tSig $. This can be proved by contradiction because, if we assume that there was not, it would mean that there existed a covariance matrix $ \Sib^* \neq \Sib $ such that
		\begin{align} \label{eq:in_theo3}
			\Hb \Sib^* \Hb^T = \Hb \Sib \Hb^T,
		\end{align}
		but, by Lemma \ref{lemma:L_unique}, we have that a left inverse exists thus leading to $ \Sib^* = \Sib $ which contradicts the initial assumption.
		
		For $ m < n $ it suffices to look at a simple case $ m = 2 $ and $ n = 3 $ case. Assuming 
		$ \Hb = \begin{pmatrix} 0.5 & 0 & 0.5 \\ 0 & 1 & 0 \end{pmatrix}$, then in order to find a $ 3 \times 3 $ covariance matrix $ \Sib $ for a fixed $ 2 \times 2 $ $ \tSig $, we would need to solve a 4 equation linear systems with 9 unknowns, which is clearly underdetermined, thus making $ \Sib $ not uniquely defined.
	\end{proof}

	\section{SBC results}
	\label{sec:res_calibration}
	
	\subsection{Convergence and computational time}
	
	\begin{table}[ht]
		\centering
		\begin{tabular}{lrrrrrrrrr}
			\toprule
			&\multicolumn{5}{c}{\textbf{Data: Post}} & \multicolumn{4}{c}{\textbf{Data: Inverse}} \\
			\hline
			&\multicolumn{3}{c}{\textbf{MCMC: Post}} & \multicolumn{2}{c}{\textbf{MCMC: Inverse}} &
			\multicolumn{2}{c}{\textbf{MCMC: Post}} & \multicolumn{2}{c}{\textbf{MCMC: Inverse}} \\
			\hline
			&$70$  & $100$  & $130$ & $70$& $100$& $70$  & $100$  & $70$ & $100$  \\ 
			\midrule
			mean & 2.22 & 2.32 & 2.99 & 11.30 & 29.20 & 2.31 & 2.34 & 11.30 & 28.84 \\ 
			sd & 4.28 & 0.57 & 0.81 & 2.52 & 5.18 & 4.77 & 0.58 & 2.53 & 5.31 \\ 
			median & 1.84 & 2.27 & 2.83 & 11.26 & 28.84 & 1.85 & 2.28 & 11.26 & 28.60 \\ 
			$2.5\%$ & 1.20 & 1.44 & 1.79 & 7.02 & 19.42 & 1.21 & 1.45 & 7.03 & 18.79 \\ 
			$97.5\%$ & 3.07 & 3.58 & 4.71 & 17.89 & 41.06 & 3.10 & 3.60 & 17.88 & 40.88 \\ 
			\bottomrule
		\end{tabular}
		\caption{Summary statistics for the posterior sampler time (in minutes) for each scenario and membership size}
		\label{tab:sbc_time}
	\end{table}
	
	\begin{table}[ht]
		\centering
		\begin{tabular}{rrrrrrrrr}
			\toprule
			\multicolumn{5}{c}{\textbf{Data: Post}} & \multicolumn{4}{c}{\textbf{Data: Inverse}}\\
			\hline
			\multicolumn{3}{c}{\textbf{MCMC: Post}} & \multicolumn{2}{c}{\textbf{MCMC: Inverse}} &
			\multicolumn{2}{c}{\textbf{MCMC: Post}} & \multicolumn{2}{c}{\textbf{MCMC: Inverse}} \\
			\hline
			$70$  & $100$  & $130$ & $70$& $100$& $70$  & $100$  & $70$ & $100$  \\ 
			\midrule
			168 & 150 & 183 & 18 & 172 & 86 & 160 & 15 & 165 \\
			1.68 \% & 1.50 \% & 1.83 \% &0.18 \% &1.72 \% &0.86 \% &1.60 \% &0.15 \% & 1.65\% \\
			\bottomrule
		\end{tabular}
		\caption{Number of removed simulations due to parameters reporting a $ \hat{R} > 1.01 $, from the simulation study in Section 6}
		\label{tab:sim_removed}
	\end{table}
	
	\clearpage
	\subsection{Bias, absolute bias and RMSE}
	\label{subsec:res_bias}
	\vspace{-.4cm}
	
	\begin{table}[h!]
		\centering
		\resizebox{\columnwidth}{!}{%
			\begin{tabular}{lr}
				\begin{tabular}{c}
					\multirow{7}{*}{mean} \\ \\  \\   \\   \\   \\   \\  
					\multirow{10}{*}{sd} \\ \\  \\   \\   \\   \\   \\  
					\multirow{11}{*}{median} \\ \\  \\   \\   \\   \\   \\  
					\multirow{14}{*}{$ 2.5\% $} \\ \\  \\   \\   \\   \\   \\  
					\multirow{17}{*}{$ 97.5\% $} \\ \\  \\   \\   \\   \\   \\  
				\end{tabular} &
				\rowcolors{2}{Lightgray}{white}
				\begin{tabular}{lrrrrrrrrr}
					\toprule
					\multicolumn{10}{c}{\textbf{Bias}}\\
					\hline
					& \multicolumn{5}{c}{\textbf{Data: Post}} & \multicolumn{4}{c}{\textbf{Data: Inverse}}\\
					& \multicolumn{3}{c}{\textbf{MCMC: Post}} & \multicolumn{2}{c}{\textbf{MCMC: Inverse}} &
					\multicolumn{2}{c}{\textbf{MCMC: Post}} & \multicolumn{2}{c}{\textbf{MCMC: Inverse}} \\
					\hline
					& $70$  & $100$  & $130$ & $70$& $100$& $70$  & $100$  & $70$ & $100$  \\ 
					\midrule
					$\alpha$ & 0.00 & 0.00 & 0.00 & 0.01 & 0.00 & 0.00 & 0.00 & 0.01 & 0.00 \\ 
					$\tau$ & 0.15 & 0.14 & 0.13 & -0.04 & 0.14 & 0.15 & 0.13 & -0.04 & 0.14 \\ 
					$\gamma$ & -0.00 & -0.00 & -0.00 & -0.00 & -0.00 & -0.00 & -0.00 & -0.00 & -0.00 \\ 
					$\beta_1$ & 0.00 & 0.00 & 0.00 & 0.00 & 0.00 & 0.00 & 0.00 & 0.00 & 0.00 \\ 
					$\beta_2$ & 0.00 & -0.00 & -0.00 & 0.00 & -0.00 & 0.00 & -0.00 & 0.00 & -0.00 \\ 
					$\bm{\phi}$ & 0.00 & 0.00 & 0.00 & 0.00 & 0.00 & 0.00 & 0.00 & -0.00 & 0.00 \\ 
					$\bm{\rho}$ & -0.00 & -0.00 & -0.00 & -0.04 & -0.00 & -0.00 & -0.00 & -0.04 & -0.00 \\ 
					$\tilde{\bm{\rho}}$ & -0.00 & -0.00 & -0.00 & -0.00 & -0.00 & -0.00 & -0.00 & -0.00 & -0.00 \\ 
					\midrule
					$\alpha$ & 0.26 & 0.25 & 0.24 & 0.26 & 0.25 & 0.26 & 0.25 & 0.26 & 0.25 \\ 
					$\tau$ & 5.54 & 5.37 & 5.16 & 5.54 & 5.38 & 5.54 & 5.36 & 5.54 & 5.38 \\ 
					$\gamma$ & 0.21 & 0.19 & 0.18 & 0.21 & 0.19 & 0.21 & 0.19 & 0.21 & 0.19 \\ 
					$\beta_1$ & 0.29 & 0.27 & 0.25 & 0.29 & 0.27 & 0.29 & 0.27 & 0.29 & 0.27 \\ 
					$\beta_2$ & 0.33 & 0.30 & 0.28 & 0.33 & 0.30 & 0.33 & 0.30 & 0.33 & 0.30 \\ 
					$\bm{\phi}$ & 0.22 & 0.20 & 0.20 & 0.22 & 0.20 & 0.22 & 0.20 & 0.22 & 0.20 \\ 
					$\bm{\rho}$ & 0.49 & 0.37 & 0.34 & 1.01 & 0.35 & 0.49 & 0.37 & 1.02 & 0.35 \\ 
					$\tilde{\bm{\rho}}$ & 0.17 & 0.16 & 0.15 & 0.17 & 0.16 & 0.17 & 0.16 & 0.17 & 0.16 \\ 
					\midrule
					$\alpha$ & -0.00 & -0.00 & -0.00 & 0.00 & -0.00 & -0.00 & -0.00 & 0.00 & -0.00 \\ 
					$\tau$ & 0.84 & 0.71 & 0.63 & 0.61 & 0.73 & 0.84 & 0.71 & 0.61 & 0.74 \\ 
					$\gamma$ & -0.00 & -0.00 & -0.00 & -0.00 & -0.00 & -0.00 & -0.00 & -0.00 & -0.00 \\ 
					$\beta_1$ & 0.00 & 0.00 & 0.00 & 0.00 & 0.00 & 0.00 & 0.00 & 0.00 & 0.00 \\ 
					$\beta_2$ & 0.00 & 0.00 & 0.00 & 0.00 & 0.01 & 0.00 & 0.00 & 0.00 & 0.00 \\ 
					$\bm{\phi}$ & -0.00 & -0.00 & -0.00 & -0.00 & -0.00 & -0.00 & -0.00 & -0.00 & -0.00 \\ 
					$\bm{\rho}$ & 0.01 & 0.01 & 0.01 & 0.01 & 0.01 & 0.01 & 0.01 & 0.01 & 0.01 \\ 
					$\tilde{\bm{\rho}}$ & 0.00 & 0.00 & 0.00 & 0.00 & 0.00 & 0.00 & 0.00 & 0.00 & 0.00 \\ 
					\midrule
					$\alpha$ & -0.46 & -0.45 & -0.45 & -0.44 & -0.45 & -0.46 & -0.45 & -0.44 & -0.45 \\ 
					$\tau$ & -14.32 & -13.96 & -13.24 & -14.42 & -13.96 & -14.31 & -13.94 & -14.42 & -13.95 \\ 
					$\gamma$ & -0.41 & -0.37 & -0.35 & -0.42 & -0.37 & -0.41 & -0.37 & -0.41 & -0.37 \\ 
					$\beta_1$ & -0.58 & -0.54 & -0.51 & -0.58 & -0.54 & -0.58 & -0.54 & -0.58 & -0.54 \\ 
					$\beta_2$ & -0.67 & -0.60 & -0.57 & -0.67 & -0.60 & -0.67 & -0.60 & -0.67 & -0.60 \\ 
					$\bm{\phi}$ & -0.43 & -0.41 & -0.39 & -0.44 & -0.40 & -0.43 & -0.40 & -0.44 & -0.40 \\ 
					$\bm{\rho}$ & -0.80 & -0.73 & -0.69 & -1.68 & -0.72 & -0.81 & -0.73 & -1.68 & -0.72 \\ 
					$\tilde{\bm{\rho}}$ & -0.36 & -0.34 & -0.32 & -0.36 & -0.34 & -0.36 & -0.34 & -0.36 & -0.34 \\ 
					\midrule
					$\alpha$ & 0.45 & 0.45 & 0.44 & 0.46 & 0.45 & 0.45 & 0.45 & 0.46 & 0.45 \\ 
					$\tau$ & 8.44 & 8.39 & 8.26 & 8.36 & 8.39 & 8.43 & 8.39 & 8.36 & 8.39 \\ 
					$\gamma$ & 0.42 & 0.38 & 0.36 & 0.42 & 0.38 & 0.42 & 0.38 & 0.42 & 0.38 \\ 
					$\beta_1$ & 0.59 & 0.54 & 0.51 & 0.59 & 0.54 & 0.59 & 0.54 & 0.59 & 0.54 \\ 
					$\beta_2$ & 0.66 & 0.58 & 0.56 & 0.66 & 0.58 & 0.65 & 0.58 & 0.66 & 0.58 \\ 
					$\bm{\phi}$ & 0.44 & 0.41 & 0.40 & 0.44 & 0.41 & 0.44 & 0.41 & 0.44 & 0.41 \\ 
					$\bm{\rho}$ & 0.71 & 0.63 & 0.61 & 1.02 & 0.62 & 0.71 & 0.64 & 1.03 & 0.62 \\ 
					$\tilde{\bm{\rho}}$ & 0.33 & 0.32 & 0.29 & 0.33 & 0.31 & 0.33 & 0.32 & 0.33 & 0.31 \\ 
					\bottomrule
				\end{tabular}
			\end{tabular}
		}
		\caption{Descriptive statistics of the bias (Equation 6.3) across all simulated datasets for all parameters in Equation 3.2. Individual values of vector parameters, i.e. ($ \bm{\phi}, \bm{\rho}, \bm{\tilde{\rho}} $) are pooled together and treated as belonging to a single parameter}
		\label{tab:bias_relative_tot} 
	\end{table}

	\begin{table}[ht]
		\centering
		\begin{tabular}{lr}
			\begin{tabular}{c}
				\multirow{7}{*}{mean} \\ \\  \\   \\   \\   \\   \\  
				\multirow{10}{*}{sd} \\ \\  \\   \\   \\   \\   \\  
				\multirow{11}{*}{median} \\ \\  \\   \\   \\   \\   \\  
				\multirow{14}{*}{$ 2.5\% $} \\ \\  \\   \\   \\   \\   \\  
				\multirow{17}{*}{$ 97.5\% $} \\ \\  \\   \\   \\   \\   \\  
			\end{tabular} &
			\rowcolors{2}{Lightgray}{white}
			\begin{tabular}{lrrrrrrrrr}
				\toprule
				\multicolumn{10}{c}{\textbf{Absolute bias}}\\
				\hline
				& \multicolumn{5}{c}{\textbf{Data: Post}} & \multicolumn{4}{c}{\textbf{Data: Inverse}}\\
				& \multicolumn{3}{c}{\textbf{MCMC: Post}} & \multicolumn{2}{c}{\textbf{MCMC: Inverse}} &
				\multicolumn{2}{c}{\textbf{MCMC: Post}} & \multicolumn{2}{c}{\textbf{MCMC: Inverse}} \\
				\hline
				& $70$  & $100$  & $130$ & $70$& $100$& $70$  & $100$  & $70$ & $100$  \\ 
				\midrule
				$\alpha$ & 0.29 & 0.28 & 0.27 & 0.29 & 0.28 & 0.29 & 0.28 & 0.29 & 0.28 \\ 
				$\tau$ & 5.57 & 5.30 & 5.05 & 5.49 & 5.33 & 5.56 & 5.30 & 5.49 & 5.32 \\ 
				$\gamma$ & 0.23 & 0.20 & 0.20 & 0.23 & 0.20 & 0.23 & 0.20 & 0.23 & 0.20 \\ 
				$\beta_1$ & 0.32 & 0.29 & 0.28 & 0.32 & 0.29 & 0.32 & 0.29 & 0.32 & 0.29 \\ 
				$\beta_2$ & 0.37 & 0.33 & 0.31 & 0.37 & 0.33 & 0.37 & 0.33 & 0.37 & 0.33 \\ 
				$\bm{\phi}$ & 0.23 & 0.22 & 0.21 & 0.21 & 0.22 & 0.23 & 0.22 & 0.21 & 0.22 \\ 
				$\bm{\rho}$ & 0.32 & 0.29 & 0.27 & 0.42 & 0.28 & 0.32 & 0.29 & 0.43 & 0.28 \\ 
				$\tilde{\bm{\rho}}$ & 0.15 & 0.14 & 0.13 & 0.15 & 0.14 & 0.15 & 0.14 & 0.15 & 0.14 \\ 
				\midrule
				$\alpha$ & 0.09 & 0.09 & 0.09 & 0.09 & 0.09 & 0.09 & 0.09 & 0.09 & 0.09 \\ 
				$\tau$ & 3.51 & 3.51 & 3.45 & 3.57 & 3.50 & 3.52 & 3.51 & 3.57 & 3.50 \\ 
				$\gamma$ & 0.10 & 0.10 & 0.09 & 0.10 & 0.09 & 0.10 & 0.10 & 0.10 & 0.09 \\ 
				$\beta_1$ & 0.14 & 0.13 & 0.13 & 0.14 & 0.13 & 0.14 & 0.13 & 0.14 & 0.13 \\ 
				$\beta_2$ & 0.16 & 0.14 & 0.14 & 0.16 & 0.14 & 0.16 & 0.14 & 0.16 & 0.14 \\ 
				$\bm{\phi}$ & 0.13 & 0.12 & 0.11 & 0.14 & 0.11 & 0.13 & 0.12 & 0.14 & 0.11 \\ 
				$\bm{\rho}$ & 0.49 & 0.33 & 0.30 & 0.93 & 0.31 & 0.49 & 0.33 & 0.95 & 0.31 \\ 
				$\tilde{\bm{\rho}}$ & 0.13 & 0.13 & 0.12 & 0.13 & 0.12 & 0.13 & 0.13 & 0.13 & 0.12 \\ 
				\midrule
				$\alpha$ & 0.27 & 0.26 & 0.25 & 0.27 & 0.26 & 0.27 & 0.26 & 0.27 & 0.26 \\ 
				$\tau$ & 5.27 & 5.04 & 4.77 & 5.20 & 5.05 & 5.26 & 5.04 & 5.20 & 5.05 \\ 
				$\gamma$ & 0.20 & 0.18 & 0.17 & 0.20 & 0.18 & 0.20 & 0.18 & 0.20 & 0.18 \\ 
				$\beta_1$ & 0.28 & 0.26 & 0.25 & 0.28 & 0.26 & 0.28 & 0.26 & 0.28 & 0.26 \\ 
				$\beta_2$ & 0.33 & 0.30 & 0.27 & 0.33 & 0.30 & 0.33 & 0.30 & 0.33 & 0.30 \\ 
				$\bm{\phi}$ & 0.19 & 0.18 & 0.18 & 0.17 & 0.18 & 0.19 & 0.18 & 0.17 & 0.18 \\ 
				$\bm{\rho}$ & 0.21 & 0.20 & 0.19 & 0.22 & 0.20 & 0.21 & 0.20 & 0.22 & 0.20 \\ 
				$\tilde{\bm{\rho}}$ & 0.11 & 0.11 & 0.10 & 0.11 & 0.11 & 0.11 & 0.11 & 0.11 & 0.11 \\ 
				\midrule
				$\alpha$ & 0.13 & 0.11 & 0.11 & 0.13 & 0.12 & 0.11 & 0.11 & 0.13 & 0.12 \\ 
				$\tau$ & 0.54 & 0.43 & 0.40 & 0.51 & 0.46 & 0.53 & 0.43 & 0.51 & 0.47 \\ 
				$\gamma$ & 0.11 & 0.10 & 0.09 & 0.11 & 0.10 & 0.11 & 0.10 & 0.11 & 0.10 \\ 
				$\beta_1$ & 0.15 & 0.14 & 0.13 & 0.15 & 0.14 & 0.15 & 0.14 & 0.15 & 0.14 \\ 
				$\beta_2$ & 0.18 & 0.15 & 0.15 & 0.18 & 0.16 & 0.18 & 0.15 & 0.18 & 0.16 \\ 
				$\bm{\phi}$ & 0.11 & 0.10 & 0.10 & 0.05 & 0.10 & 0.11 & 0.10 & 0.05 & 0.10 \\ 
				$\bm{\rho}$ & 0.04 & 0.04 & 0.04 & 0.04 & 0.04 & 0.04 & 0.04 & 0.04 & 0.04 \\ 
				$\tilde{\bm{\rho}}$ & 0.03 & 0.03 & 0.02 & 0.03 & 0.03 & 0.03 & 0.03 & 0.03 & 0.03 \\ 
				\midrule
				$\alpha$ & 0.50 & 0.49 & 0.49 & 0.50 & 0.49 & 0.50 & 0.49 & 0.50 & 0.49 \\ 
				$\tau$ & 14.69 & 14.33 & 13.81 & 14.77 & 14.32 & 14.69 & 14.33 & 14.77 & 14.32 \\ 
				$\gamma$ & 0.50 & 0.45 & 0.44 & 0.50 & 0.45 & 0.50 & 0.45 & 0.50 & 0.45 \\ 
				$\beta_1$ & 0.70 & 0.65 & 0.61 & 0.70 & 0.65 & 0.70 & 0.65 & 0.70 & 0.65 \\ 
				$\beta_2$ & 0.79 & 0.71 & 0.67 & 0.79 & 0.72 & 0.79 & 0.72 & 0.79 & 0.72 \\ 
				$\bm{\phi}$ & 0.56 & 0.52 & 0.50 & 0.55 & 0.52 & 0.57 & 0.52 & 0.55 & 0.51 \\ 
				$\bm{\rho}$ & 1.24 & 1.06 & 1.01 & 2.19 & 1.04 & 1.25 & 1.07 & 2.20 & 1.04 \\ 
				$\tilde{\bm{\rho}}$ & 0.50 & 0.48 & 0.44 & 0.50 & 0.47 & 0.50 & 0.48 & 0.50 & 0.47 \\ 
				\bottomrule
			\end{tabular}
		\end{tabular}
		\caption{Descriptive statistics of the absolute bias (Equation 6.3) across all simulated datasets for all parameters in Equation 3.2. Individual values of vector parameters, i.e. ($ \bm{\phi}, \bm{\rho}, \bm{\tilde{\rho}} $) are pooled together and treated as belonging to a single parameter}
		\label{tab:abs_relative_tot} 
	\end{table}
	
	\begin{table}[ht]
		\centering
		\begin{tabular}{lr}
			\begin{tabular}{c}
				\multirow{7}{*}{mean} \\ \\  \\   \\   \\   \\   \\  
				\multirow{10}{*}{sd} \\ \\  \\   \\   \\   \\   \\  
				\multirow{11}{*}{median} \\ \\  \\   \\   \\   \\   \\  
				\multirow{14}{*}{$ 2.5\% $} \\ \\  \\   \\   \\   \\   \\  
				\multirow{17}{*}{$ 97.5\% $} \\ \\  \\   \\   \\   \\   \\    
			\end{tabular} &
			\rowcolors{2}{Lightgray}{white}
			\begin{tabular}{lrrrrrrrrr}
				\toprule
				\multicolumn{10}{c}{\textbf{RMSE}}\\
				\hline
				& \multicolumn{5}{c}{\textbf{Data: Post}} & \multicolumn{4}{c}{\textbf{Data: Inverse}}\\
				& \multicolumn{3}{c}{\textbf{MCMC: Post}} & \multicolumn{2}{c}{\textbf{MCMC: Inverse}} &
				\multicolumn{2}{c}{\textbf{MCMC: Post}} & \multicolumn{2}{c}{\textbf{MCMC: Inverse}} \\
				\hline
				& $70$  & $100$  & $130$ & $70$& $100$& $70$  & $100$  & $70$ & $100$  \\ 
				\midrule
				$\alpha$ & 0.35 & 0.34 & 0.33 & 0.36 & 0.34 & 0.35 & 0.34 & 0.35 & 0.34 \\ 
				$\tau$ & 7.05 & 6.69 & 6.36 & 6.92 & 6.73 & 7.04 & 6.69 & 6.92 & 6.72 \\ 
				$\gamma$ & 0.27 & 0.25 & 0.23 & 0.27 & 0.25 & 0.27 & 0.25 & 0.27 & 0.25 \\ 
				$\beta_1$ & 0.38 & 0.35 & 0.33 & 0.38 & 0.35 & 0.38 & 0.35 & 0.38 & 0.35 \\ 
				$\beta_2$ & 0.44 & 0.40 & 0.37 & 0.44 & 0.40 & 0.44 & 0.40 & 0.44 & 0.40 \\ 
				$\bm{\phi}$ & 0.28 & 0.26 & 0.25 & 0.24 & 0.26 & 0.28 & 0.26 & 0.24 & 0.26 \\ 
				$\bm{\rho}$ & 0.39 & 0.35 & 0.33 & 0.47 & 0.34 & 0.40 & 0.35 & 0.47 & 0.34 \\ 
				$\tilde{\bm{\rho}}$ & 0.18 & 0.17 & 0.16 & 0.18 & 0.17 & 0.18 & 0.17 & 0.18 & 0.17 \\ 
				\midrule
				$\alpha$ & 0.10 & 0.10 & 0.10 & 0.10 & 0.10 & 0.10 & 0.10 & 0.10 & 0.10 \\ 
				$\tau$ & 3.75 & 3.80 & 3.78 & 3.82 & 3.78 & 3.75 & 3.80 & 3.82 & 3.78 \\ 
				$\gamma$ & 0.11 & 0.10 & 0.10 & 0.11 & 0.10 & 0.11 & 0.10 & 0.11 & 0.10 \\ 
				$\beta_1$ & 0.15 & 0.14 & 0.14 & 0.15 & 0.14 & 0.15 & 0.14 & 0.15 & 0.14 \\ 
				$\beta_2$ & 0.17 & 0.15 & 0.15 & 0.17 & 0.15 & 0.17 & 0.15 & 0.17 & 0.15 \\ 
				$\bm{\phi}$ & 0.14 & 0.12 & 0.12 & 0.14 & 0.12 & 0.14 & 0.12 & 0.14 & 0.12 \\ 
				$\bm{\rho}$ & 0.80 & 0.38 & 0.35 & 0.95 & 0.36 & 0.80 & 0.38 & 0.97 & 0.36 \\ 
				$\tilde{\bm{\rho}}$ & 0.15 & 0.15 & 0.13 & 0.15 & 0.14 & 0.15 & 0.15 & 0.15 & 0.14 \\ 
				\midrule
				$\alpha$ & 0.33 & 0.32 & 0.31 & 0.34 & 0.32 & 0.33 & 0.32 & 0.34 & 0.32 \\ 
				$\tau$ & 7.05 & 6.73 & 6.34 & 6.95 & 6.74 & 7.04 & 6.73 & 6.95 & 6.74 \\ 
				$\gamma$ & 0.25 & 0.22 & 0.21 & 0.25 & 0.22 & 0.25 & 0.22 & 0.25 & 0.22 \\ 
				$\beta_1$ & 0.35 & 0.32 & 0.30 & 0.35 & 0.32 & 0.35 & 0.32 & 0.35 & 0.32 \\ 
				$\beta_2$ & 0.41 & 0.36 & 0.34 & 0.41 & 0.36 & 0.41 & 0.36 & 0.41 & 0.36 \\ 
				$\bm{\phi}$ & 0.24 & 0.23 & 0.22 & 0.21 & 0.23 & 0.24 & 0.23 & 0.21 & 0.23 \\ 
				$\bm{\rho}$ & 0.26 & 0.24 & 0.24 & 0.26 & 0.24 & 0.26 & 0.25 & 0.26 & 0.24 \\ 
				$\tilde{\bm{\rho}}$ & 0.14 & 0.13 & 0.13 & 0.14 & 0.13 & 0.14 & 0.13 & 0.14 & 0.13 \\ 
				\midrule
				$\alpha$ & 0.19 & 0.16 & 0.15 & 0.19 & 0.16 & 0.17 & 0.16 & 0.18 & 0.16 \\ 
				$\tau$ & 0.70 & 0.56 & 0.51 & 0.65 & 0.59 & 0.69 & 0.56 & 0.65 & 0.60 \\ 
				$\gamma$ & 0.14 & 0.12 & 0.11 & 0.13 & 0.12 & 0.14 & 0.12 & 0.13 & 0.12 \\ 
				$\beta_1$ & 0.19 & 0.17 & 0.16 & 0.19 & 0.17 & 0.19 & 0.17 & 0.19 & 0.17 \\ 
				$\beta_2$ & 0.22 & 0.19 & 0.18 & 0.22 & 0.19 & 0.22 & 0.19 & 0.22 & 0.19 \\ 
				$\bm{\phi}$ & 0.14 & 0.13 & 0.13 & 0.06 & 0.13 & 0.14 & 0.13 & 0.06 & 0.13 \\ 
				$\bm{\rho}$ & 0.06 & 0.05 & 0.05 & 0.05 & 0.05 & 0.06 & 0.05 & 0.05 & 0.05 \\ 
				$\tilde{\bm{\rho}}$ & 0.03 & 0.03 & 0.03 & 0.03 & 0.03 & 0.03 & 0.03 & 0.03 & 0.03 \\ 
				\midrule
				$\alpha$ & 0.56 & 0.55 & 0.55 & 0.57 & 0.55 & 0.56 & 0.55 & 0.57 & 0.55 \\ 
				$\tau$ & 15.66 & 15.36 & 14.94 & 15.73 & 15.33 & 15.65 & 15.36 & 15.73 & 15.32 \\ 
				$\gamma$ & 0.56 & 0.51 & 0.49 & 0.55 & 0.51 & 0.56 & 0.51 & 0.55 & 0.51 \\ 
				$\beta_1$ & 0.78 & 0.73 & 0.68 & 0.79 & 0.72 & 0.78 & 0.72 & 0.79 & 0.72 \\ 
				$\beta_2$ & 0.88 & 0.80 & 0.75 & 0.87 & 0.80 & 0.87 & 0.80 & 0.87 & 0.80 \\ 
				$\bm{\phi}$ & 0.63 & 0.58 & 0.56 & 0.60 & 0.58 & 0.64 & 0.58 & 0.60 & 0.58 \\ 
				$\bm{\rho}$ & 1.47 & 1.25 & 1.19 & 2.29 & 1.22 & 1.49 & 1.26 & 2.30 & 1.22 \\ 
				$\tilde{\bm{\rho}}$ & 0.58 & 0.56 & 0.51 & 0.58 & 0.55 & 0.58 & 0.56 & 0.59 & 0.55 \\ 
				\bottomrule
			\end{tabular}
		\end{tabular}
		\caption{Descriptive statistics of the RMSE (Equation 6.3) across all simulated datasets for all parameters in Equation 3.2. Individual values of vector parameters, i.e. ($ \bm{\phi}, \bm{\rho}, \bm{\tilde{\rho}} $) are pooled together and treated as belonging to a single parameter}
		\label{tab:rmse_relative_tot} 
	\end{table}
	
	\clearpage
	\subsection{Calibration plots} 
	\label{subsec:cal_plots}
	In this Section we report the Q-Q plot of the standardised, i.e. divided by their max $ 5001 $, SBC (see Section 5.1) obtained rank statistics against the ordered rank statistics of a discrete uniform $ [0, 1] $ distribution $ r / (L + 1) $ with $ r = 1, ..., L $ with $ L = 5001 $. Red lines indicate the $ 0.025$ and $ 0.975 $ quantiles of the ordered rank statistic, which correspond to the 95\% expected variation intervals (See also Table 1 for a more compact summary). Quantities on both axes are elevated to the $ 20^{th} $ power for improved readability.

	\clearpage	
	
	\begin{sidewaysfigure}
			\includegraphics[width=\wdIm\linewidth]{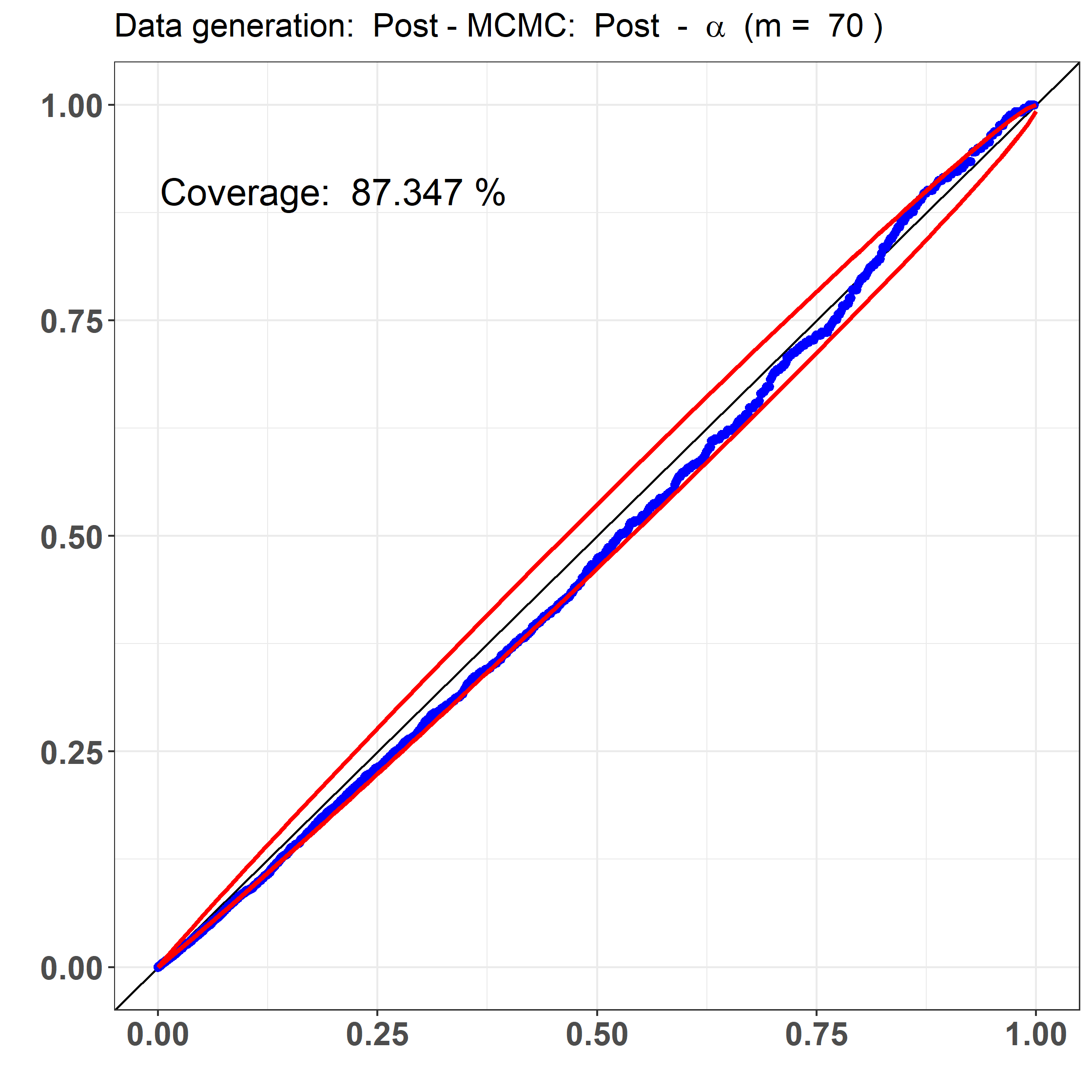}%
			\includegraphics[width=\wdIm\linewidth]{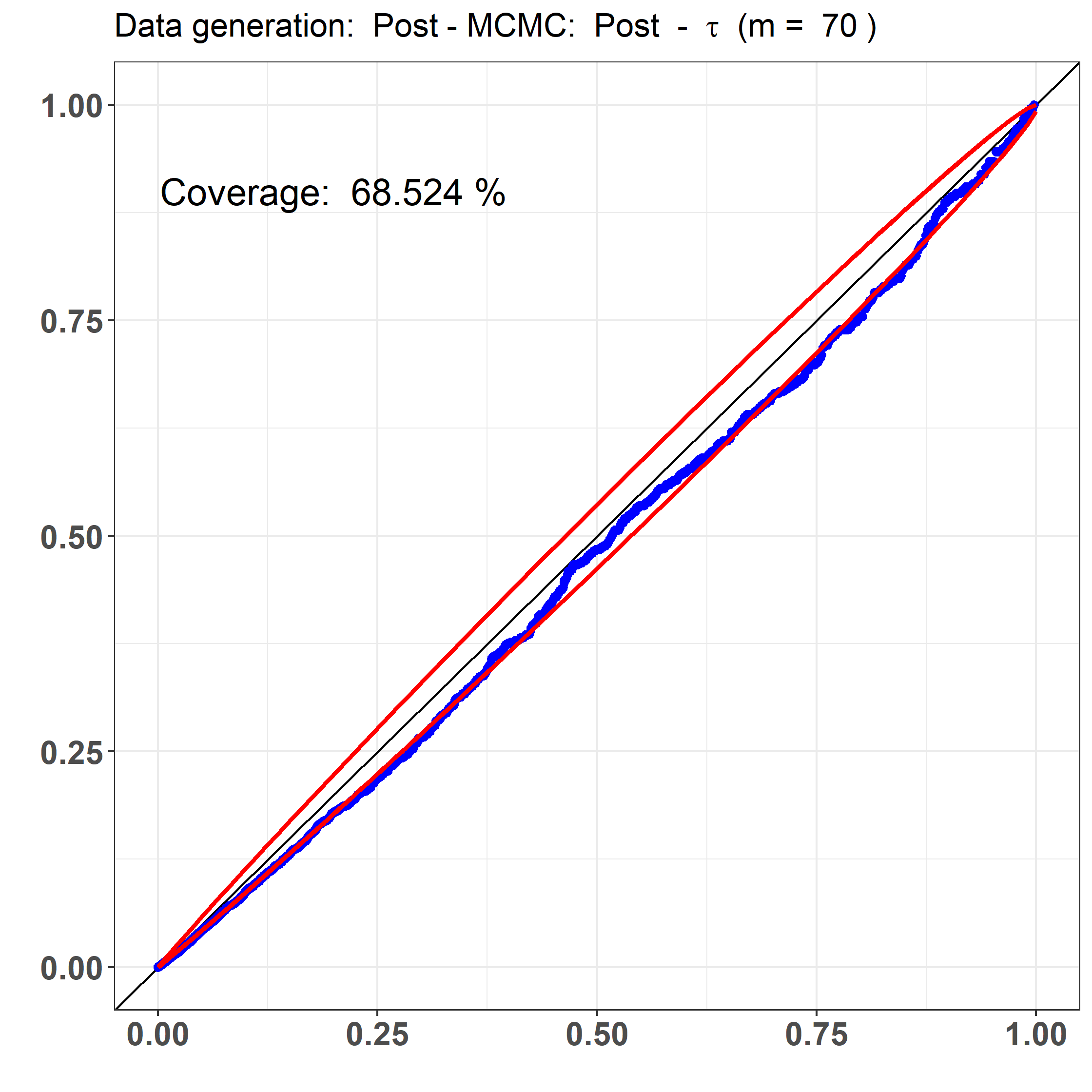}%
			\includegraphics[width=\wdIm\linewidth]{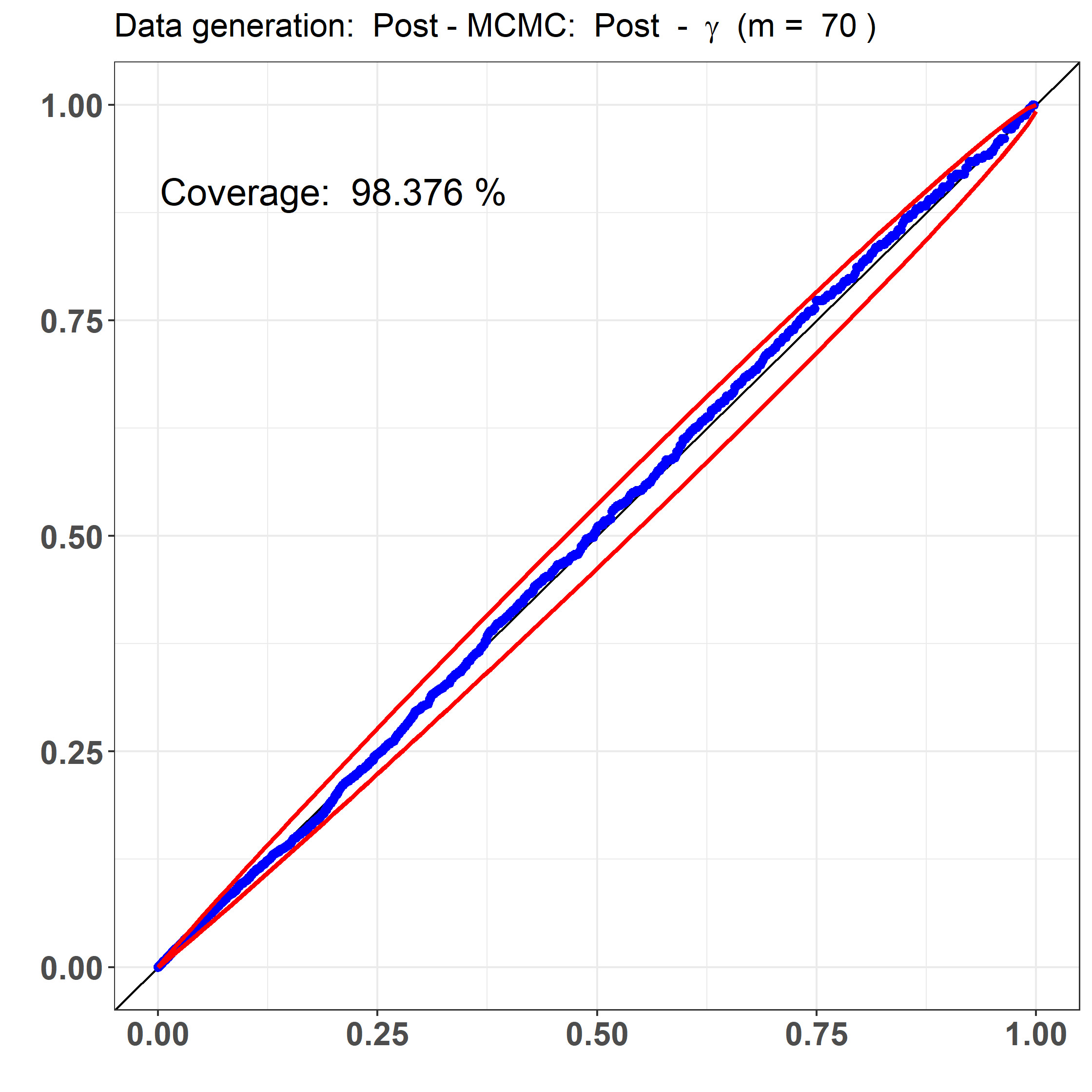} %
			\includegraphics[width=\wdIm\linewidth]{fig/ch5/sbc_post_beta[1]_70}\\%
			\includegraphics[width=\wdIm\linewidth]{fig/ch5/sbc_post_beta[2]_70}%
			\includegraphics[width=\wdIm\linewidth]{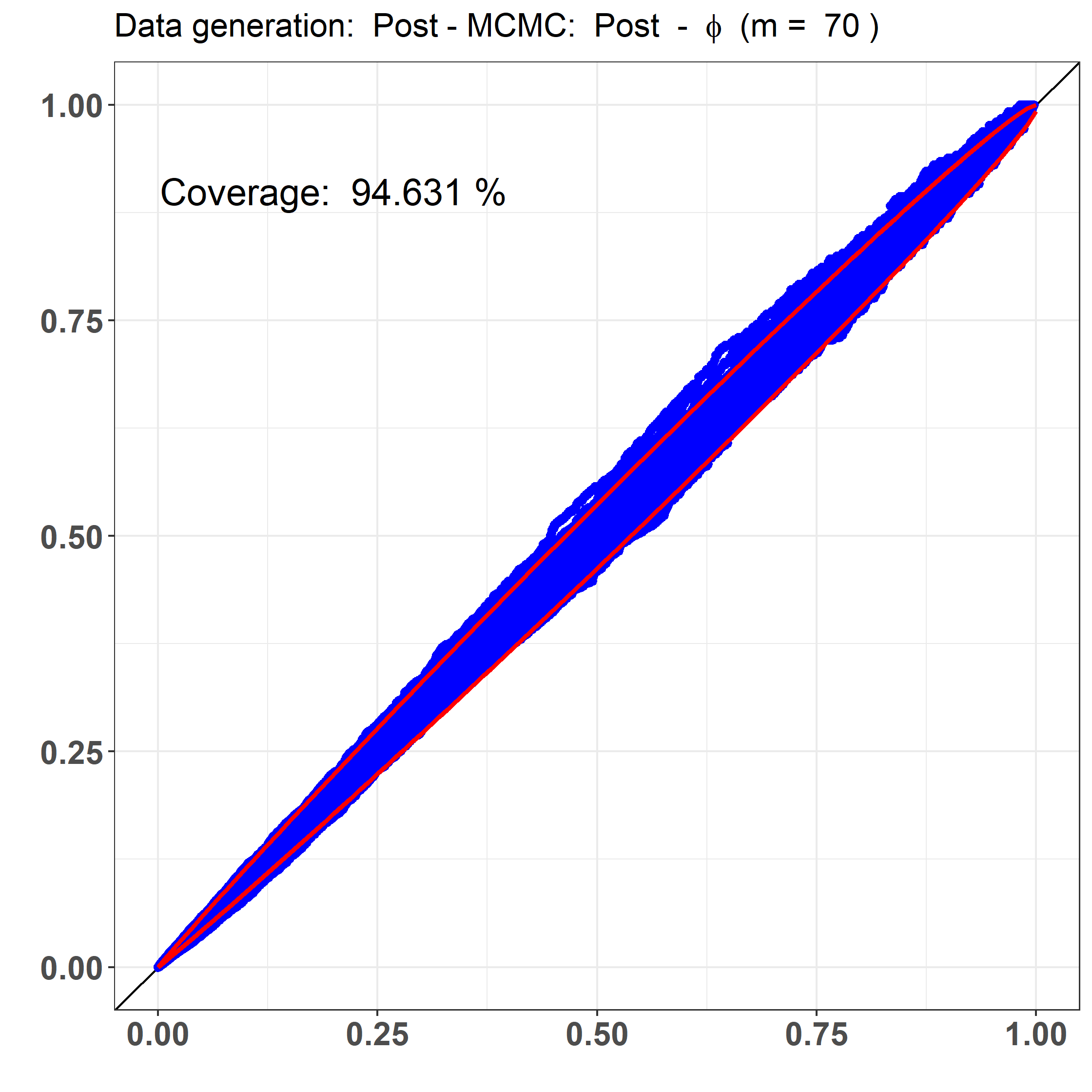}%
			\includegraphics[width=\wdIm\linewidth]{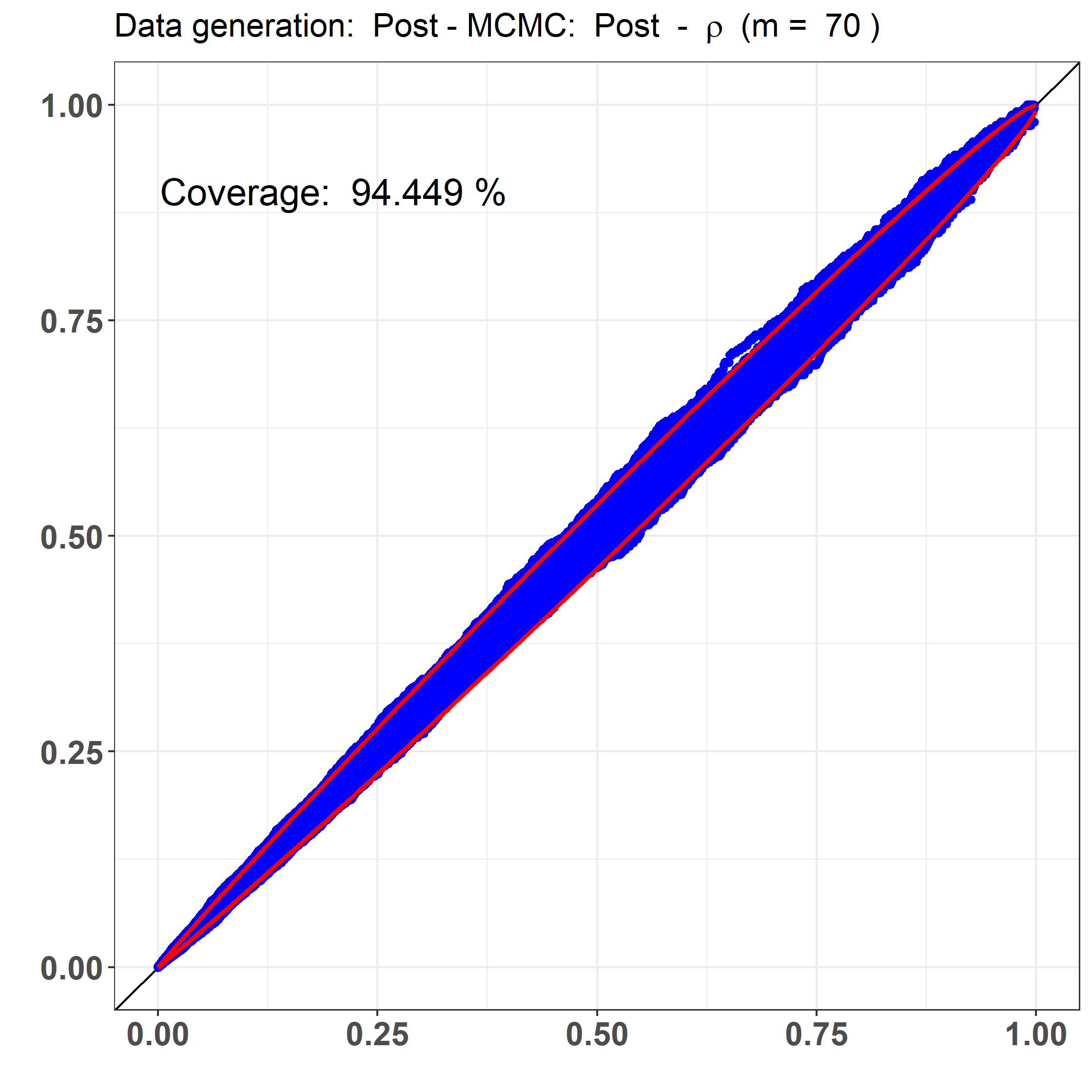}%
			\includegraphics[width=\wdIm\linewidth]{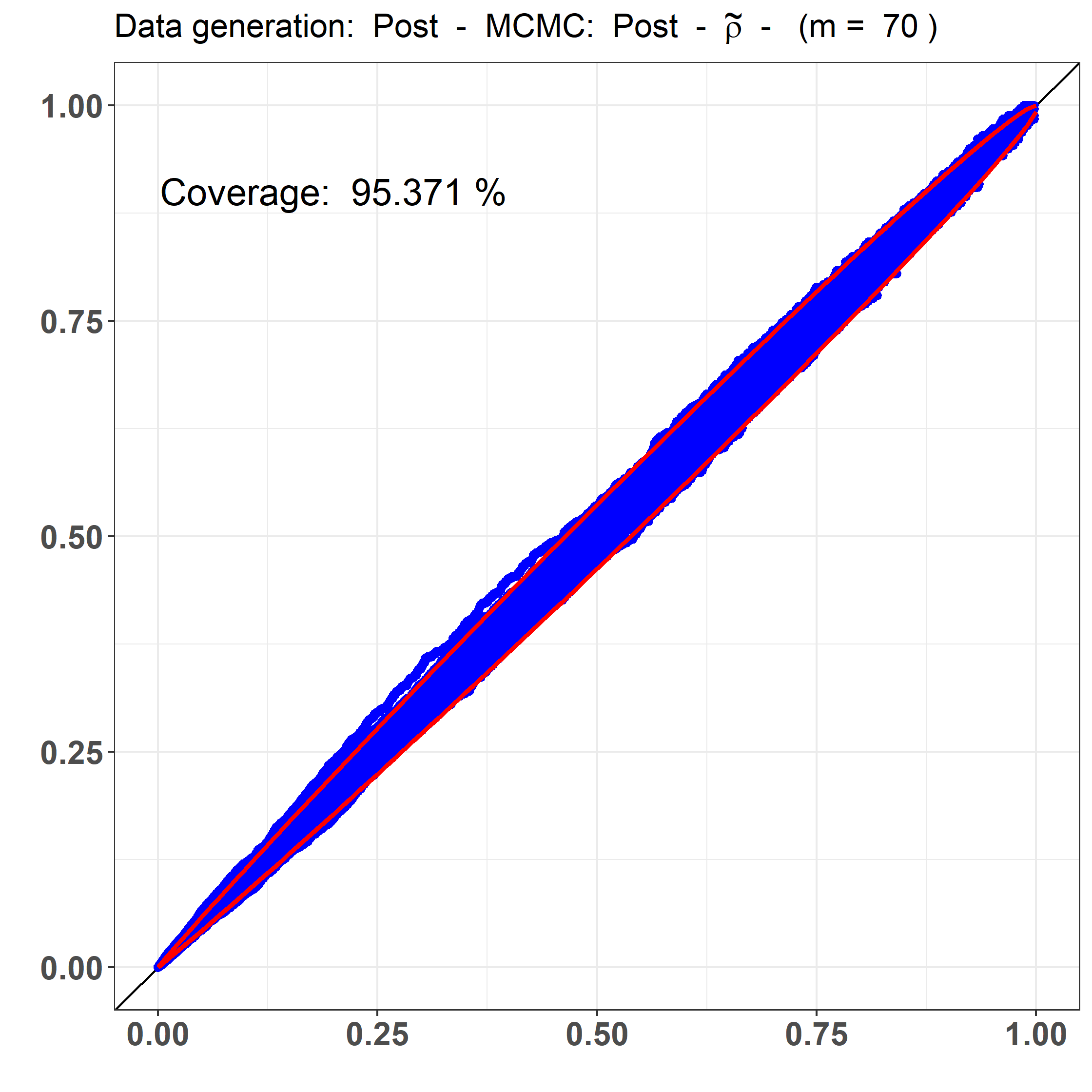}%
		\caption{Data generation: Post - MCMC: Post - $ m = 70 $}
	\end{sidewaysfigure}
	\newpage
	
	\begin{sidewaysfigure}
			\includegraphics[width=\wdIm\linewidth]{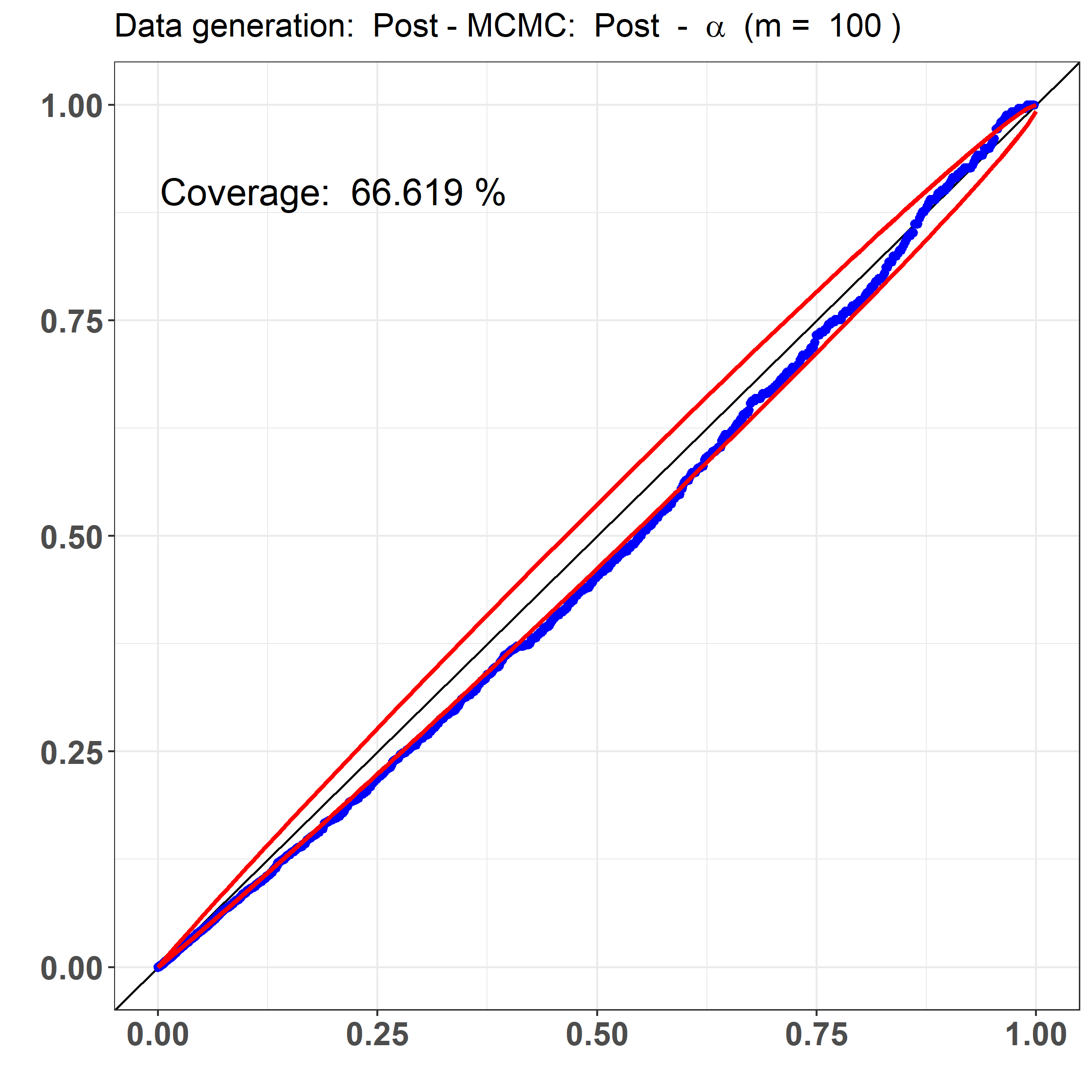}%
			\includegraphics[width=\wdIm\linewidth]{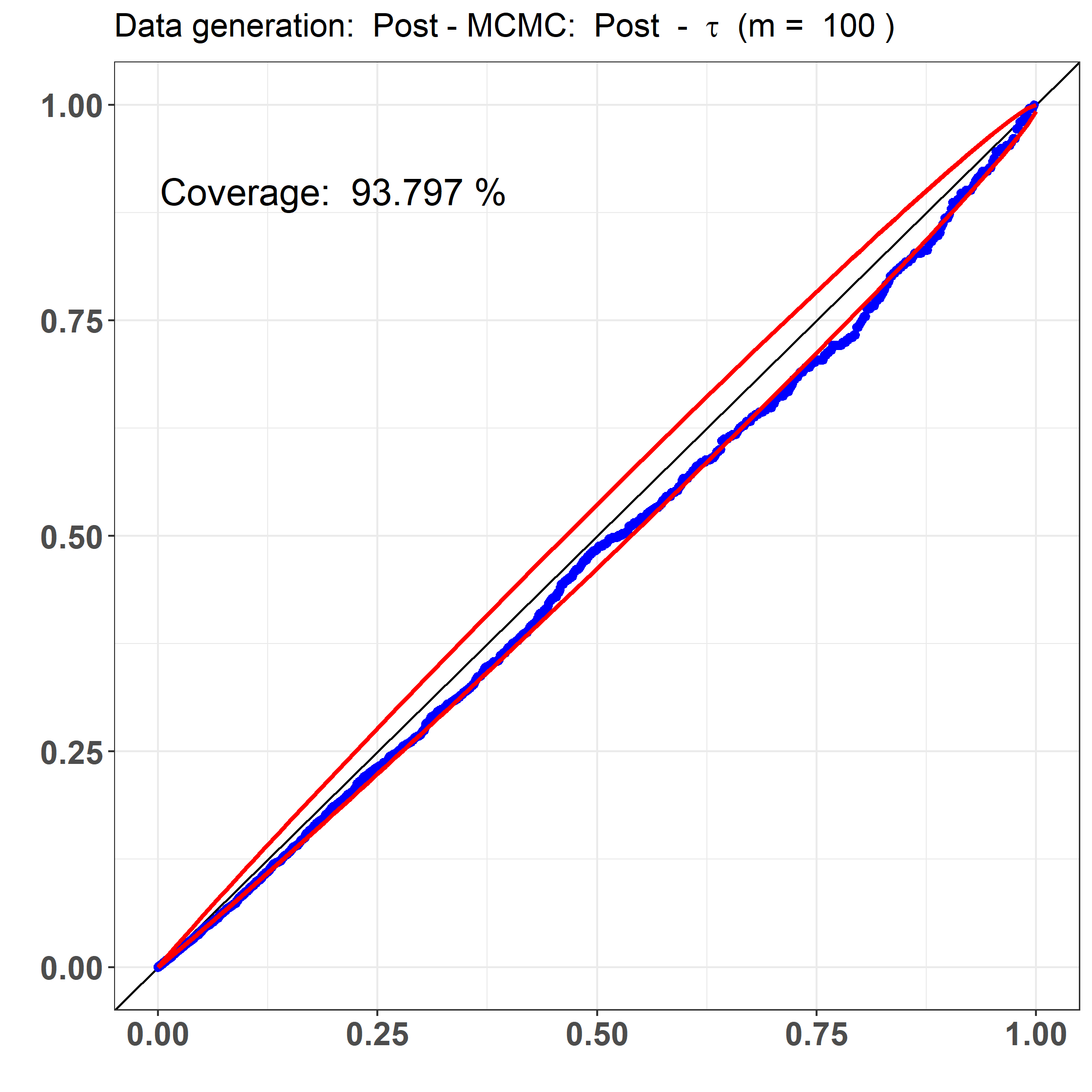}%
			\includegraphics[width=\wdIm\linewidth]{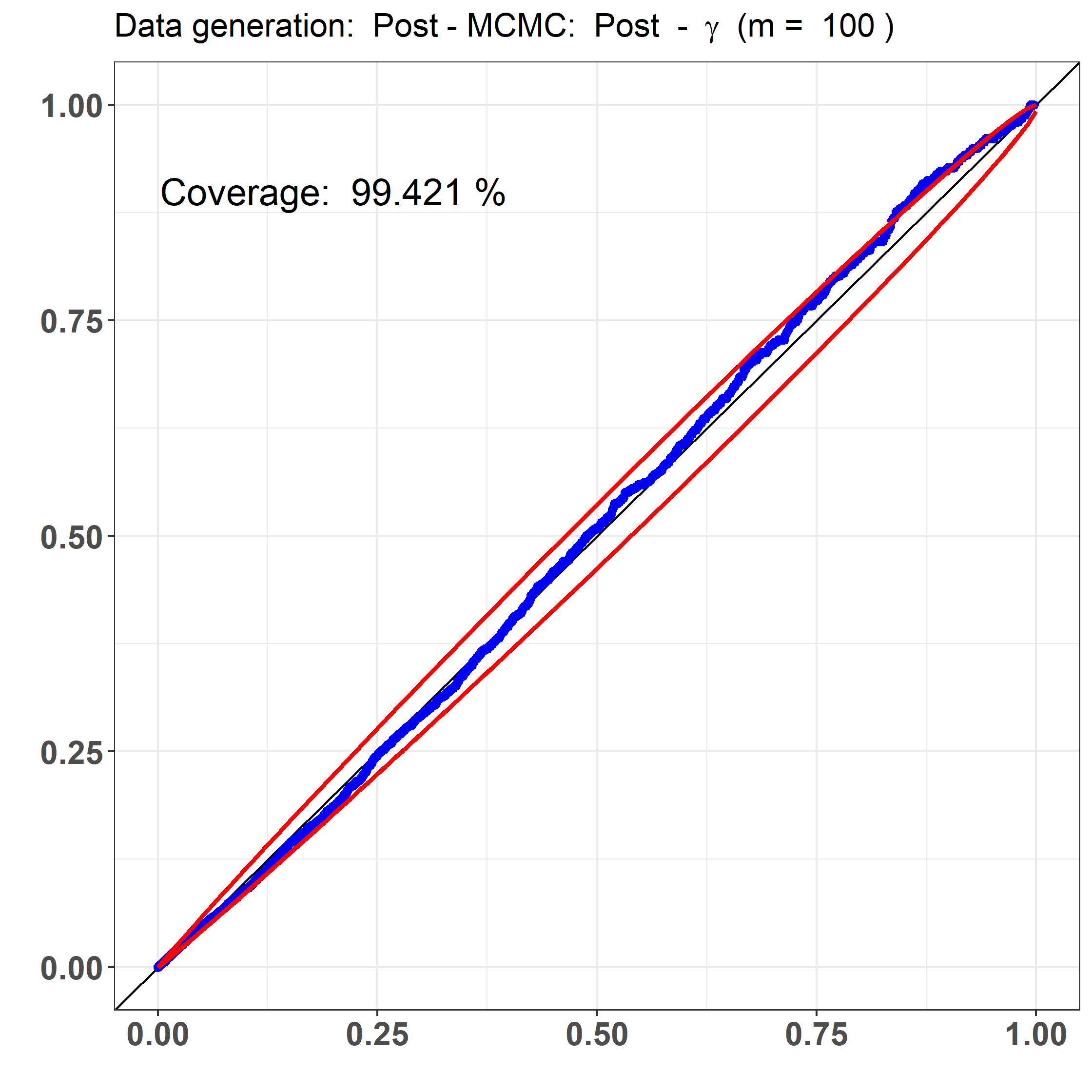} %
			\includegraphics[width=\wdIm\linewidth]{fig/ch5/sbc_post_beta[1]_100}\\%
			\includegraphics[width=\wdIm\linewidth]{fig/ch5/sbc_post_beta[2]_100}%
			\includegraphics[width=\wdIm\linewidth]{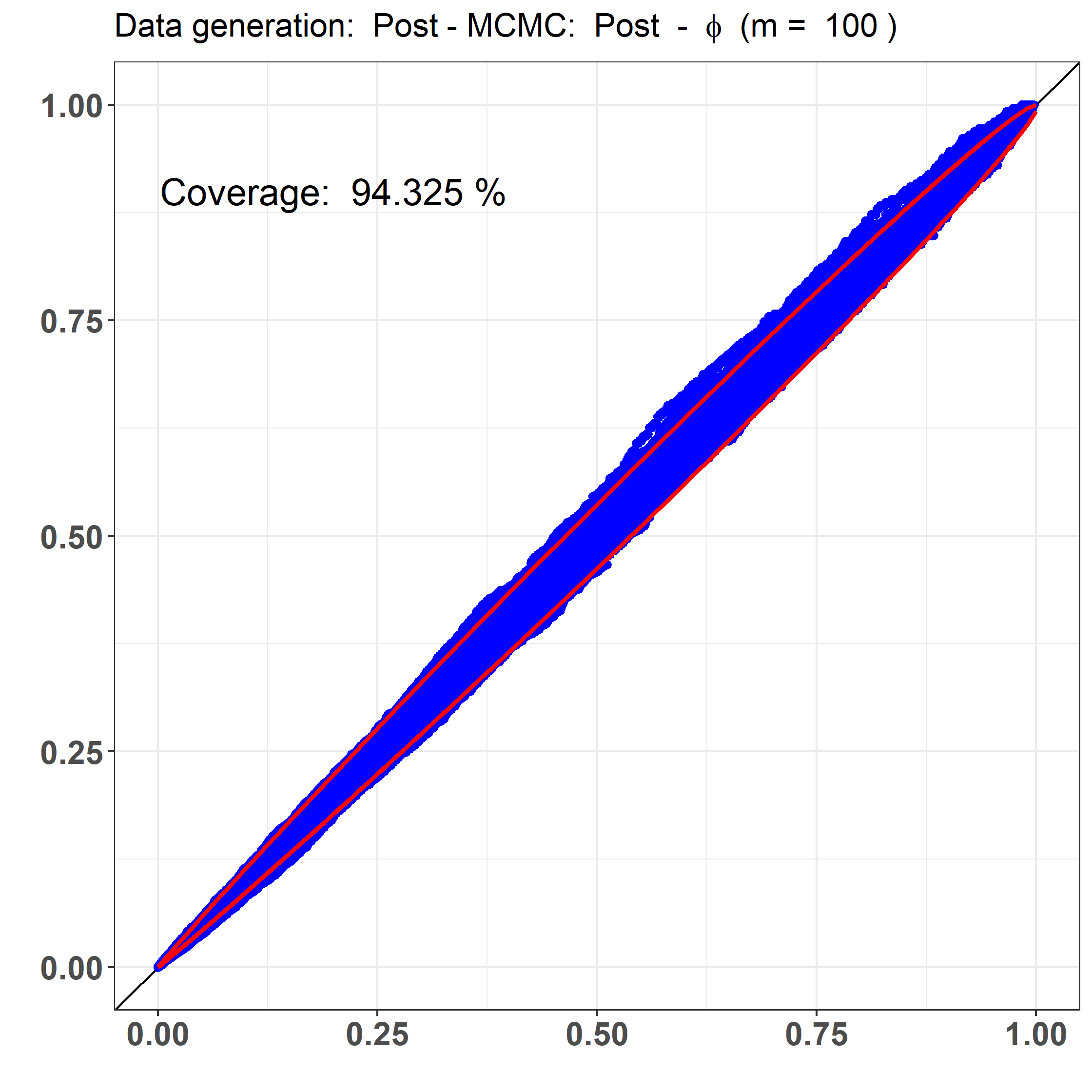}%
			\includegraphics[width=\wdIm\linewidth]{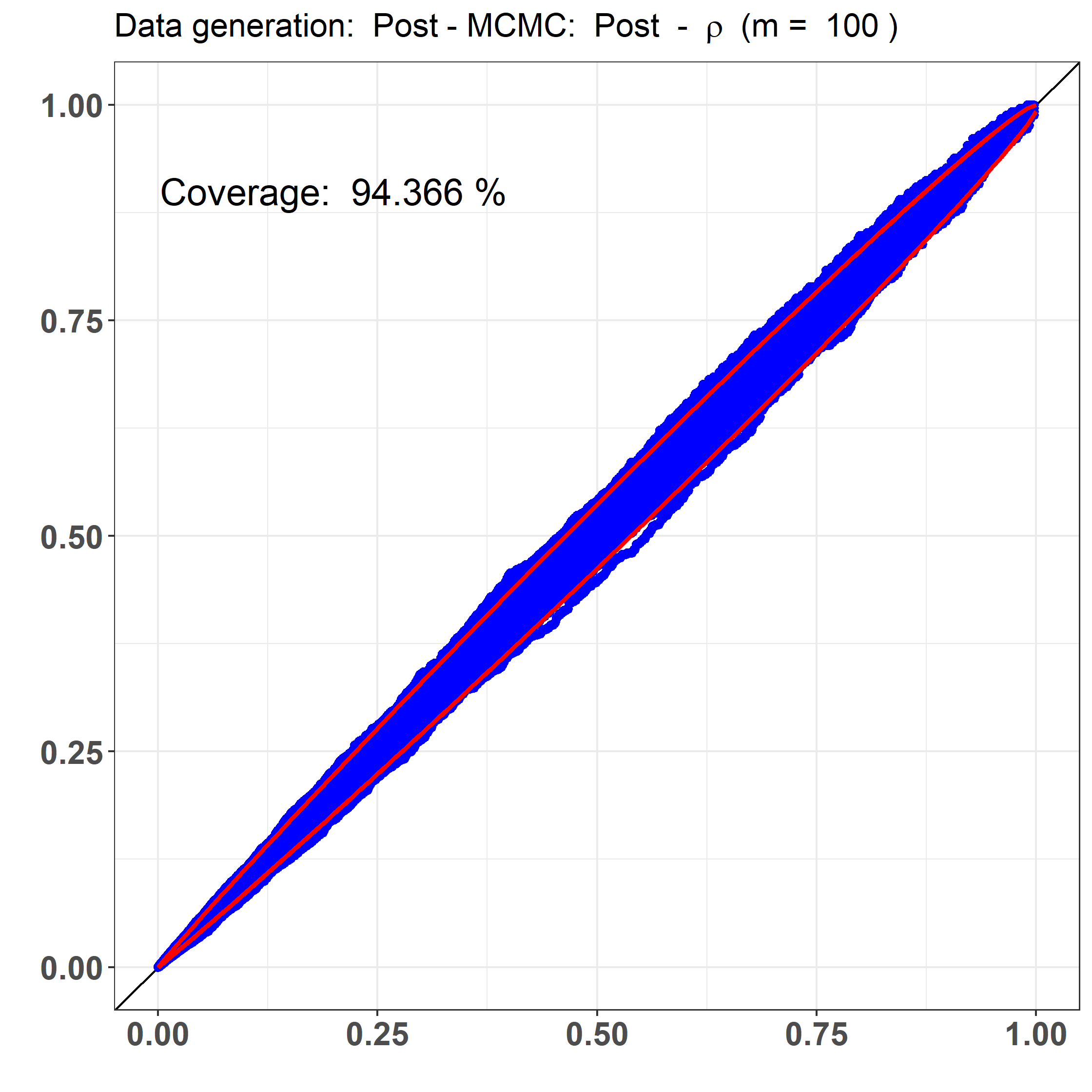}%
			\includegraphics[width=\wdIm\linewidth]{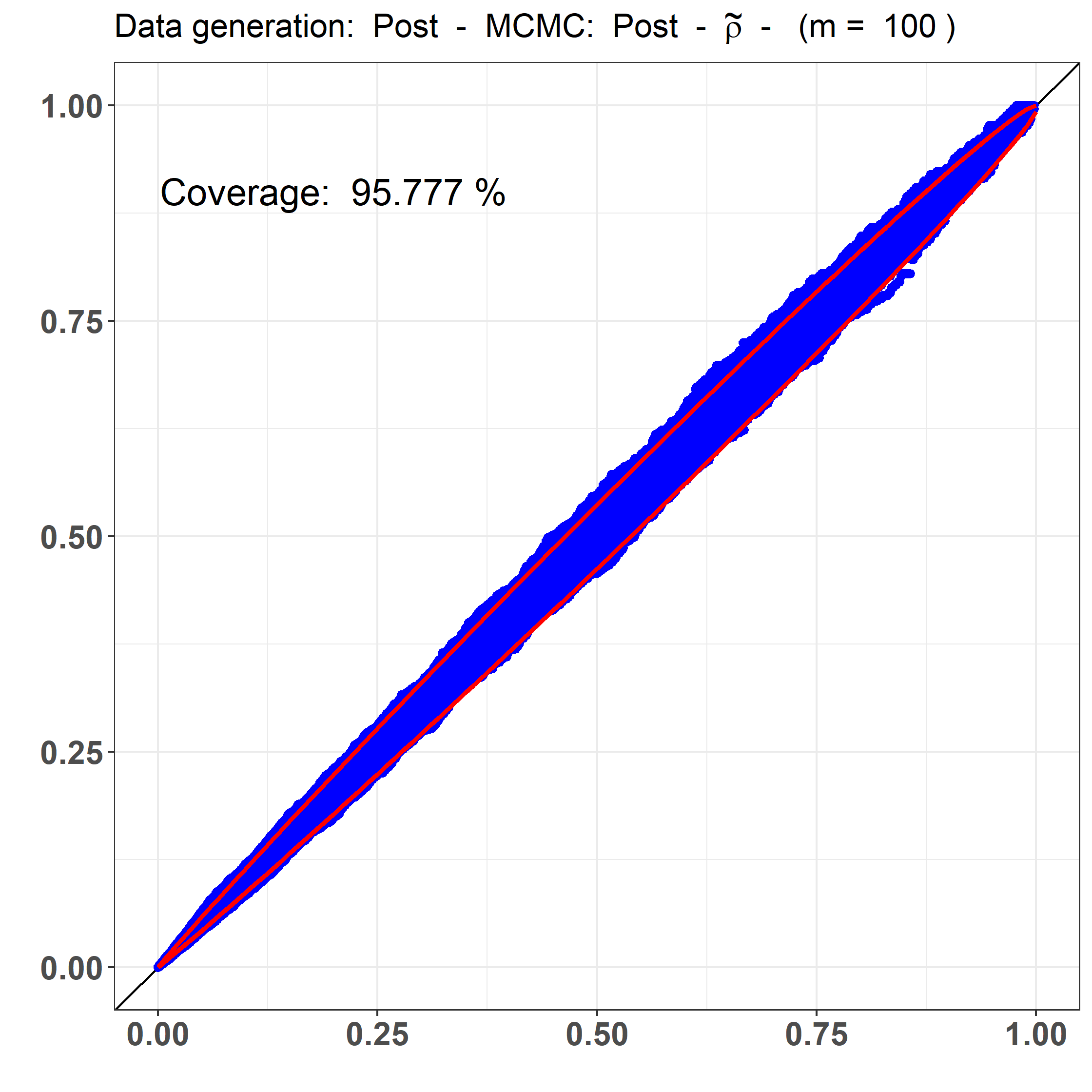}%
		\caption{Data generation: Post - MCMC: Post - $ m = 100 $}
	\end{sidewaysfigure}
	\newpage
	
	\begin{sidewaysfigure}
			\includegraphics[width=\wdIm\linewidth]{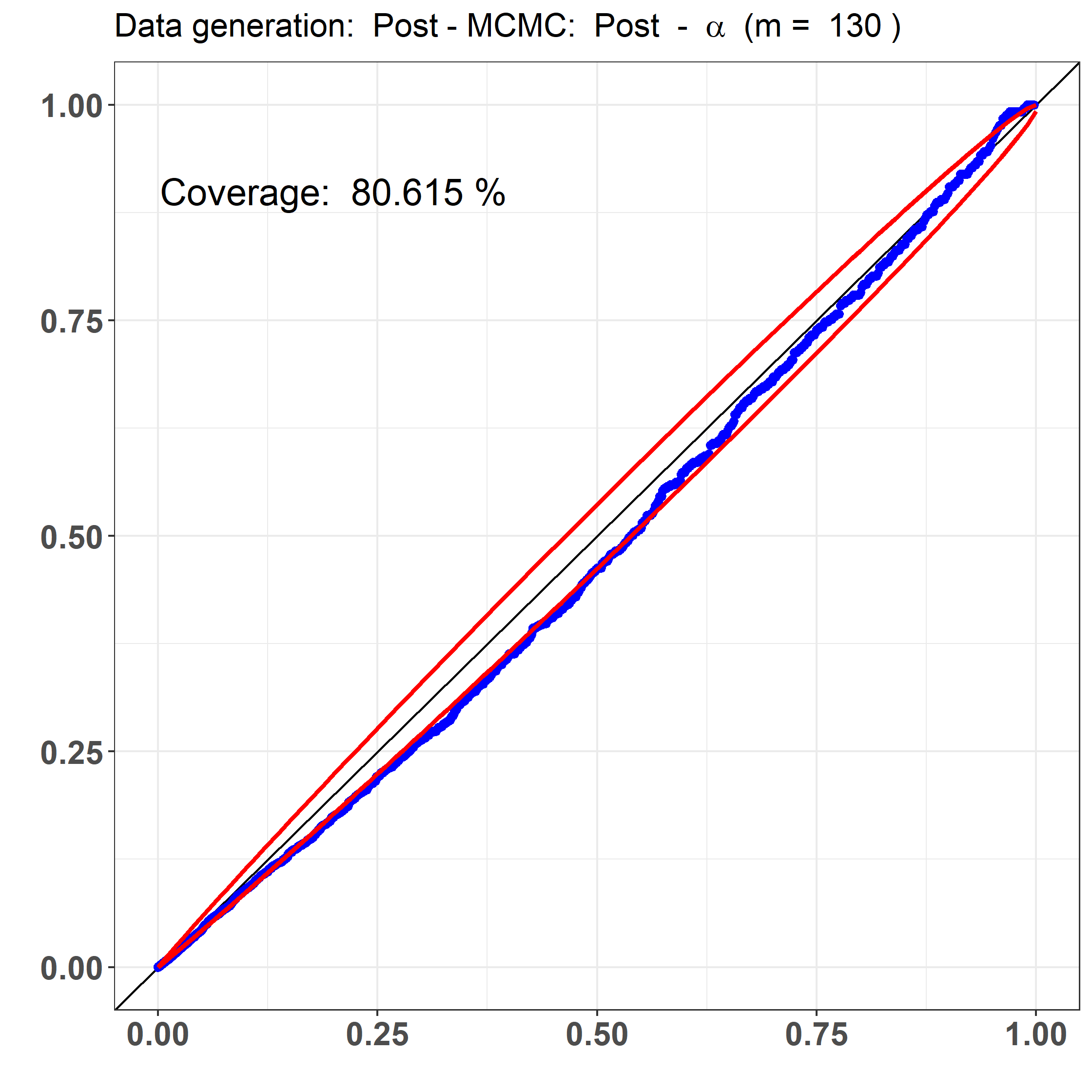}%
			\includegraphics[width=\wdIm\linewidth]{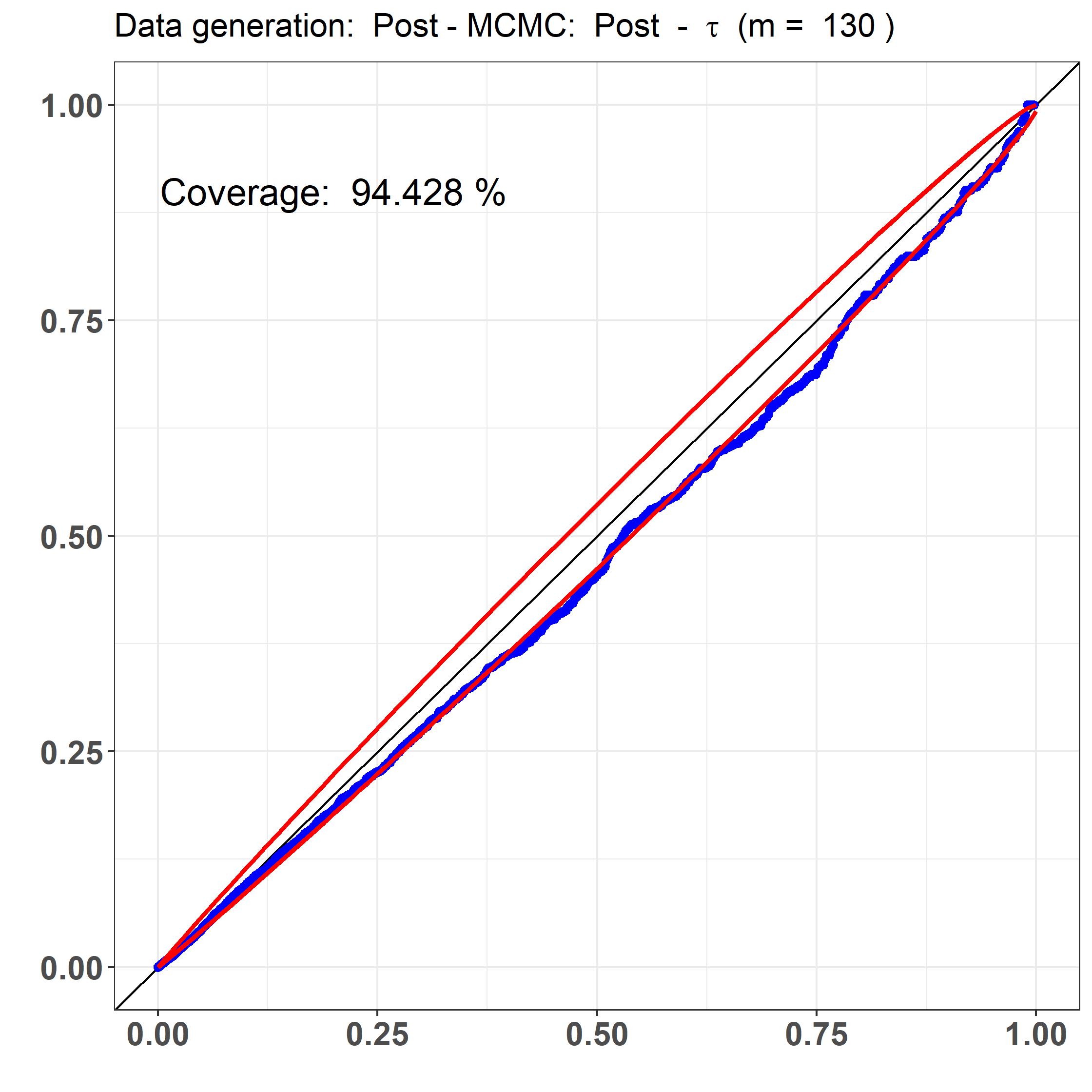}%
			\includegraphics[width=\wdIm\linewidth]{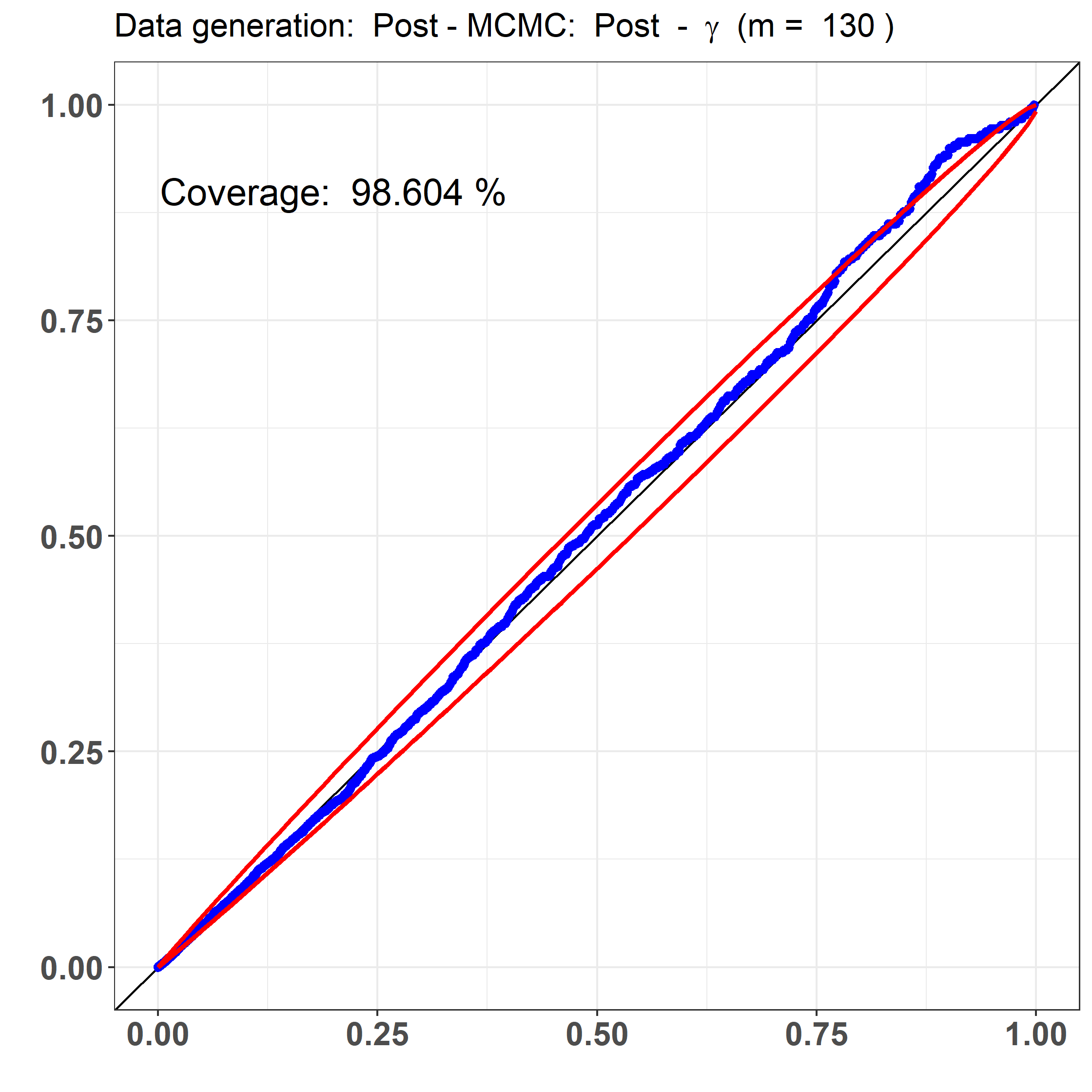} %
			\includegraphics[width=\wdIm\linewidth]{fig/ch5/sbc_post_beta[1]_130}\\%
			\includegraphics[width=\wdIm\linewidth]{fig/ch5/sbc_post_beta[2]_130}%
			\includegraphics[width=\wdIm\linewidth]{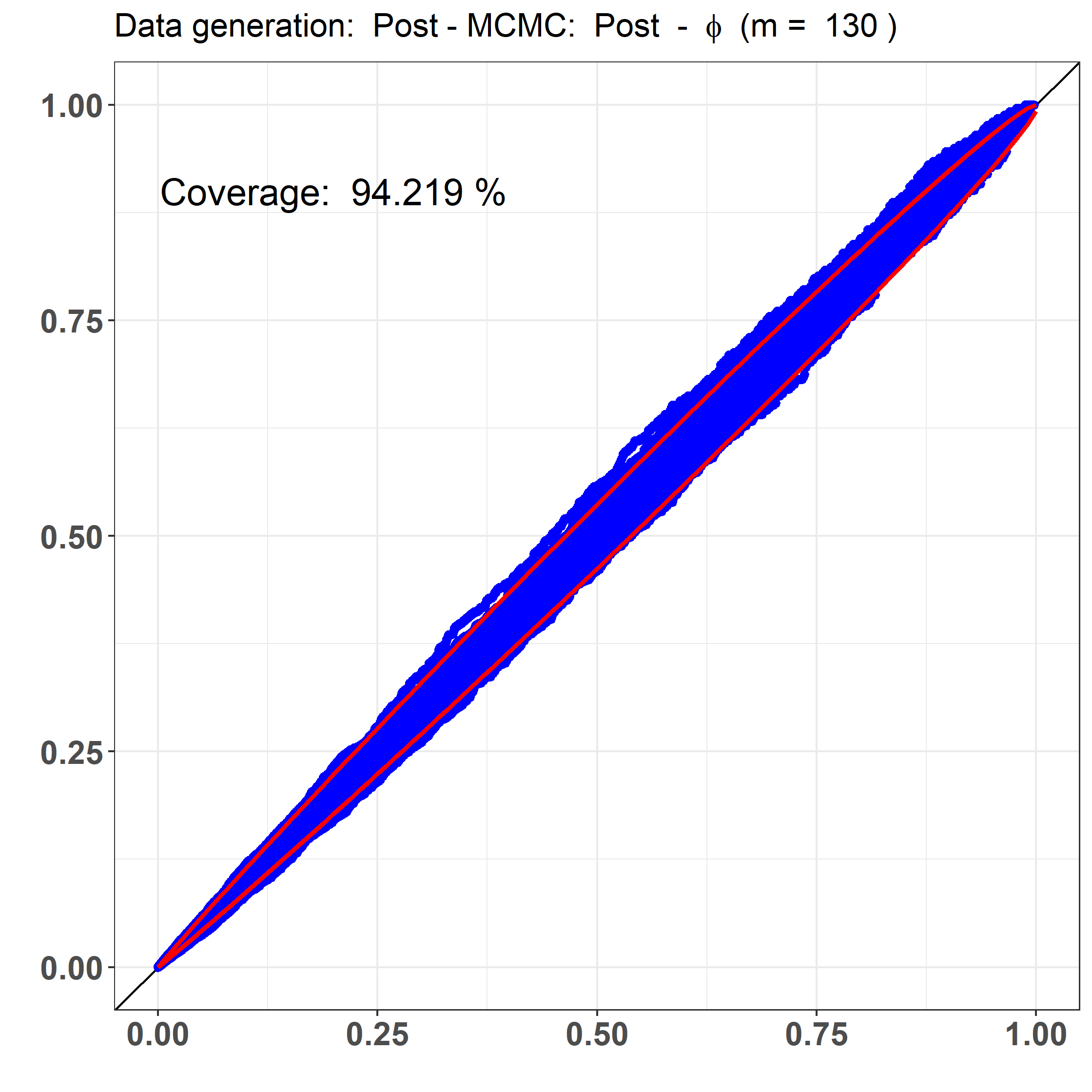}%
			\includegraphics[width=\wdIm\linewidth]{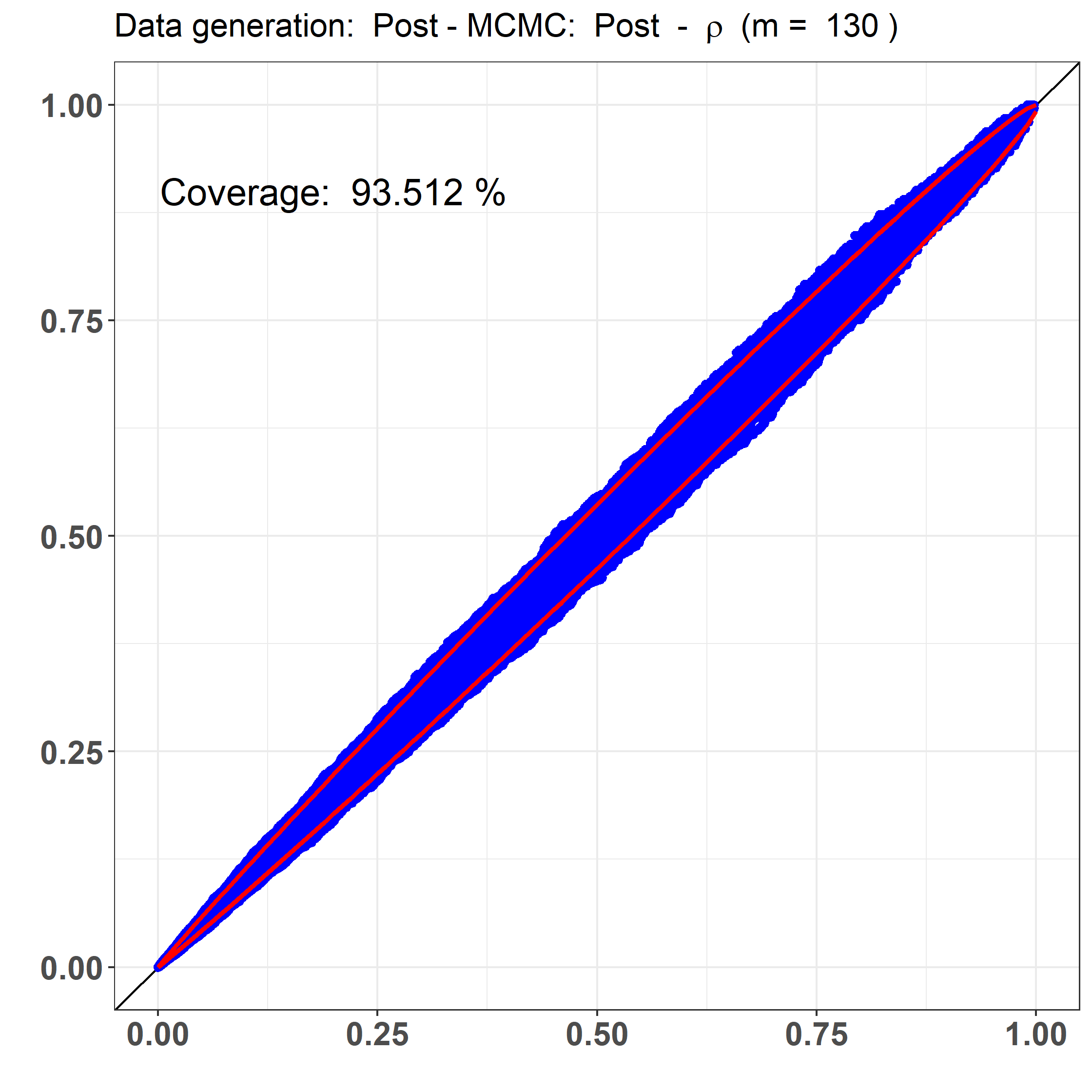}%
			\includegraphics[width=\wdIm\linewidth]{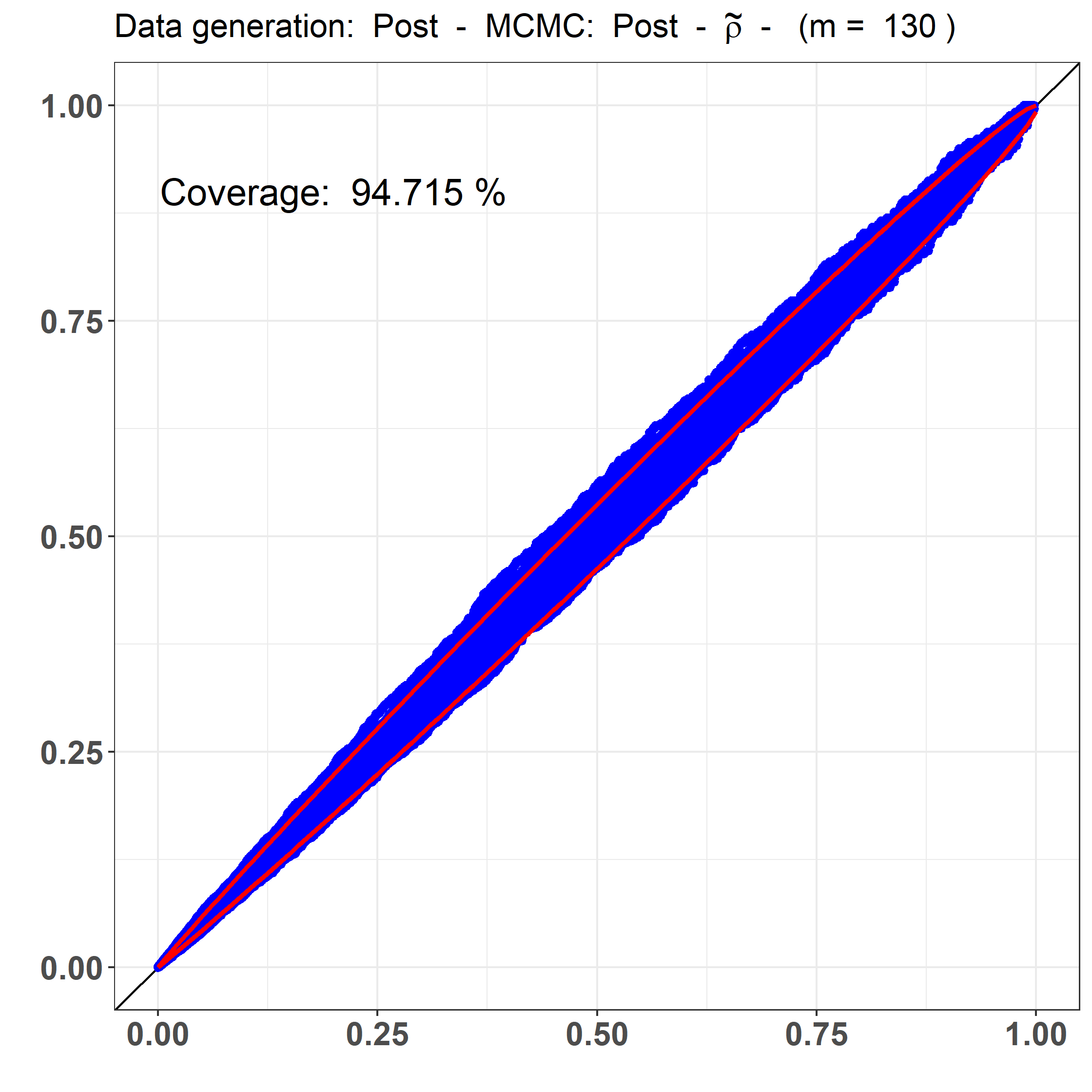}%
		\caption{Data generation: Post - MCMC: Post - $ m = 130 $}
	\end{sidewaysfigure}
	\newpage
	
	\begin{sidewaysfigure}
			\includegraphics[width=\wdIm\linewidth]{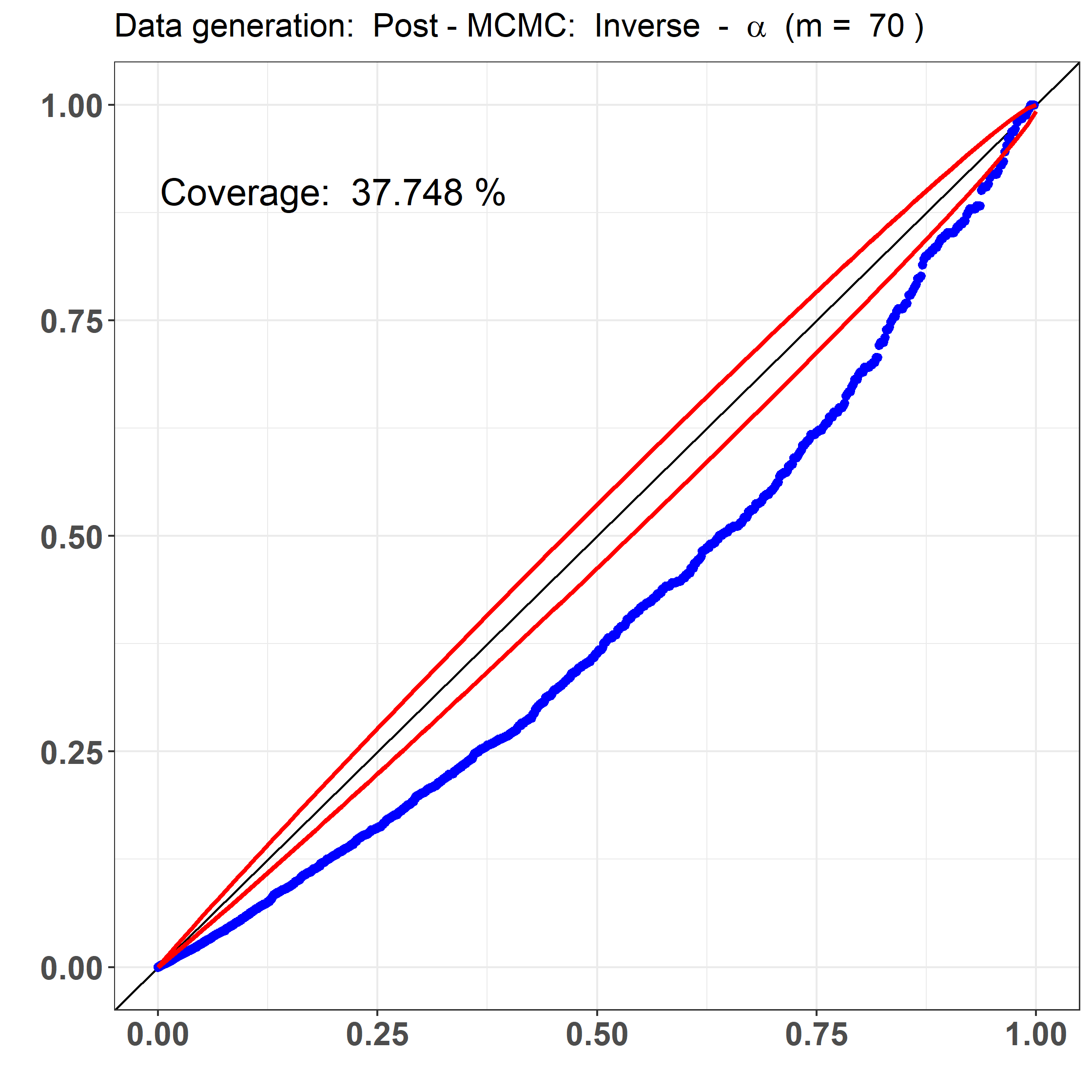}%
			\includegraphics[width=\wdIm\linewidth]{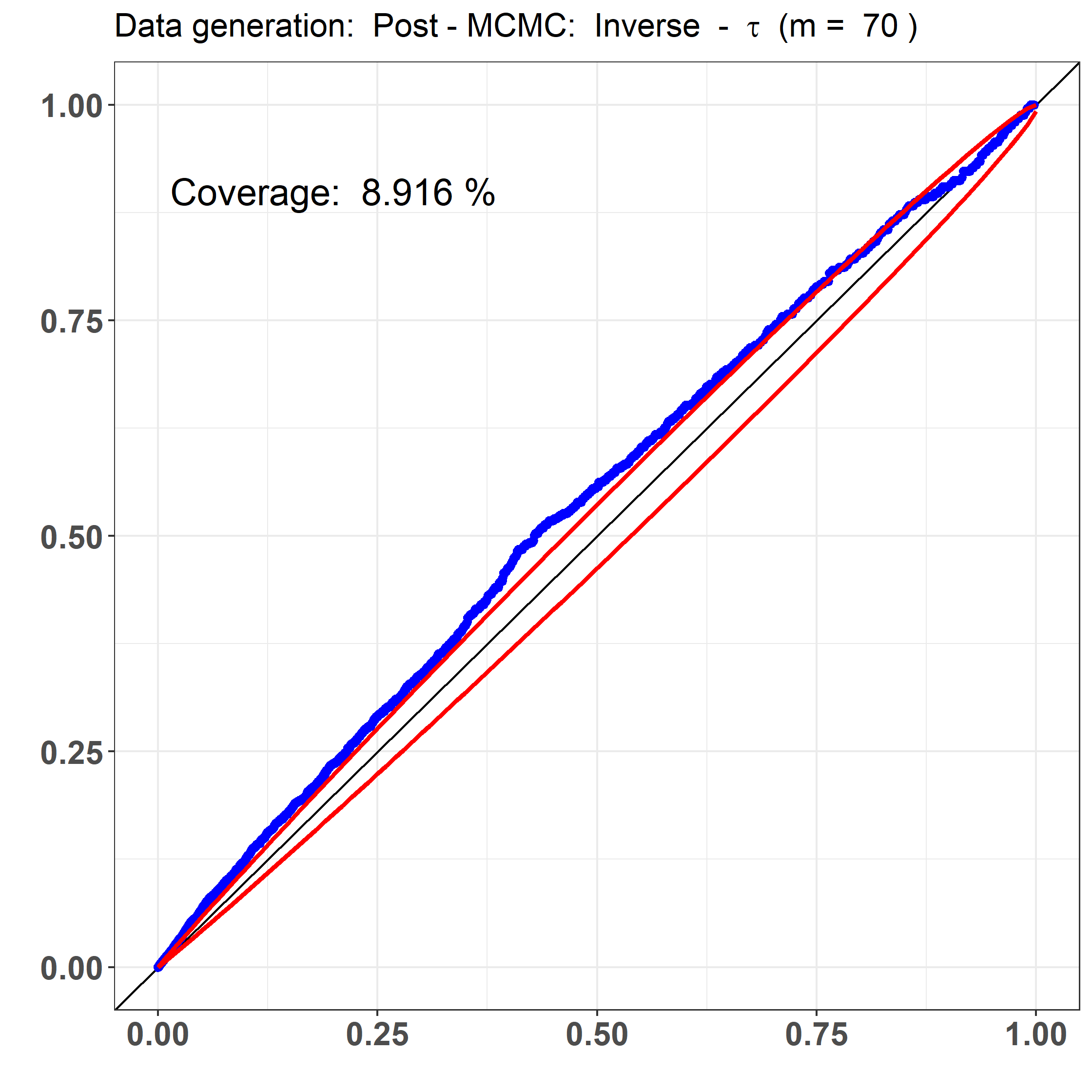}%
			\includegraphics[width=\wdIm\linewidth]{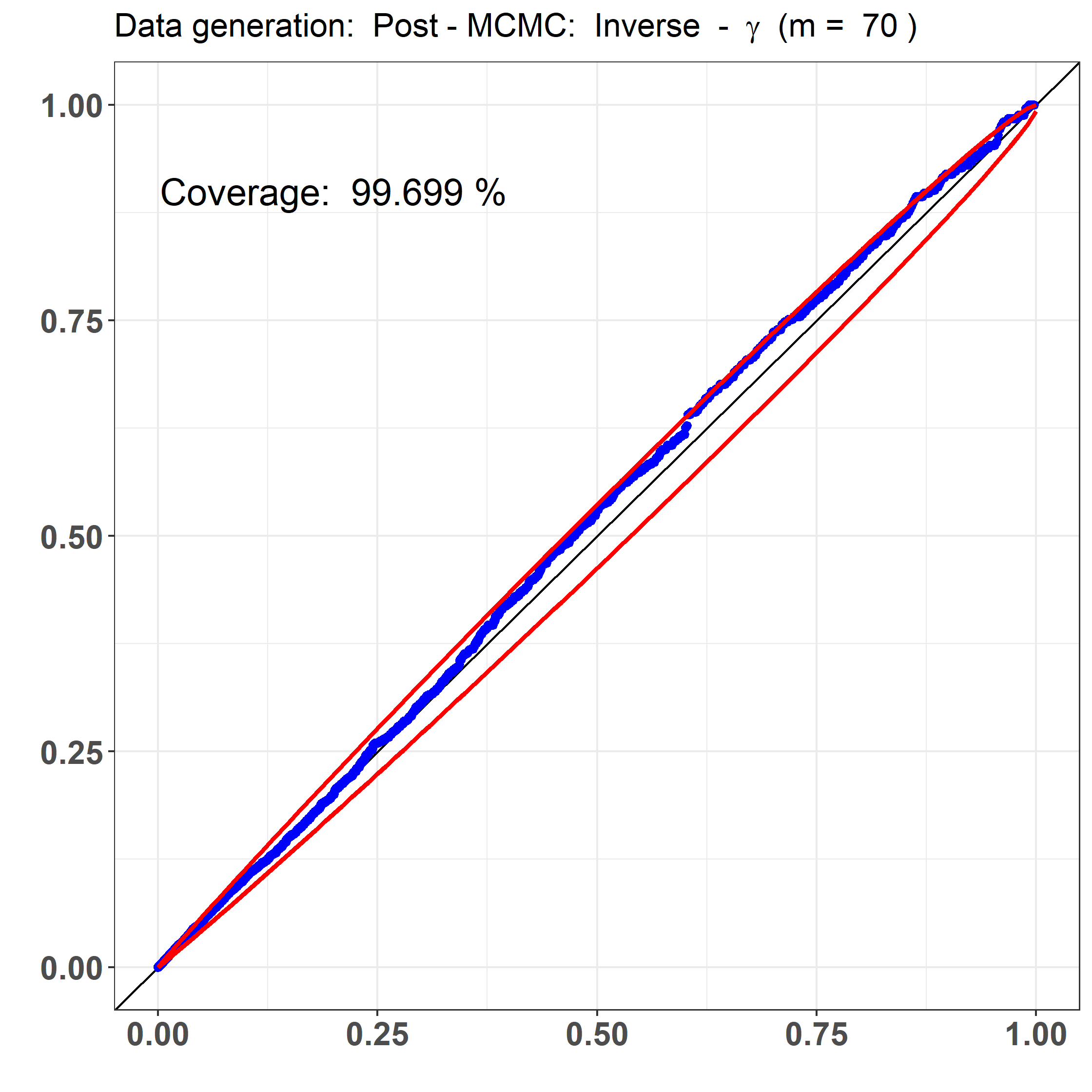} %
			\includegraphics[width=\wdIm\linewidth]{fig/ch5/sbc_inv_beta[1]_70}\\%
			\includegraphics[width=\wdIm\linewidth]{fig/ch5/sbc_inv_beta[2]_70}%
			\includegraphics[width=\wdIm\linewidth]{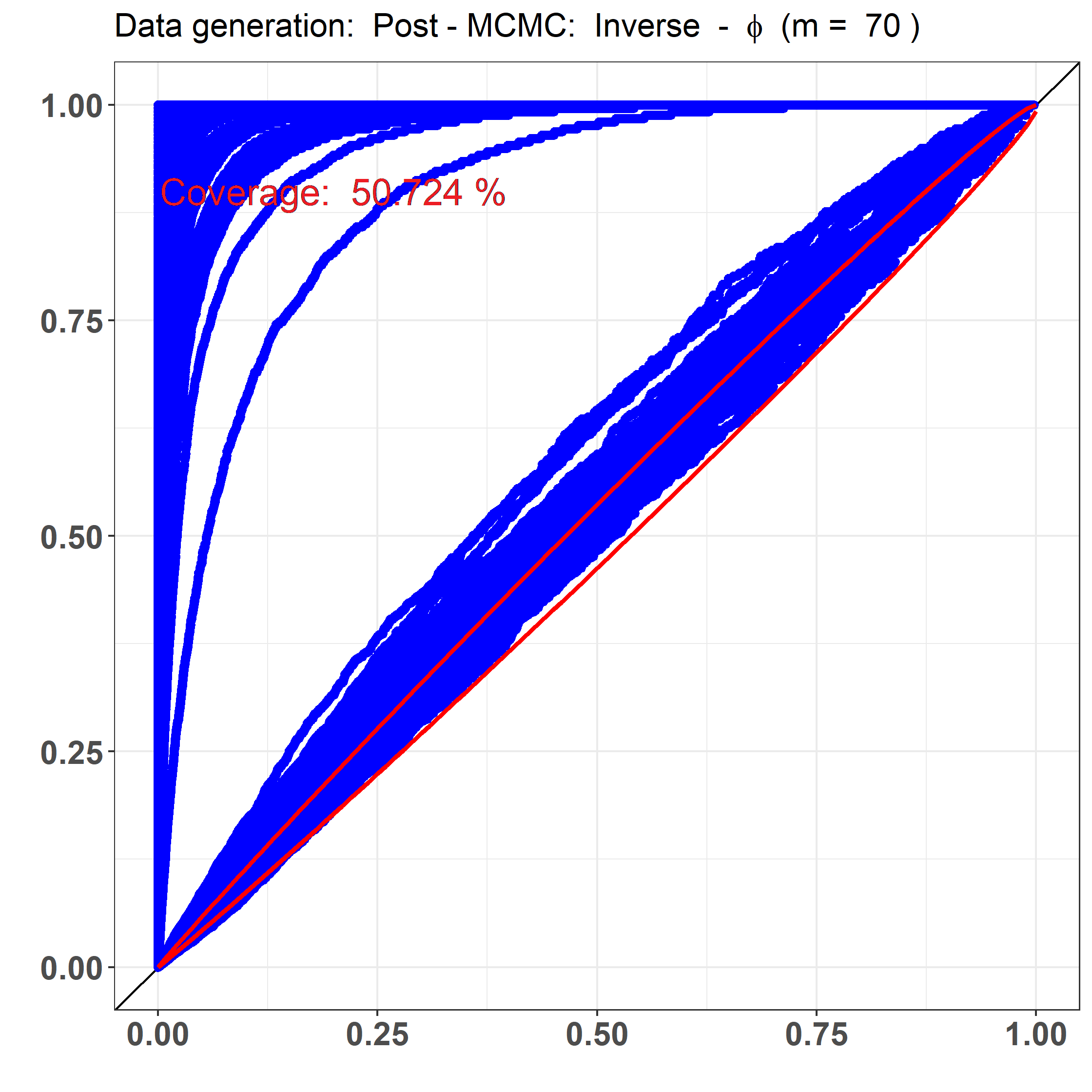}%
			\includegraphics[width=\wdIm\linewidth]{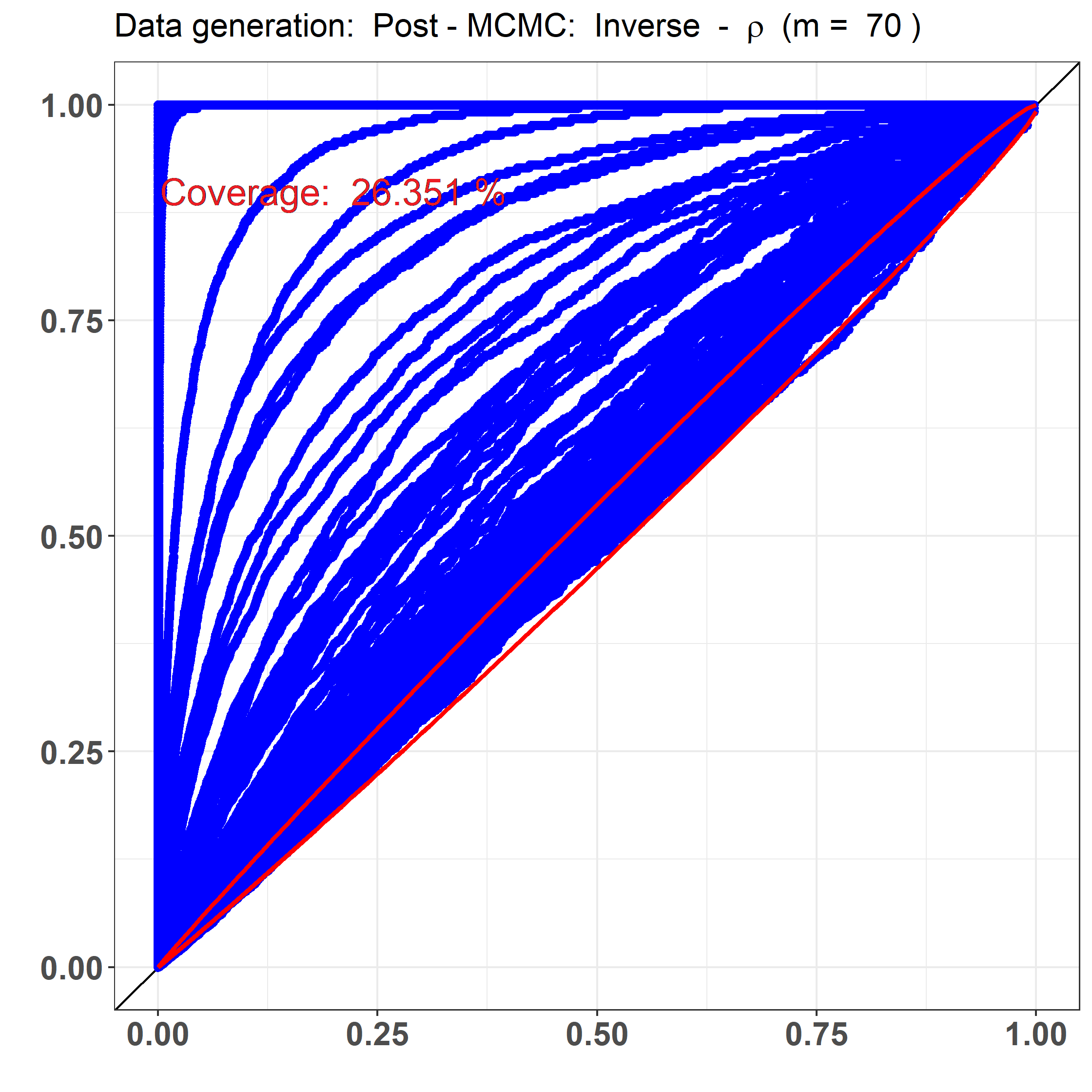}%
			\includegraphics[width=\wdIm\linewidth]{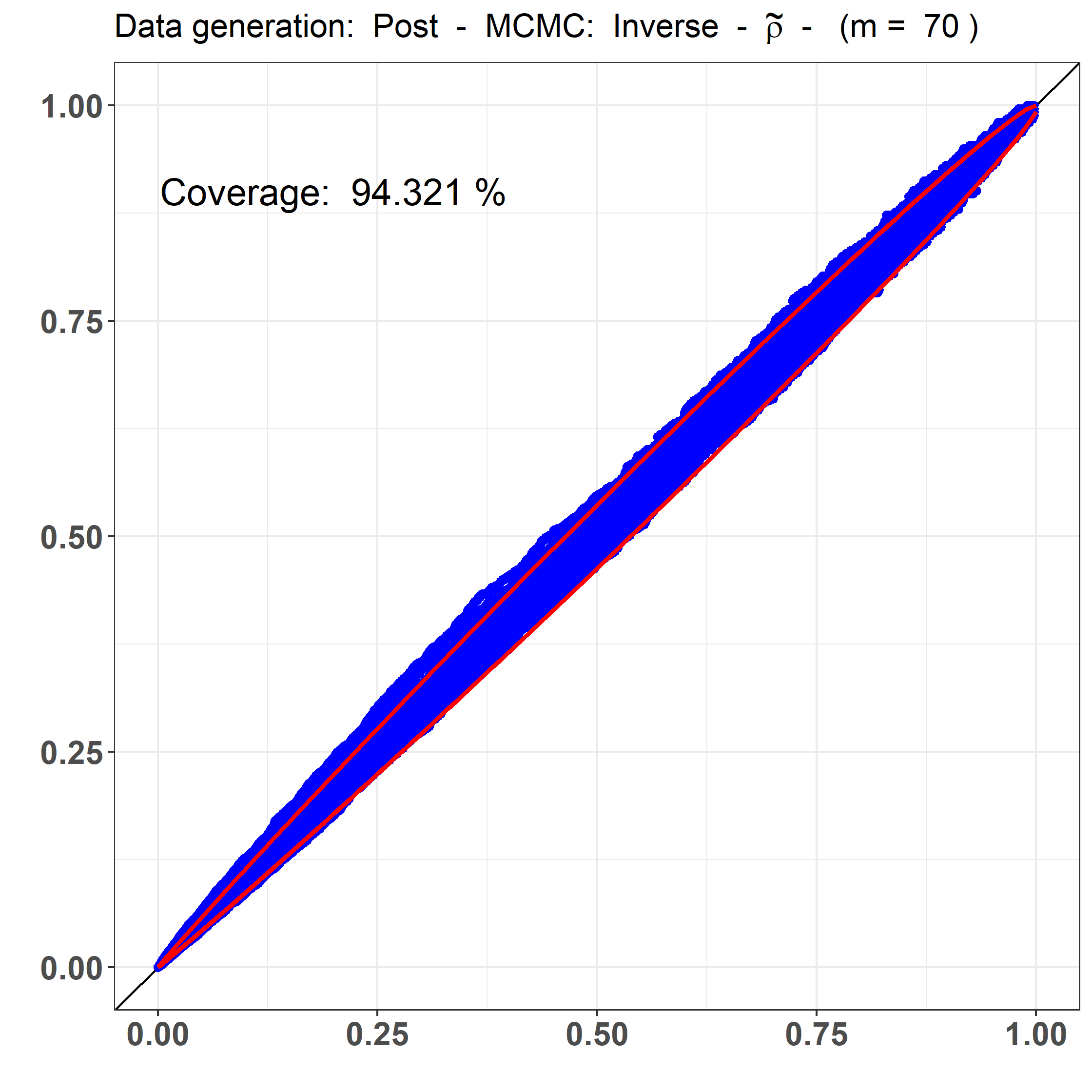}%
		\caption{Data generation: Post - MCMC: Inverse - $ m = 70 $}
	\end{sidewaysfigure}
	\newpage
	
	\begin{sidewaysfigure}
			\includegraphics[width=\wdIm\linewidth]{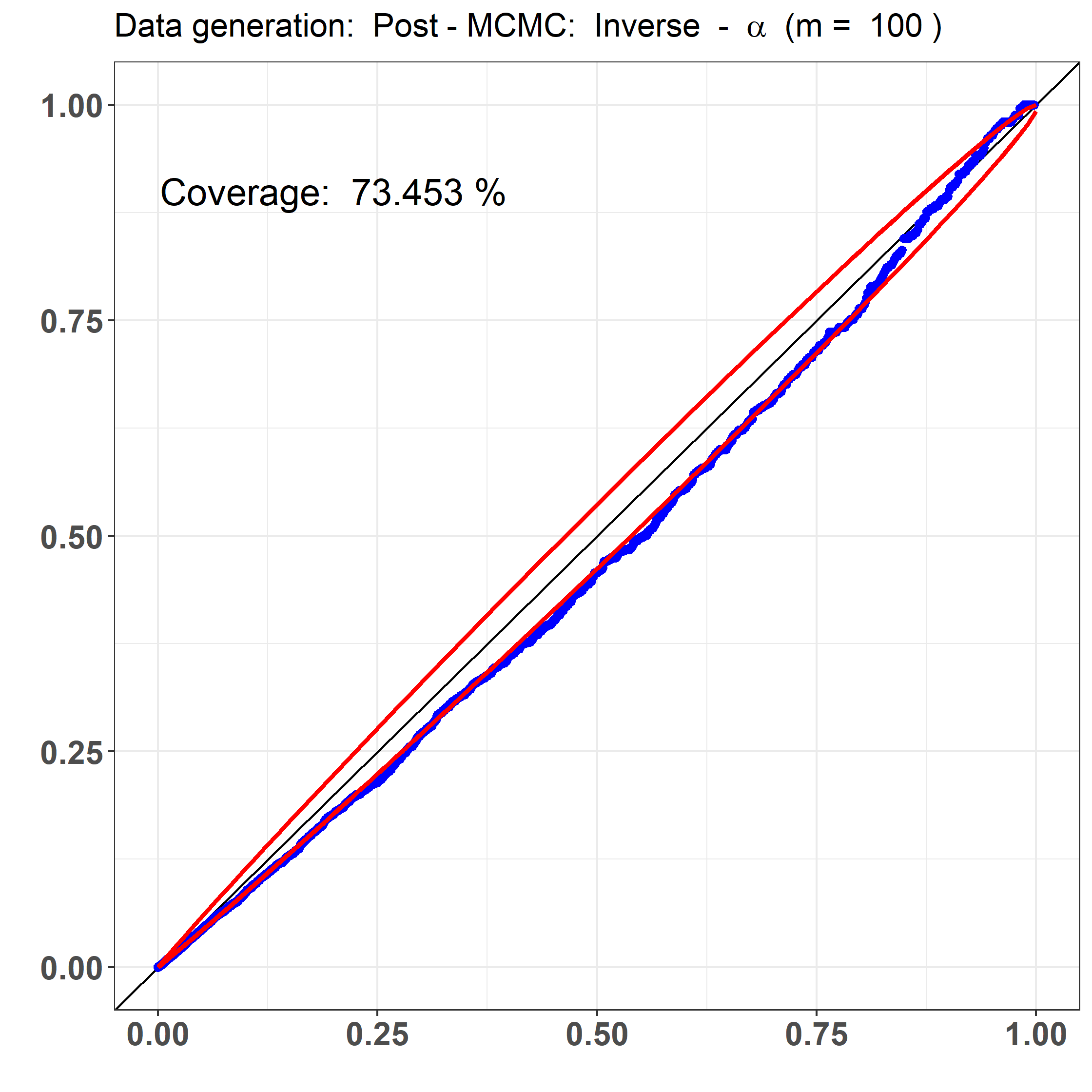}%
			\includegraphics[width=\wdIm\linewidth]{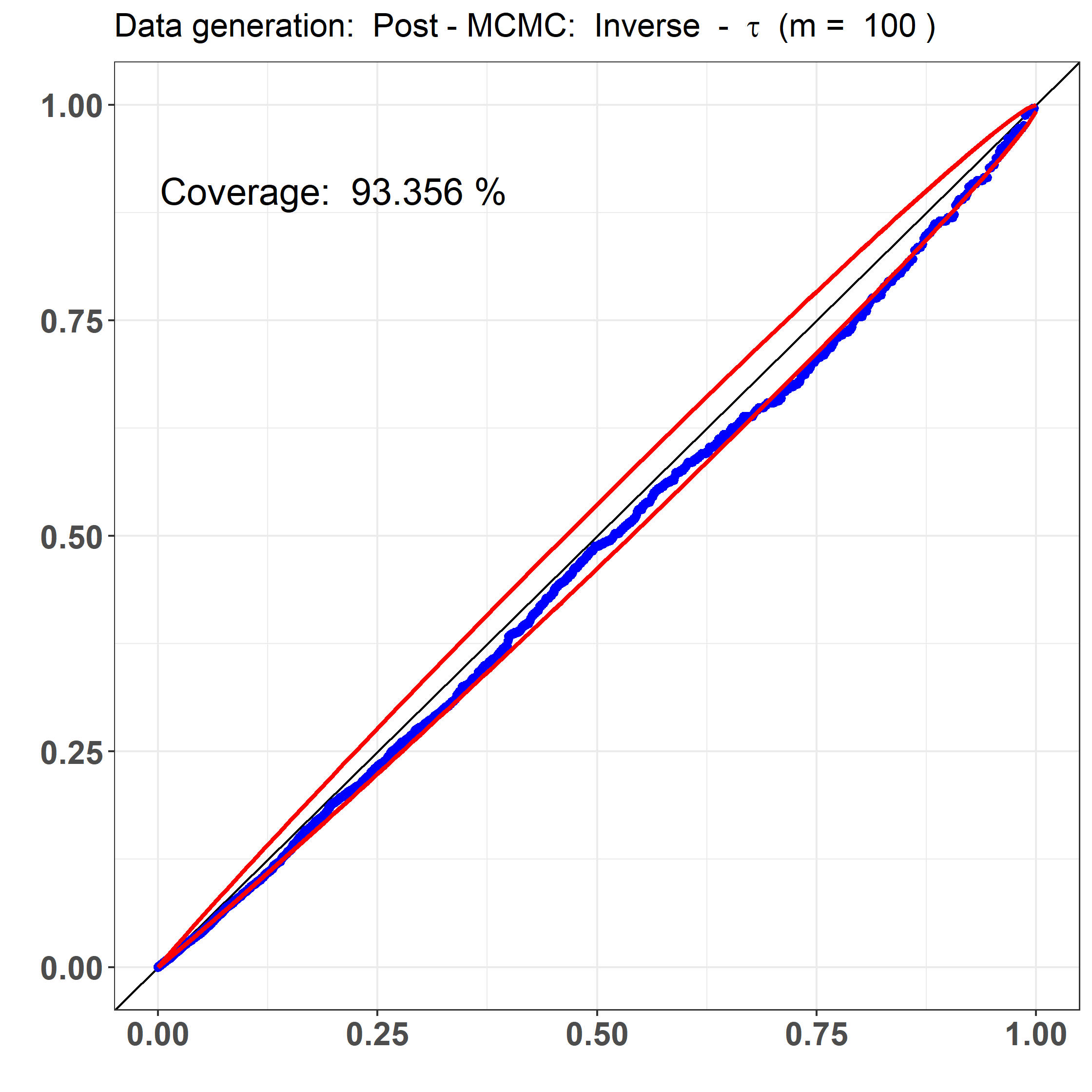}%
			\includegraphics[width=\wdIm\linewidth]{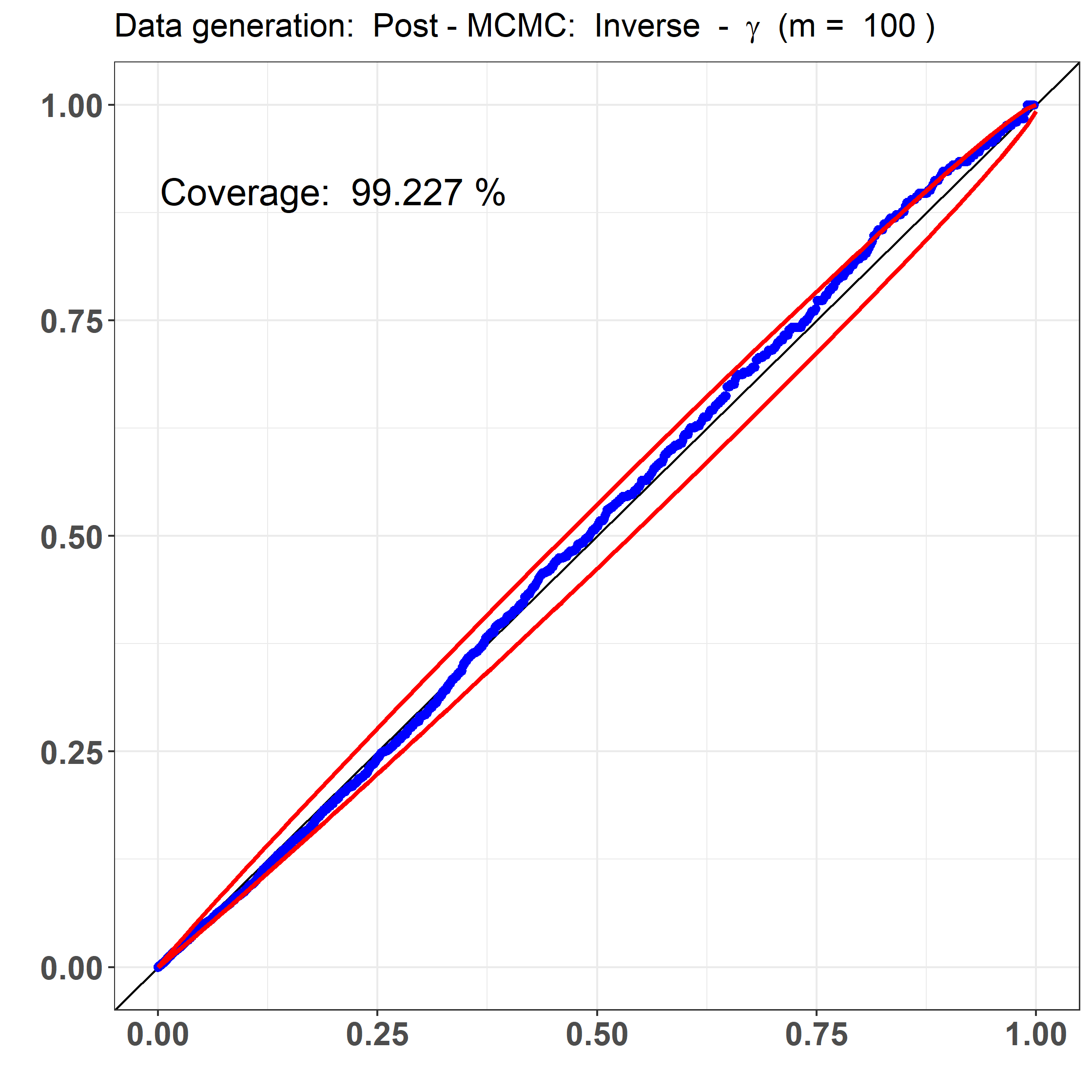} %
			\includegraphics[width=\wdIm\linewidth]{fig/ch5/sbc_inv_beta[1]_100}\\%
			\includegraphics[width=\wdIm\linewidth]{fig/ch5/sbc_inv_beta[2]_100}%
			\includegraphics[width=\wdIm\linewidth]{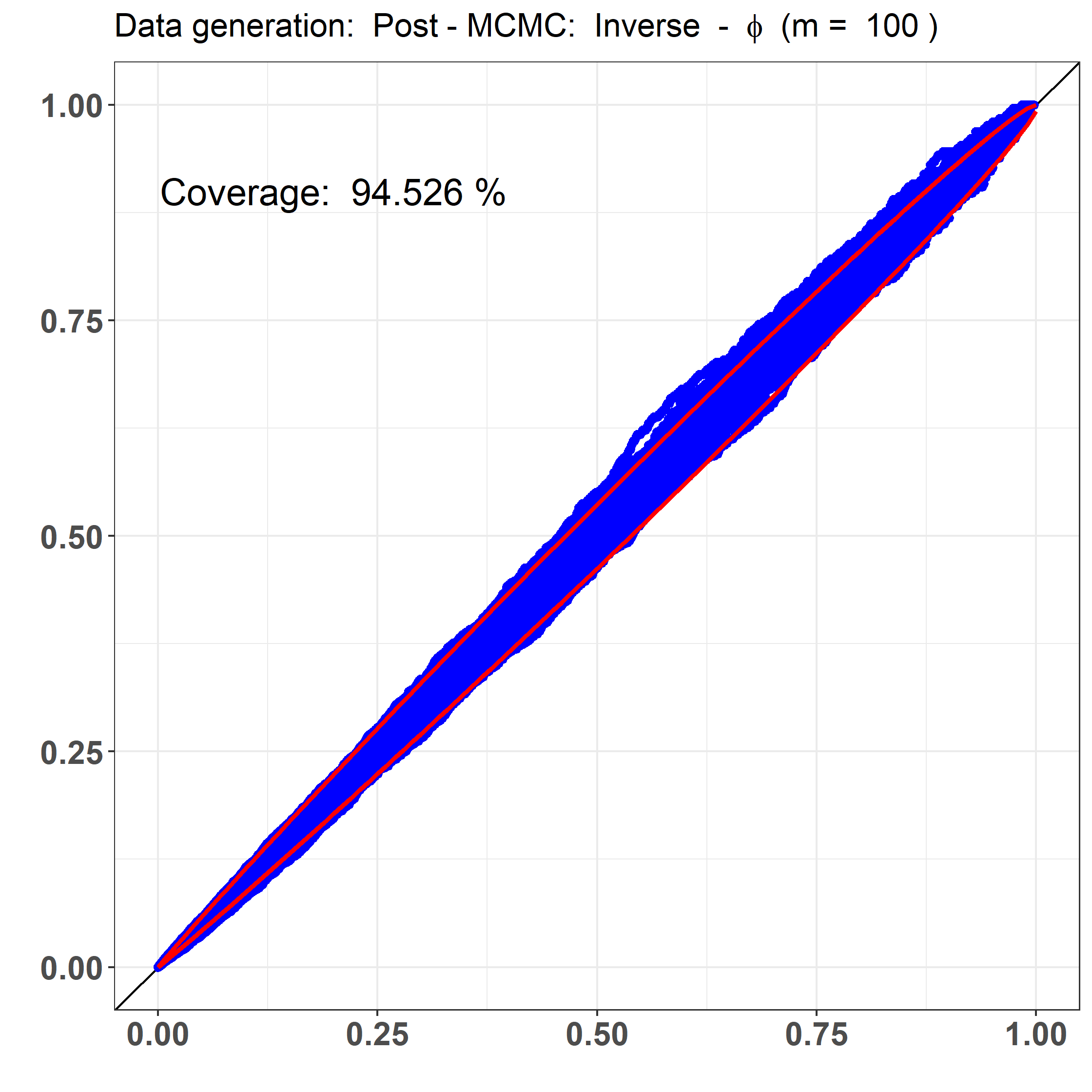}%
			\includegraphics[width=\wdIm\linewidth]{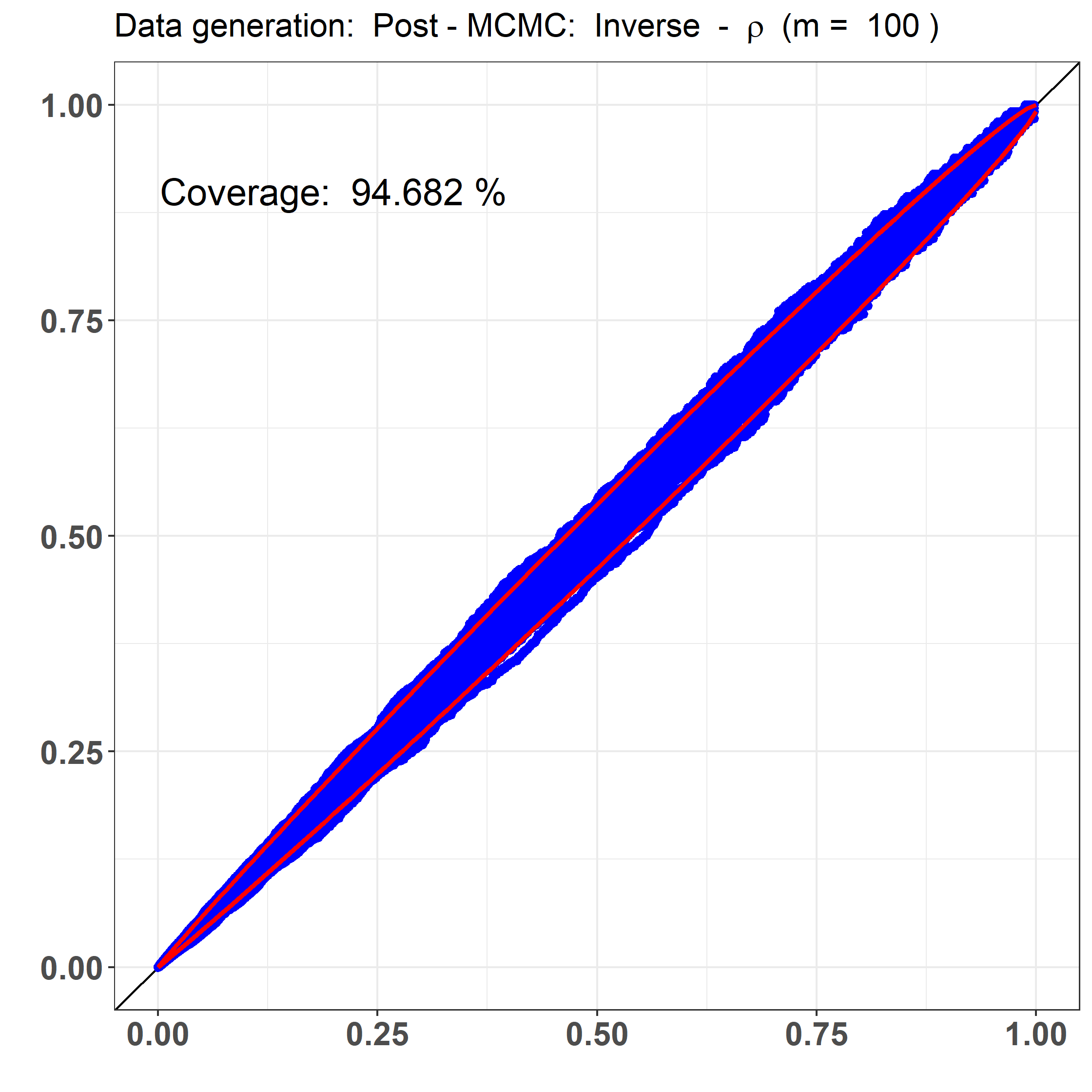}%
			\includegraphics[width=\wdIm\linewidth]{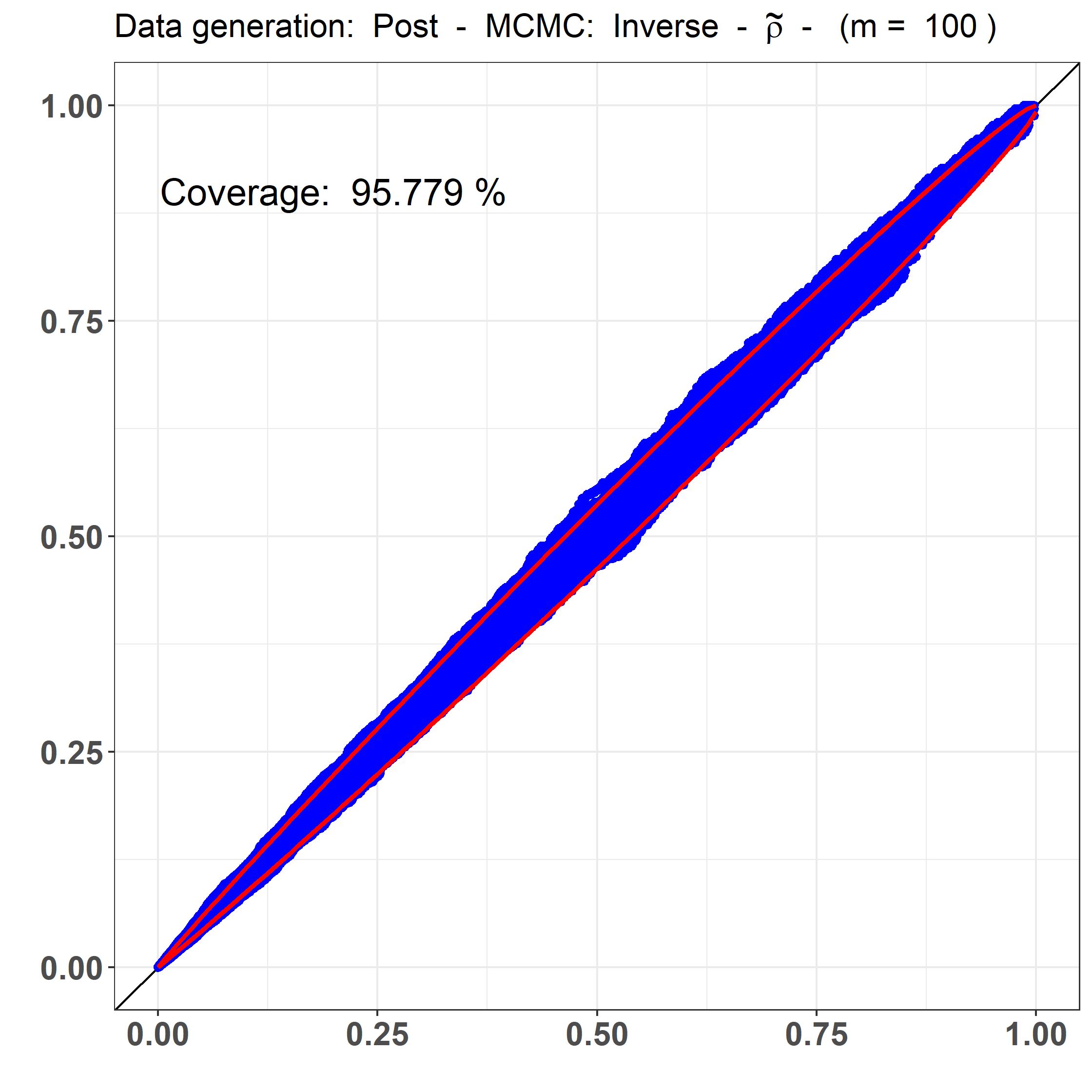}%
		\caption{Data generation: Post - MCMC: Inverse - $ m = 100 $}
	\end{sidewaysfigure}
	\newpage
	
	\begin{sidewaysfigure}
		\includegraphics[width=\wdIm\linewidth]{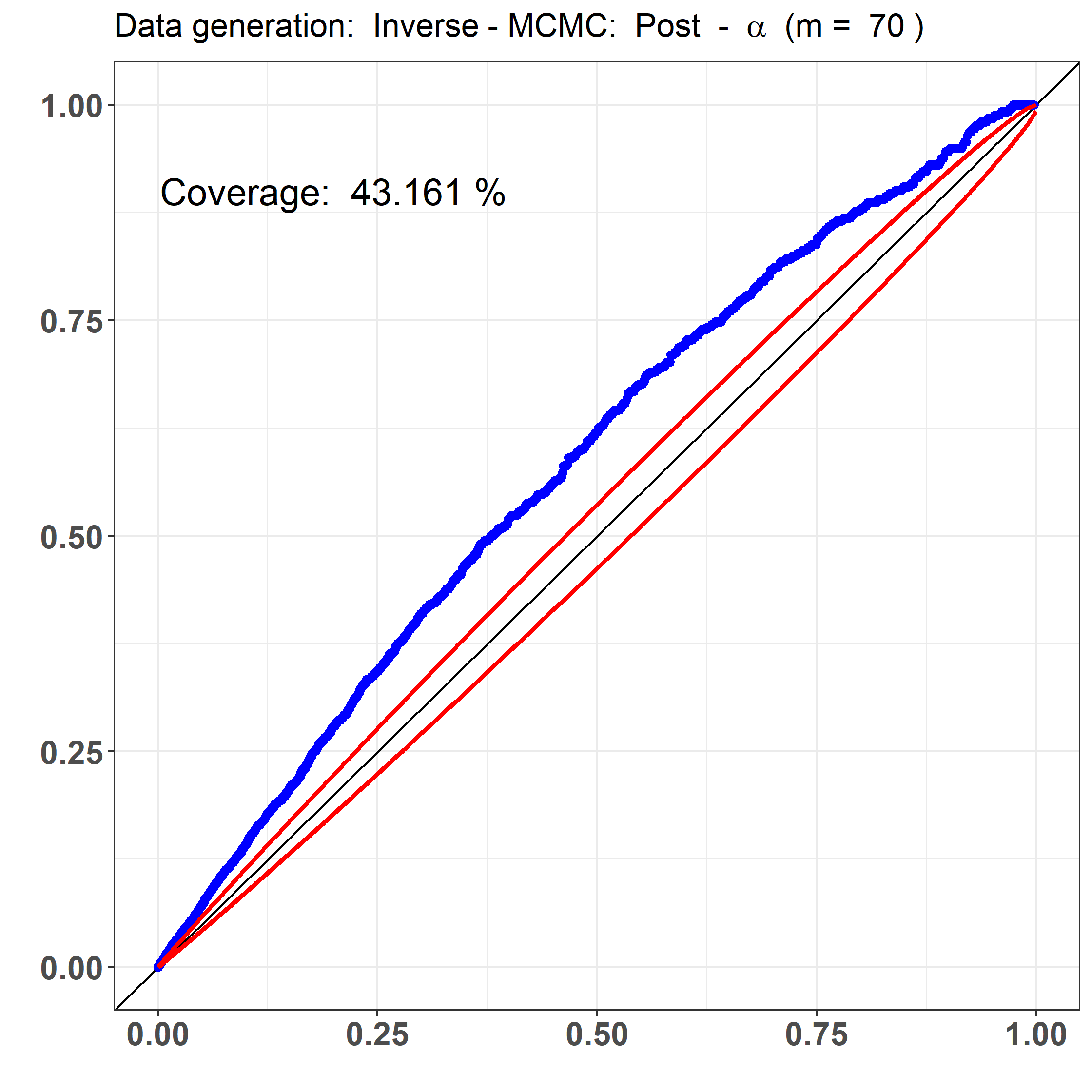}%
		\includegraphics[width=\wdIm\linewidth]{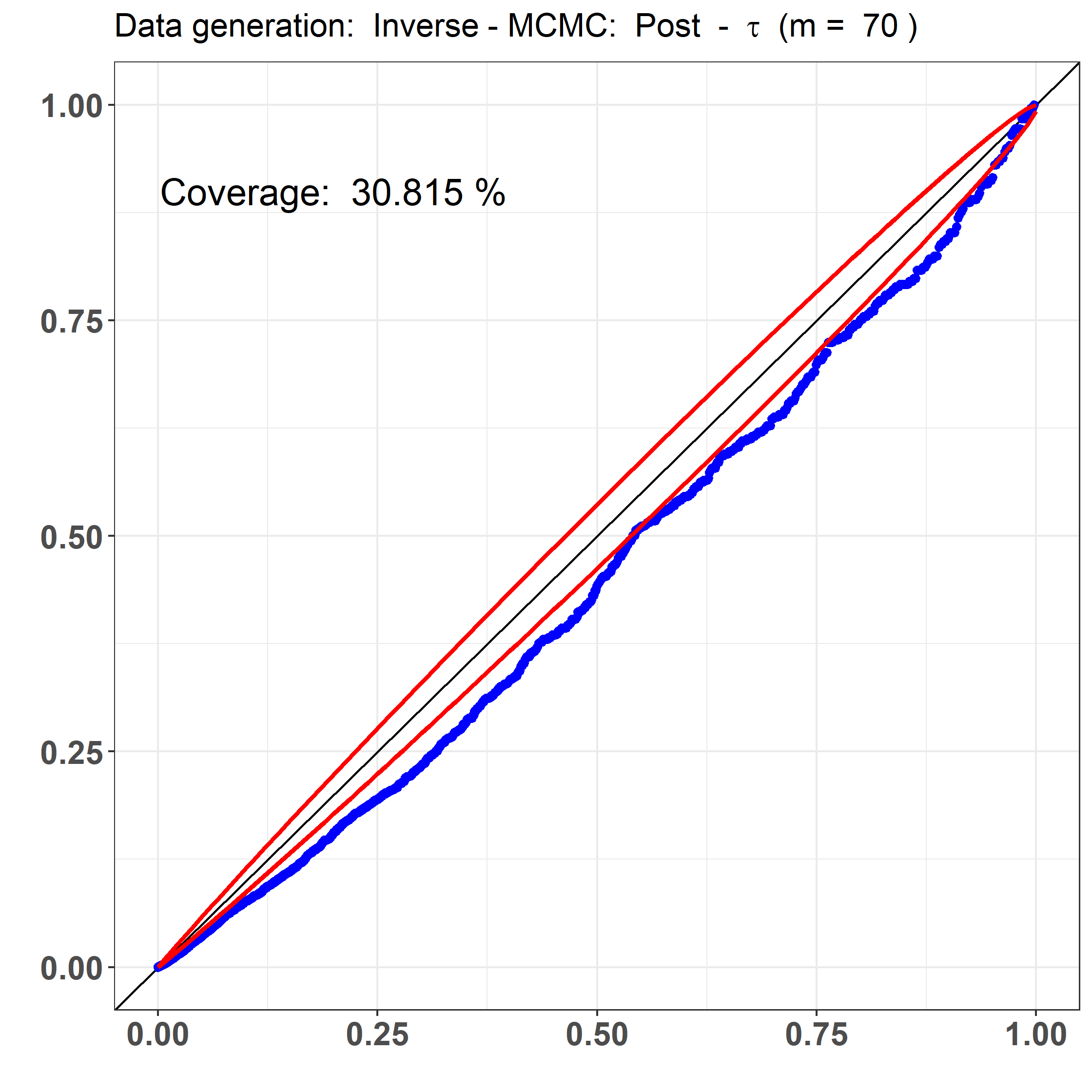}%
		\includegraphics[width=\wdIm\linewidth]{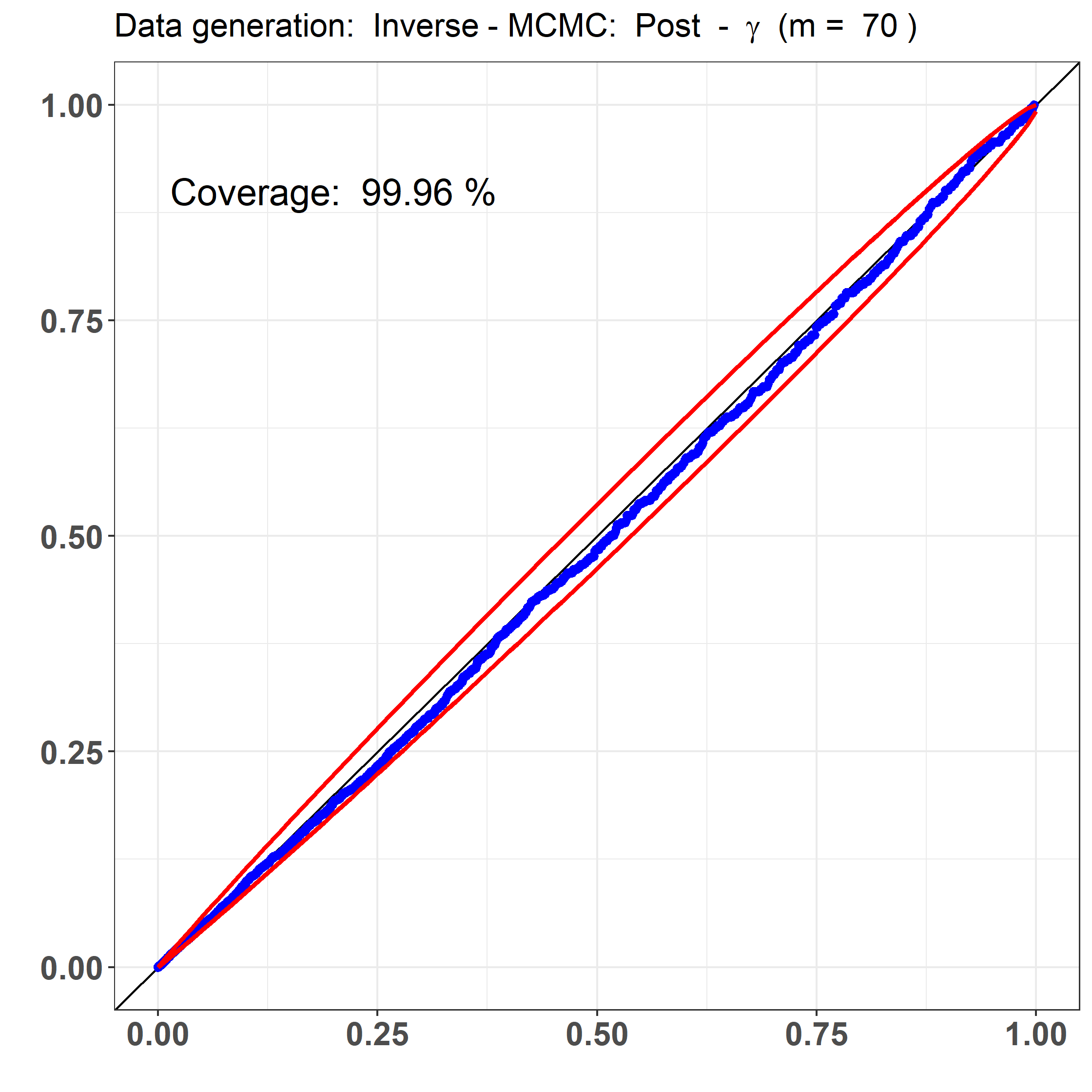} %
		\includegraphics[width=\wdIm\linewidth]{fig/ch5/inv/sbc_post_beta[1]_70}\\%
		\includegraphics[width=\wdIm\linewidth]{fig/ch5/inv/sbc_post_beta[2]_70}%
		\includegraphics[width=\wdIm\linewidth]{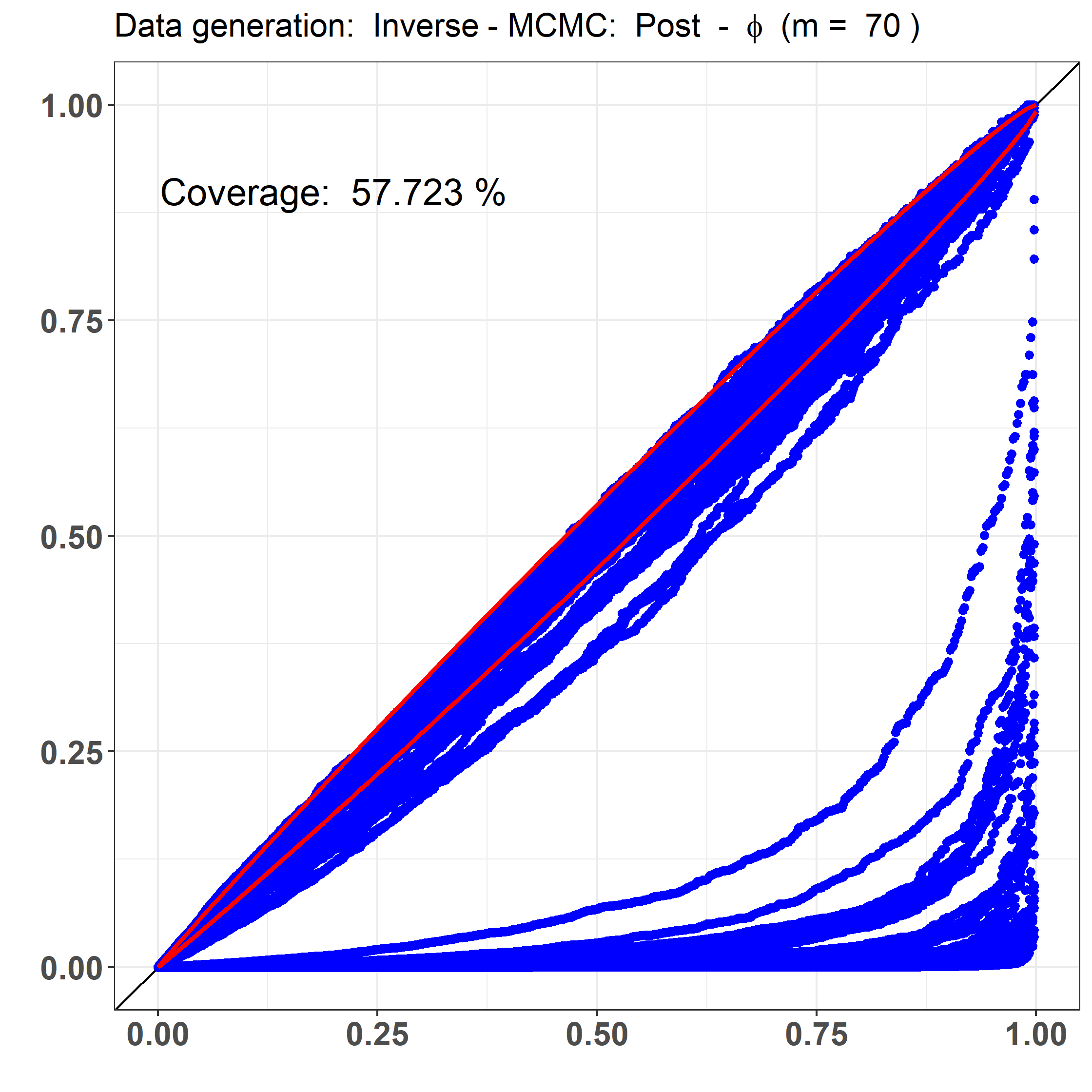}%
		\includegraphics[width=\wdIm\linewidth]{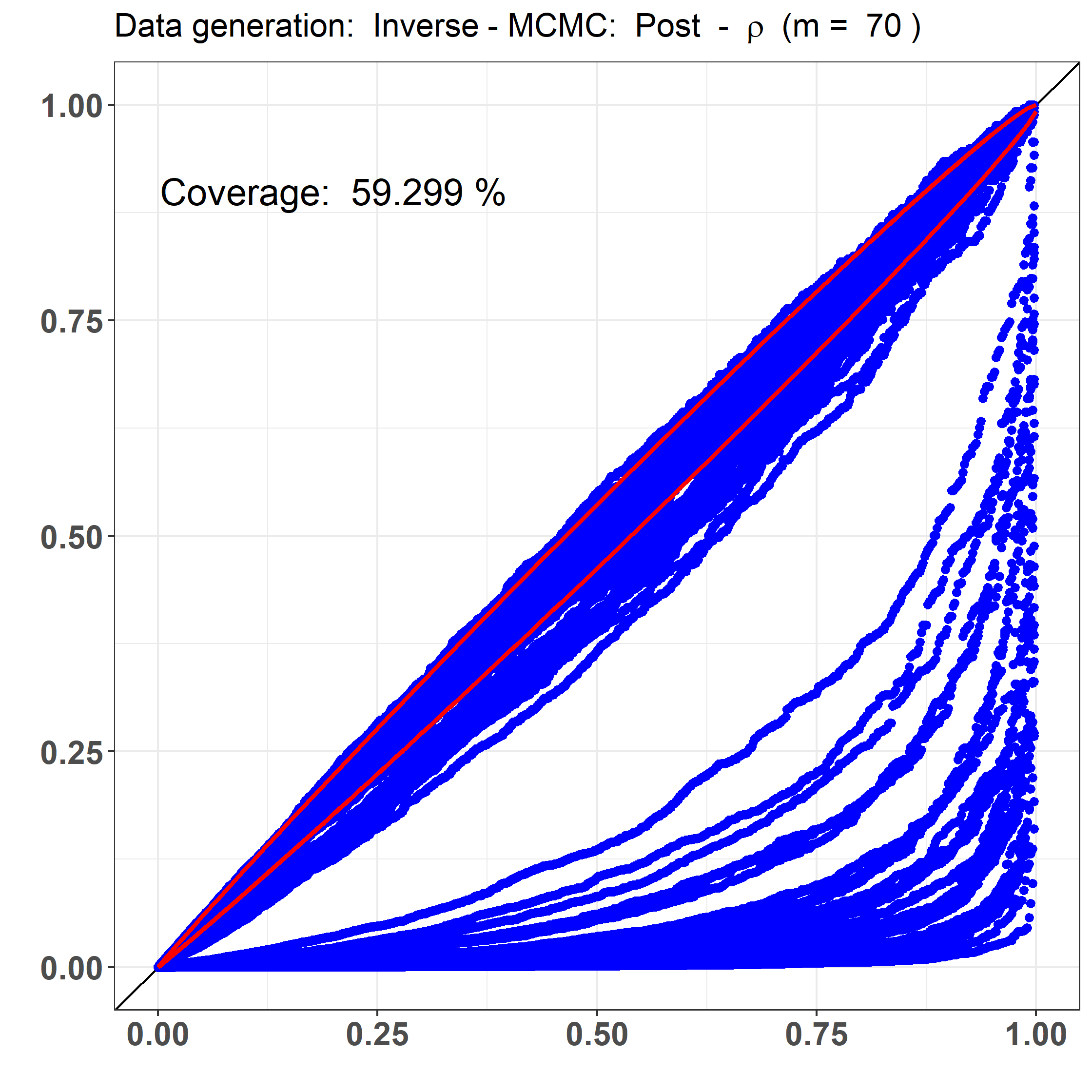}%
		\includegraphics[width=\wdIm\linewidth]{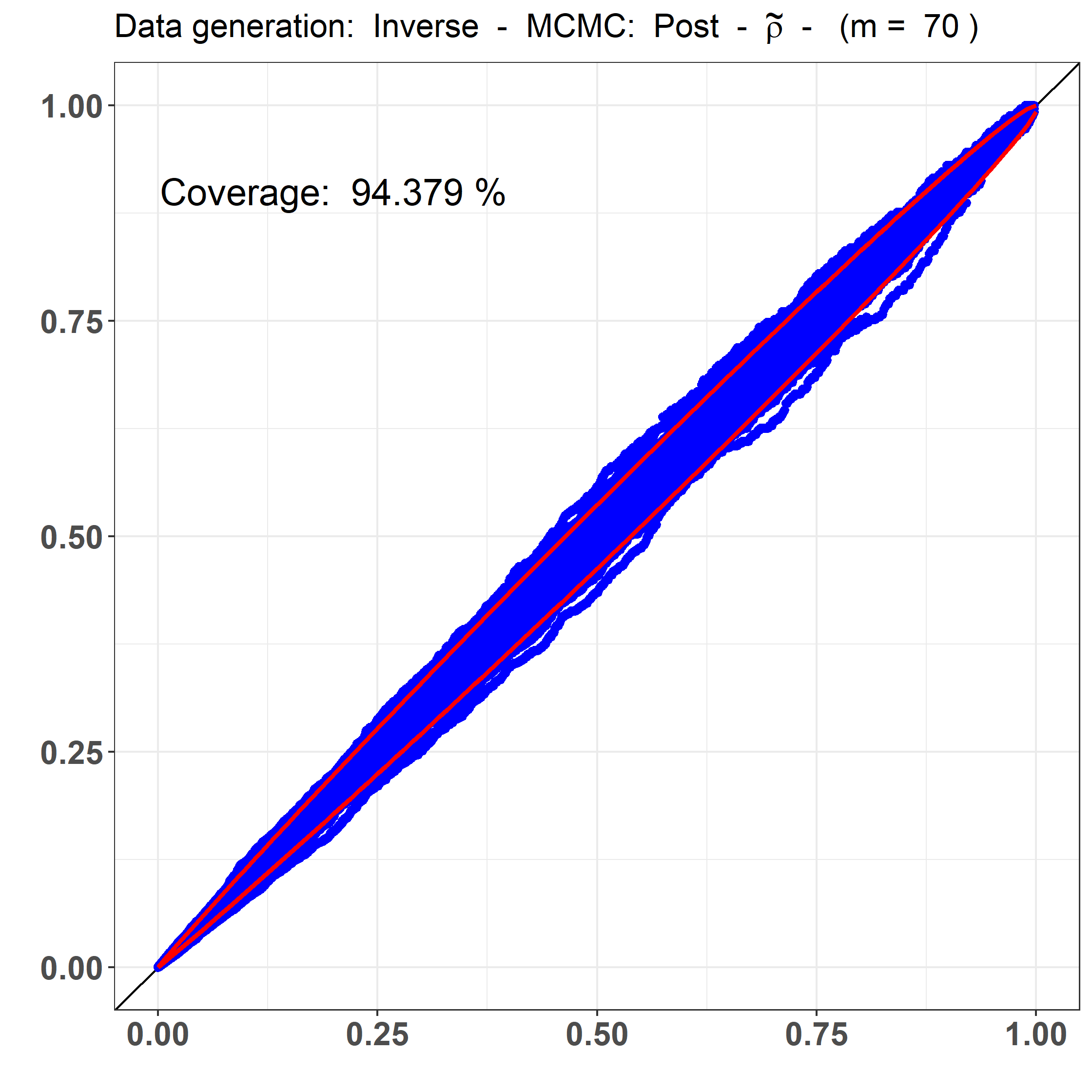}%
	\caption{Data generation: Post - MCMC: Inverse - $ m = 70 $}
\end{sidewaysfigure}
\newpage

\begin{sidewaysfigure}
		\includegraphics[width=\wdIm\linewidth]{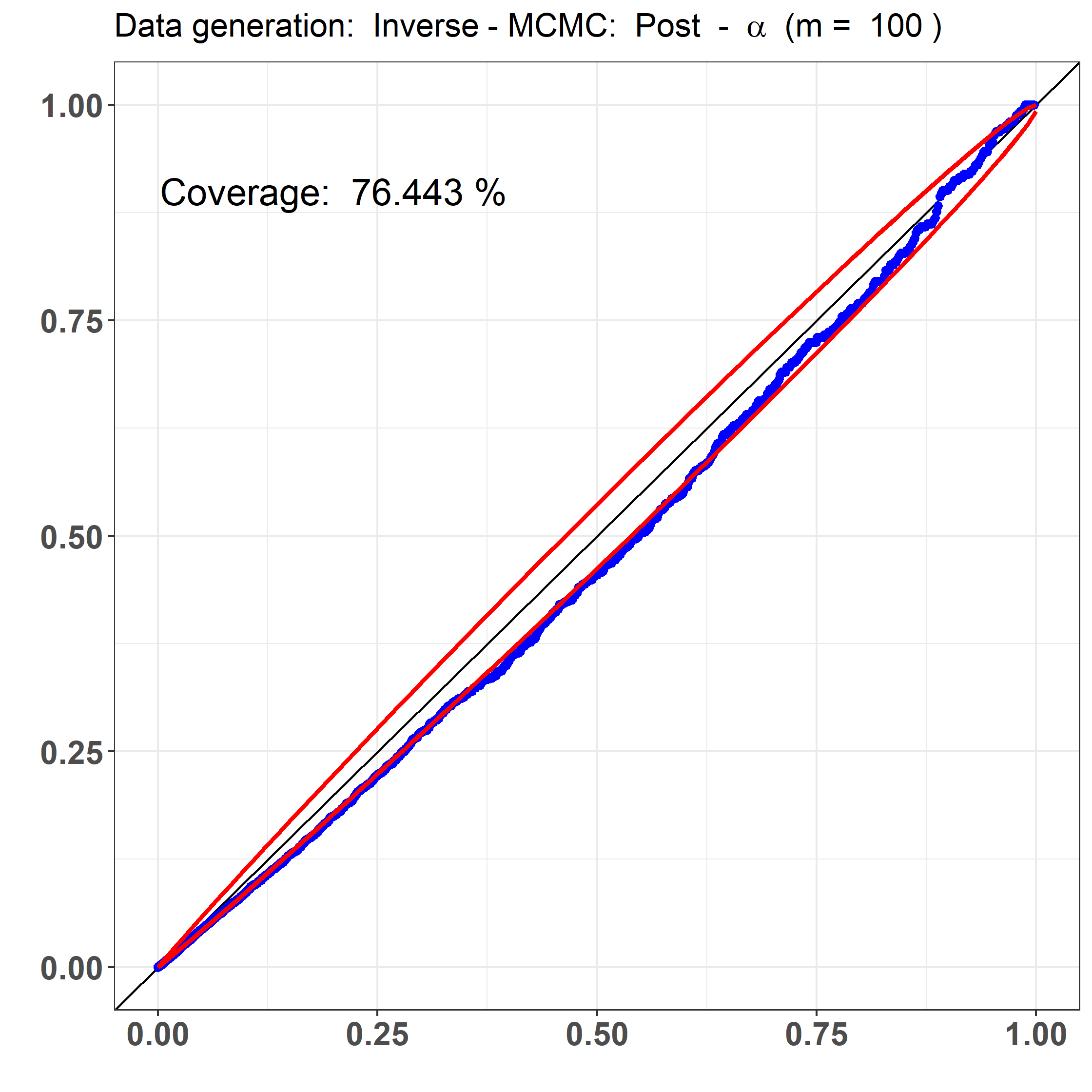}%
		\includegraphics[width=\wdIm\linewidth]{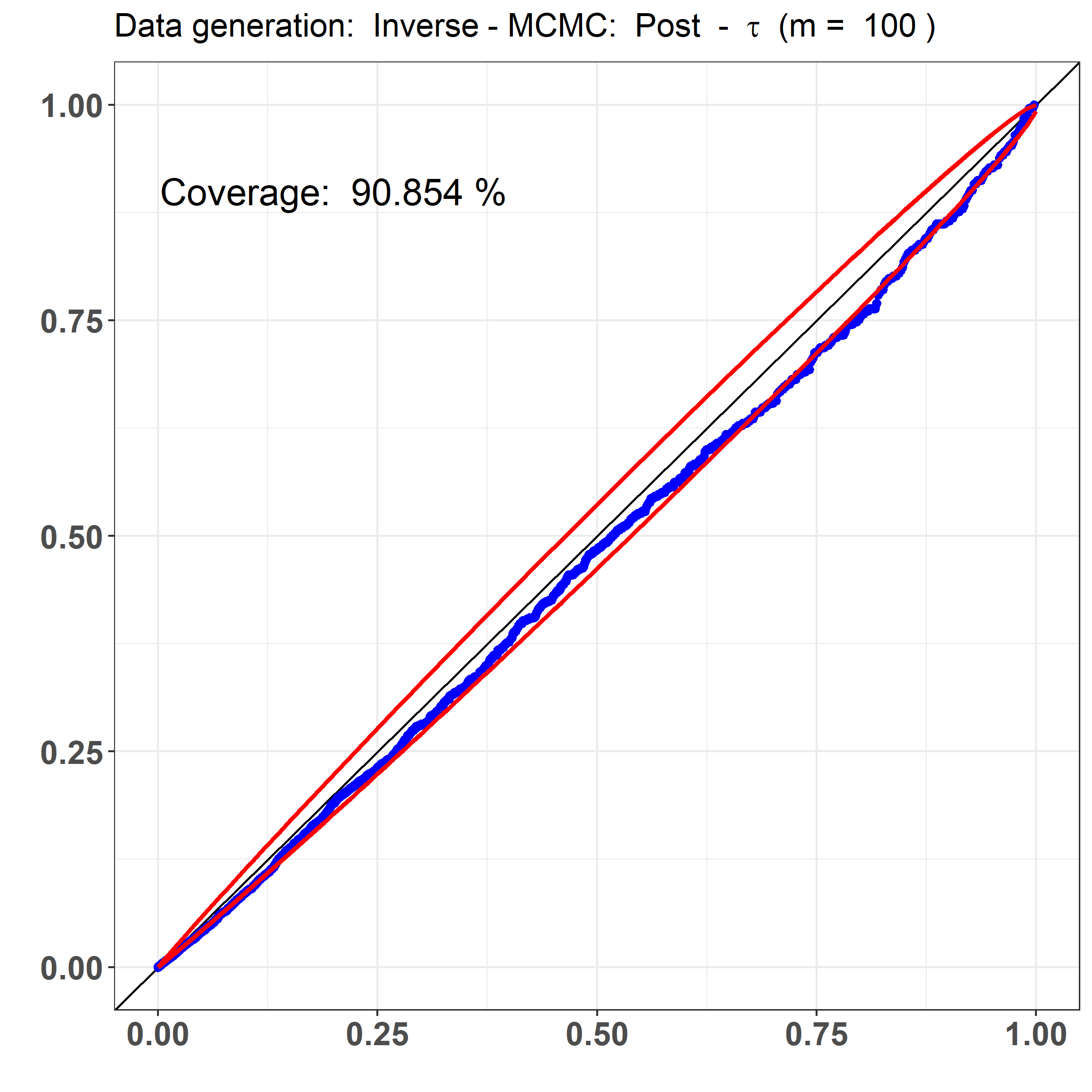}%
		\includegraphics[width=\wdIm\linewidth]{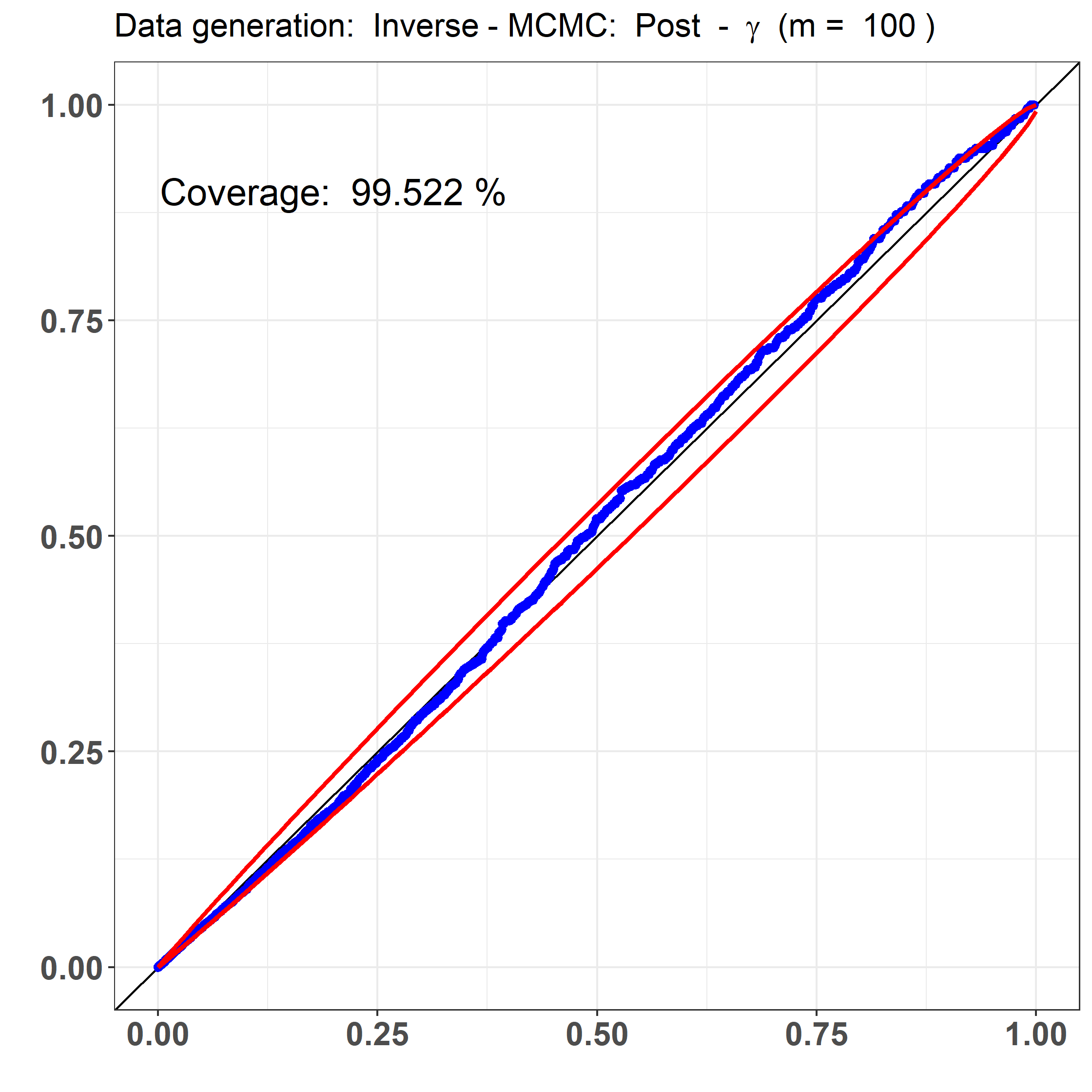} %
		\includegraphics[width=\wdIm\linewidth]{fig/ch5/inv/sbc_post_beta[1]_100}\\%
		\includegraphics[width=\wdIm\linewidth]{fig/ch5/inv/sbc_post_beta[2]_100}%
		\includegraphics[width=\wdIm\linewidth]{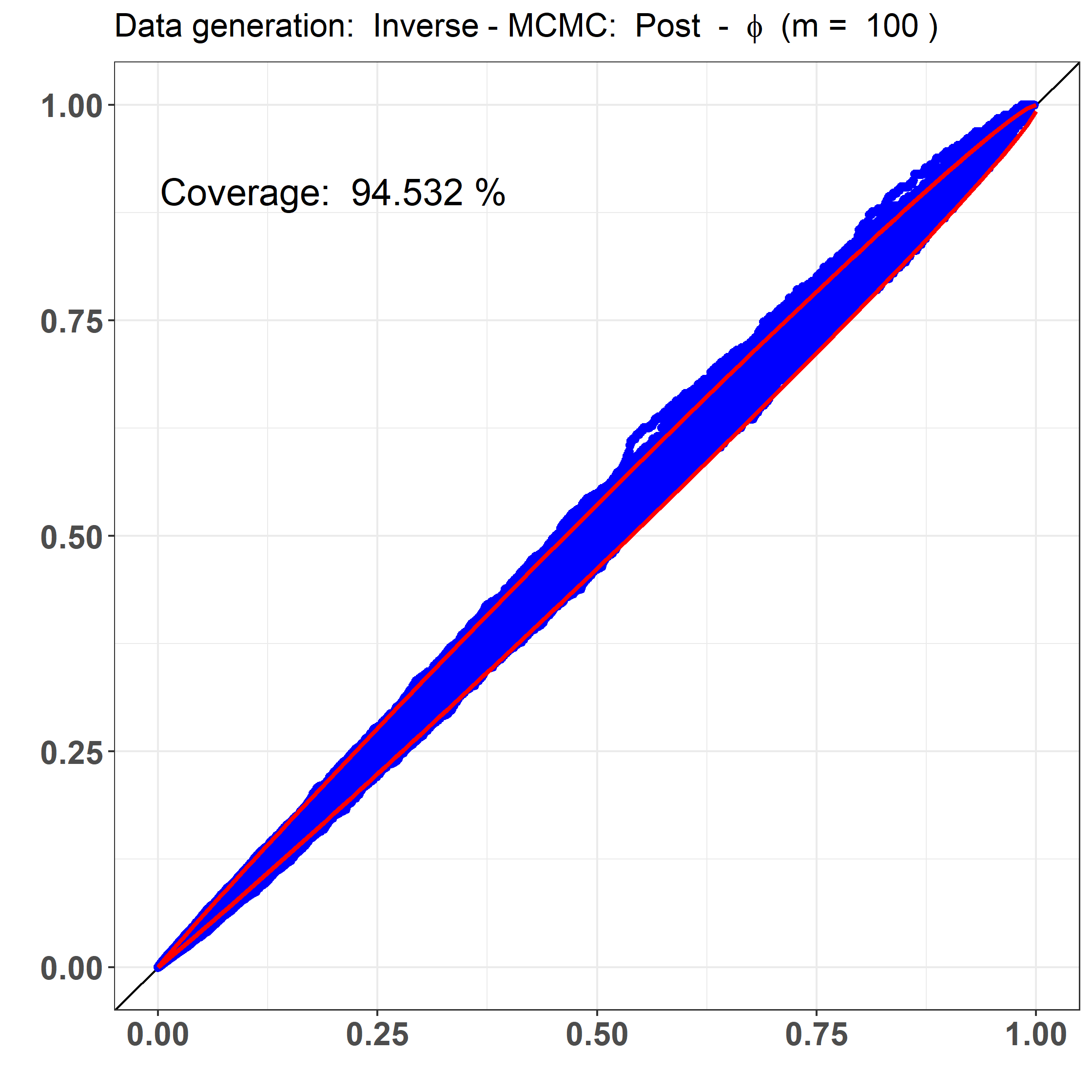}%
		\includegraphics[width=\wdIm\linewidth]{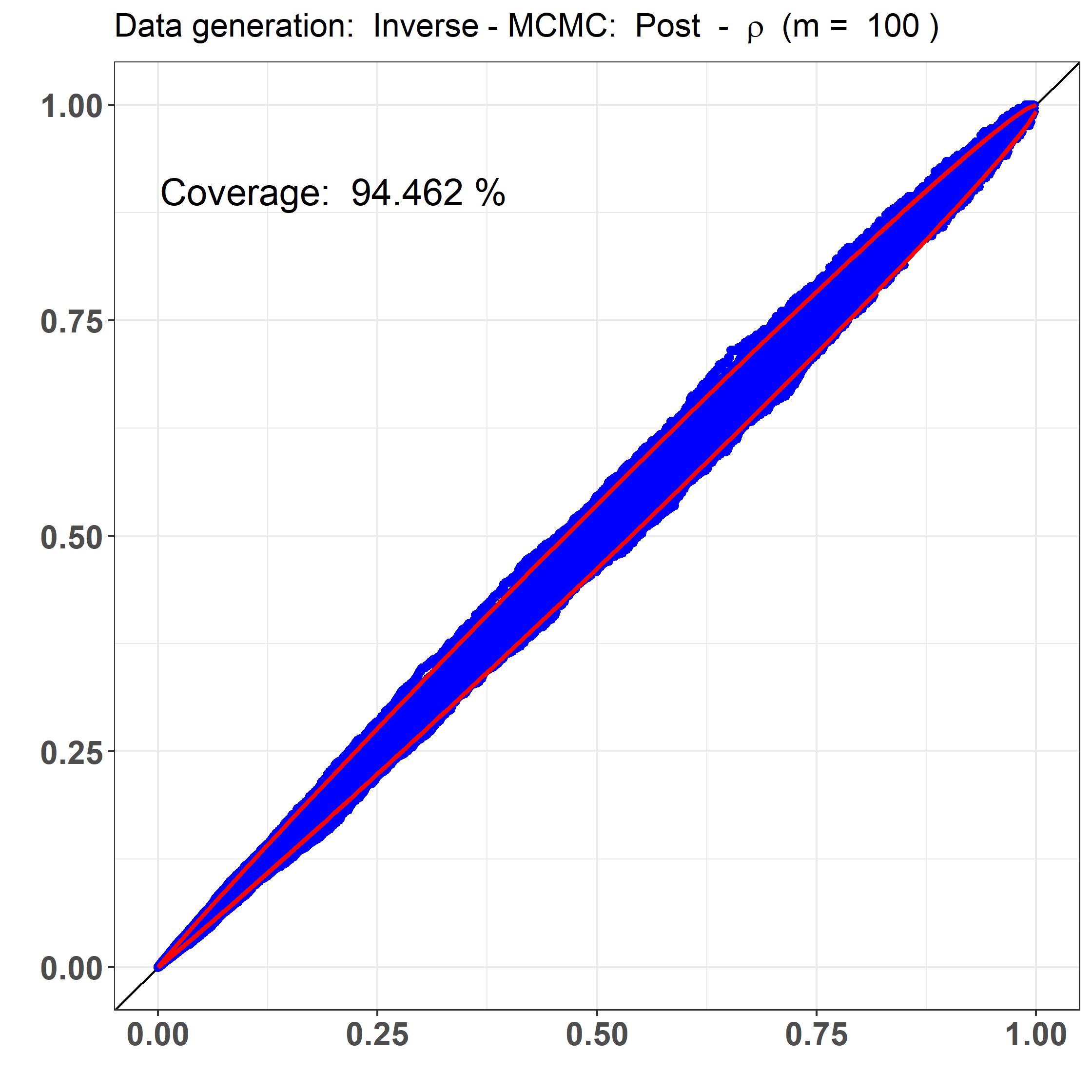}%
		\includegraphics[width=\wdIm\linewidth]{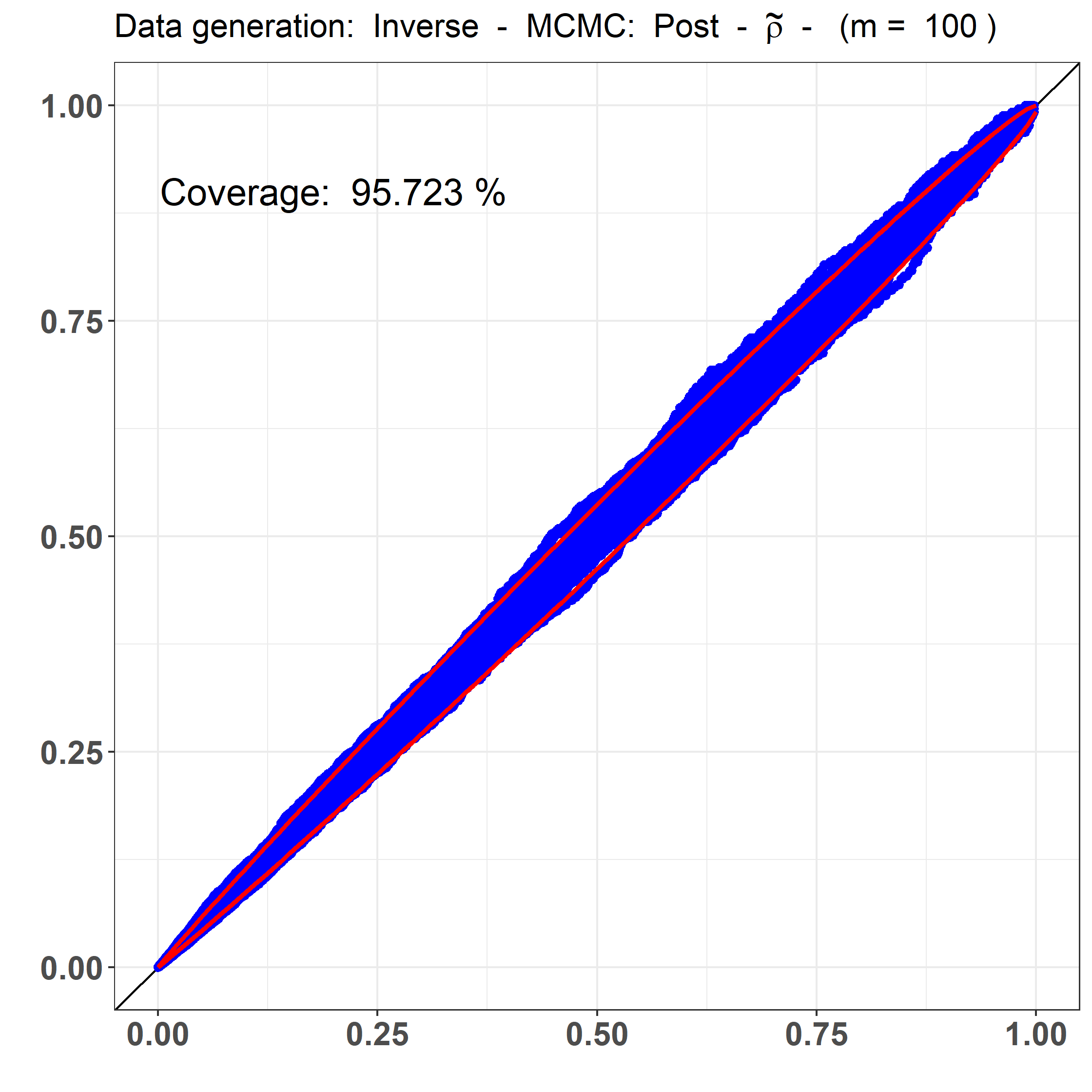}%
	\caption{Data generation: Inverse - MCMC: Post - $ m = 100 $}
\end{sidewaysfigure}
\newpage

\begin{sidewaysfigure}
		\includegraphics[width=\wdIm\linewidth]{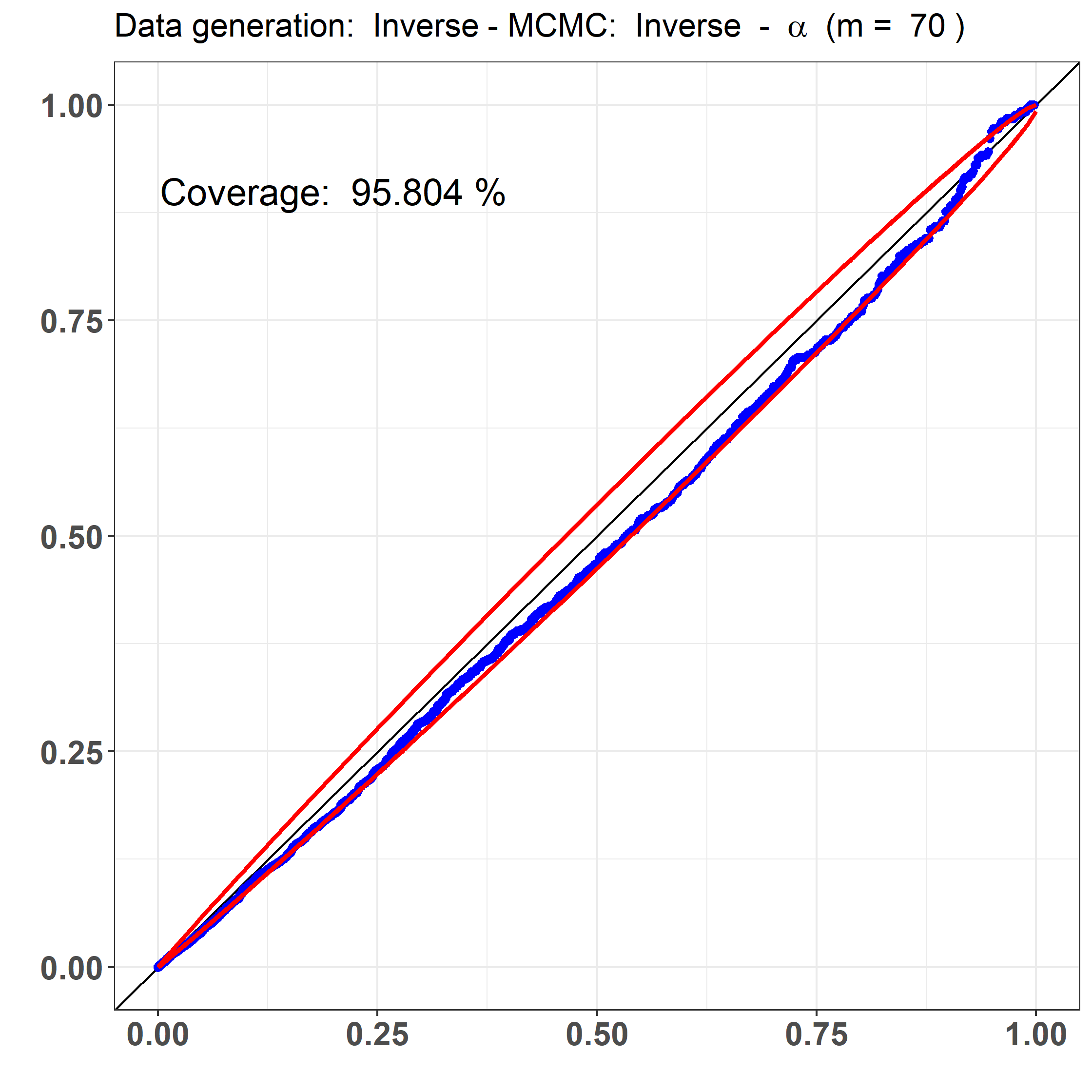}%
		\includegraphics[width=\wdIm\linewidth]{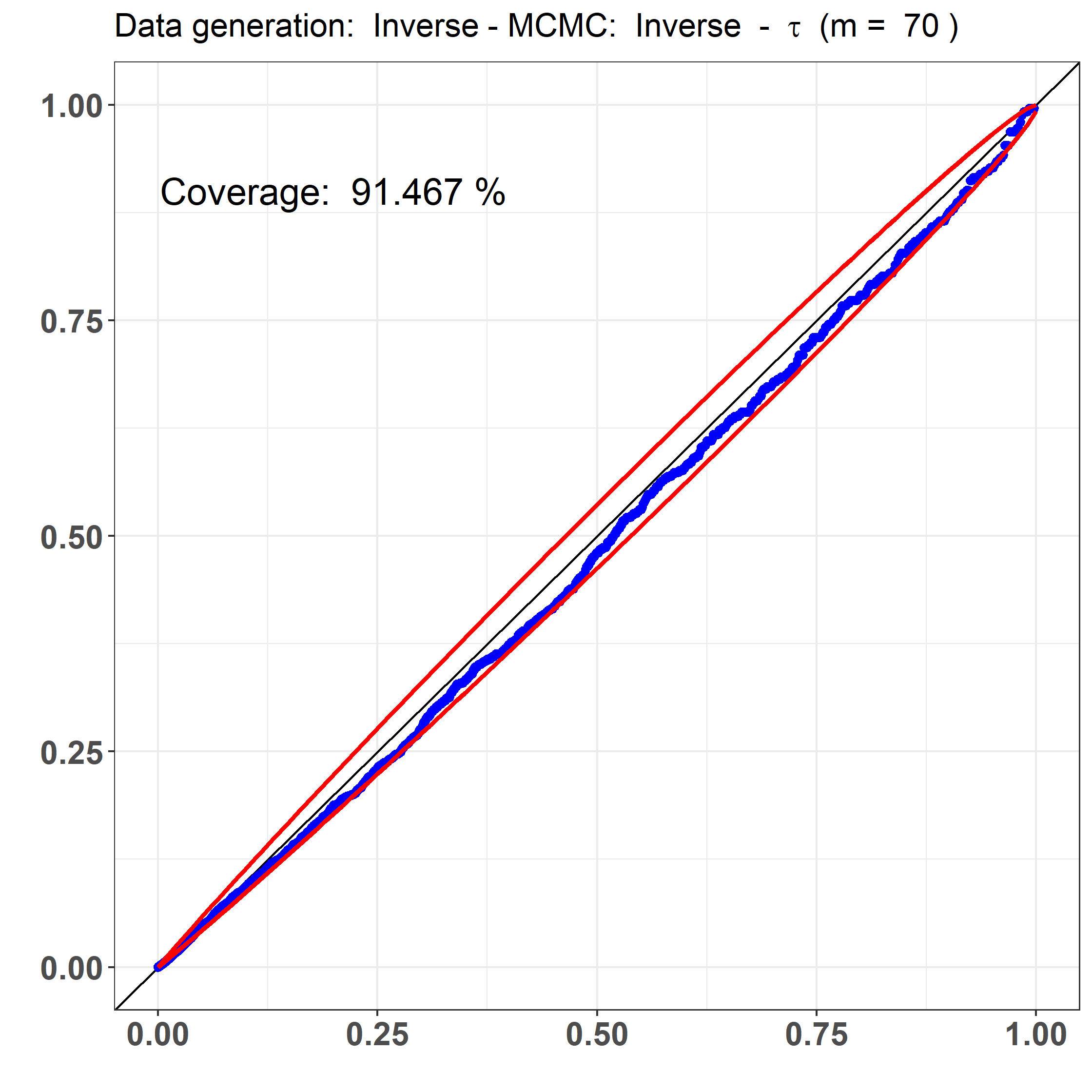}%
		\includegraphics[width=\wdIm\linewidth]{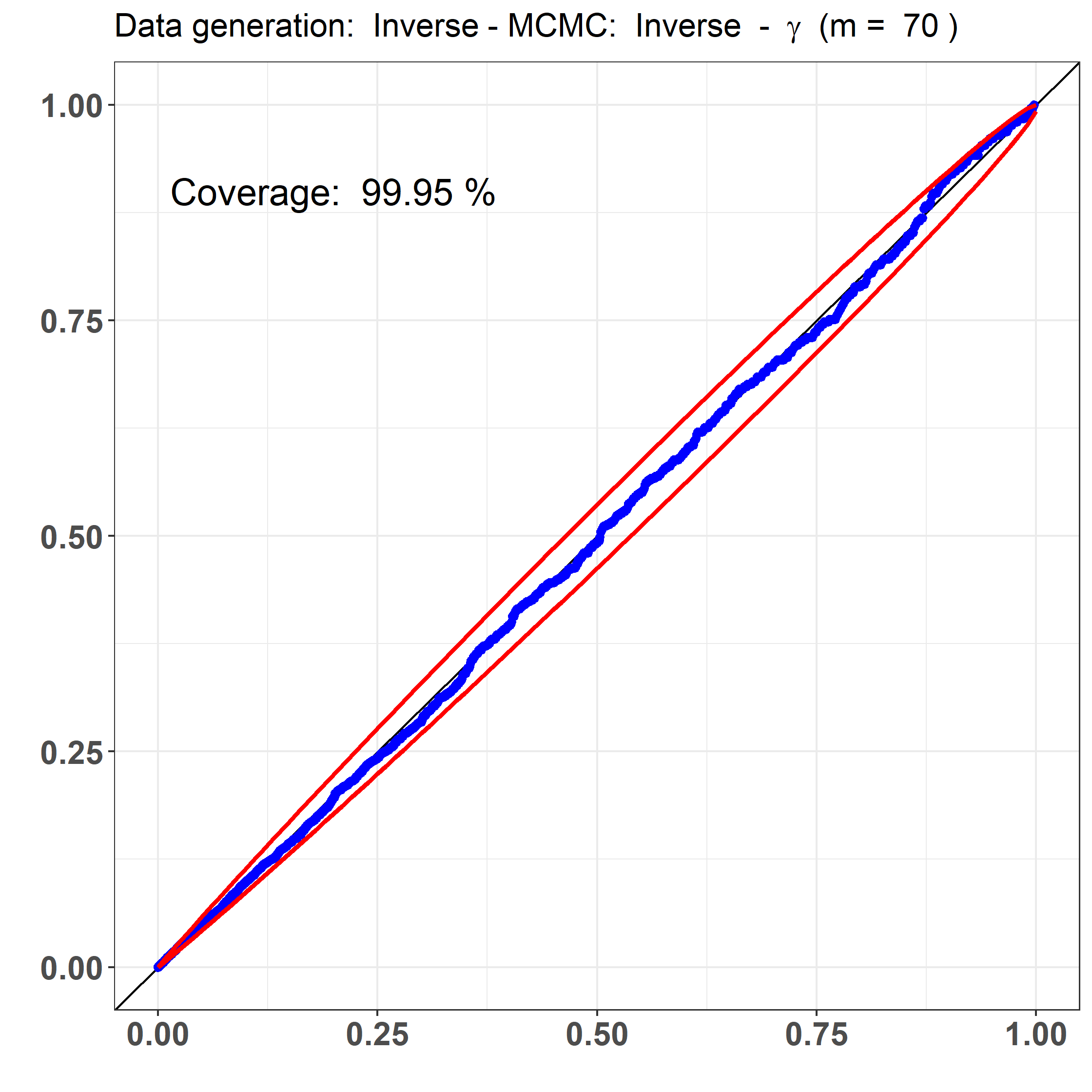} %
		\includegraphics[width=\wdIm\linewidth]{fig/ch5/inv/sbc_inv_beta[1]_70}\\%
		\includegraphics[width=\wdIm\linewidth]{fig/ch5/inv/sbc_inv_beta[2]_70}%
		\includegraphics[width=\wdIm\linewidth]{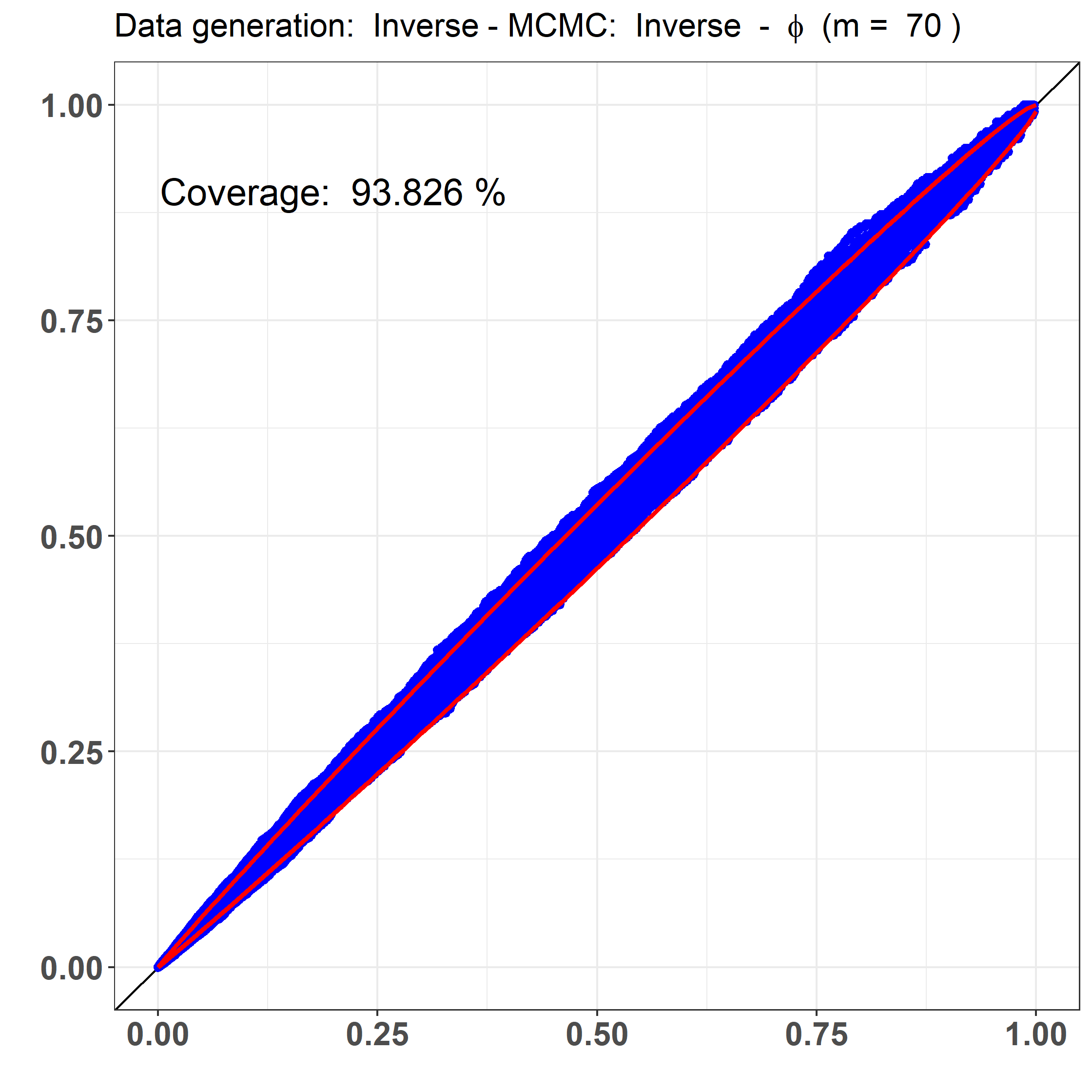}%
		\includegraphics[width=\wdIm\linewidth]{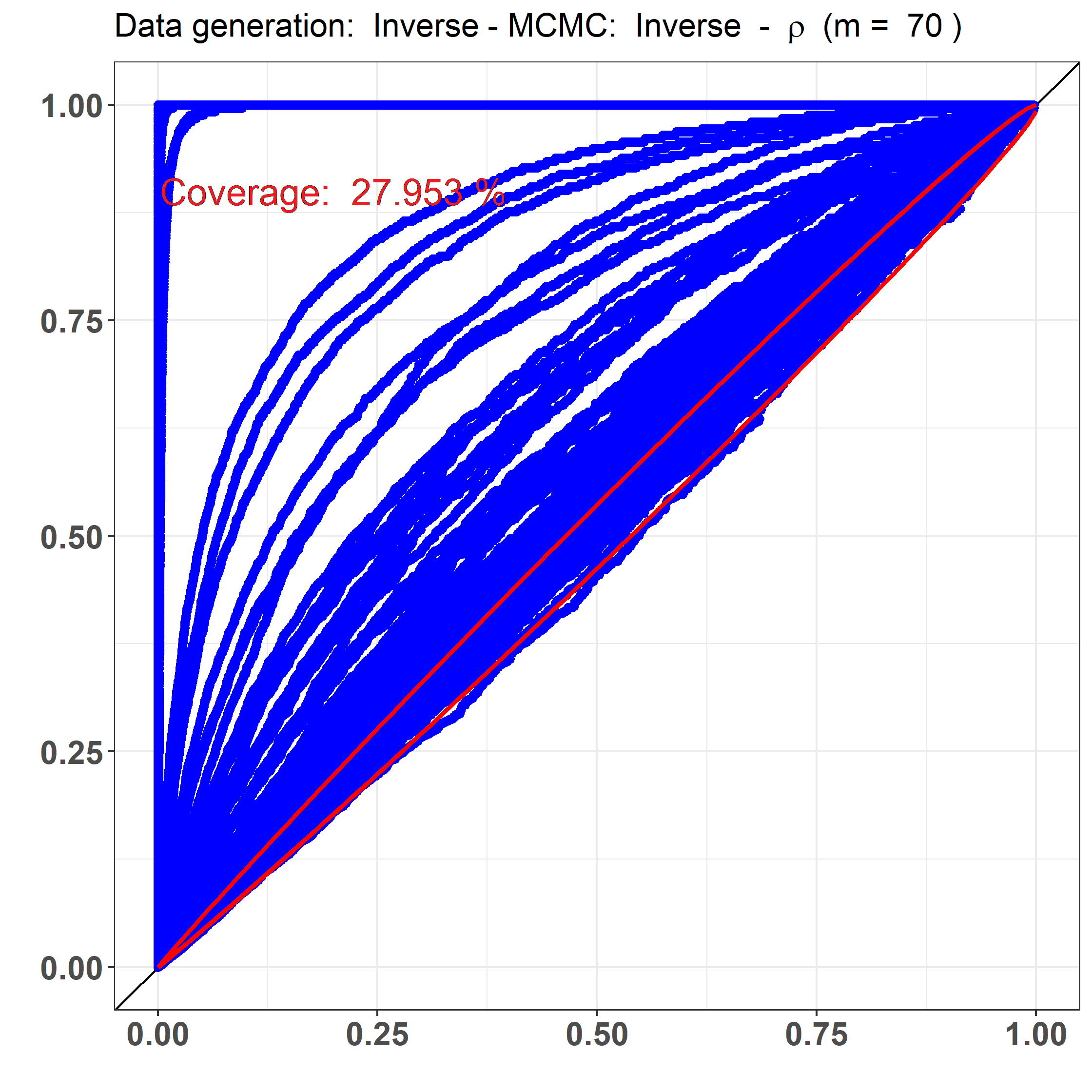}%
		\includegraphics[width=\wdIm\linewidth]{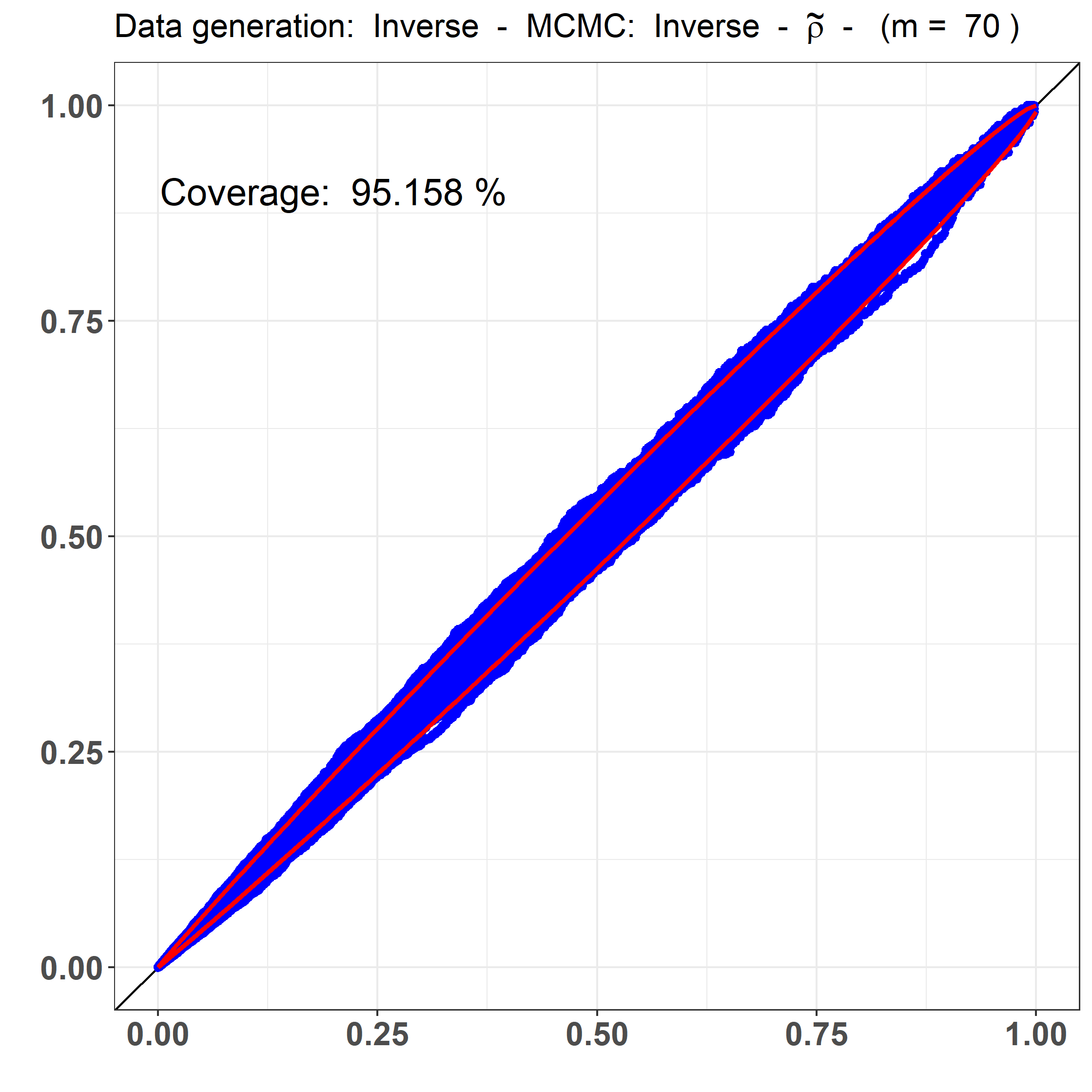}%
	\caption{Data generation: Inverse - MCMC: Inverse - $ m = 70 $}
\end{sidewaysfigure}
\newpage

\begin{sidewaysfigure}
		\includegraphics[width=\wdIm\linewidth]{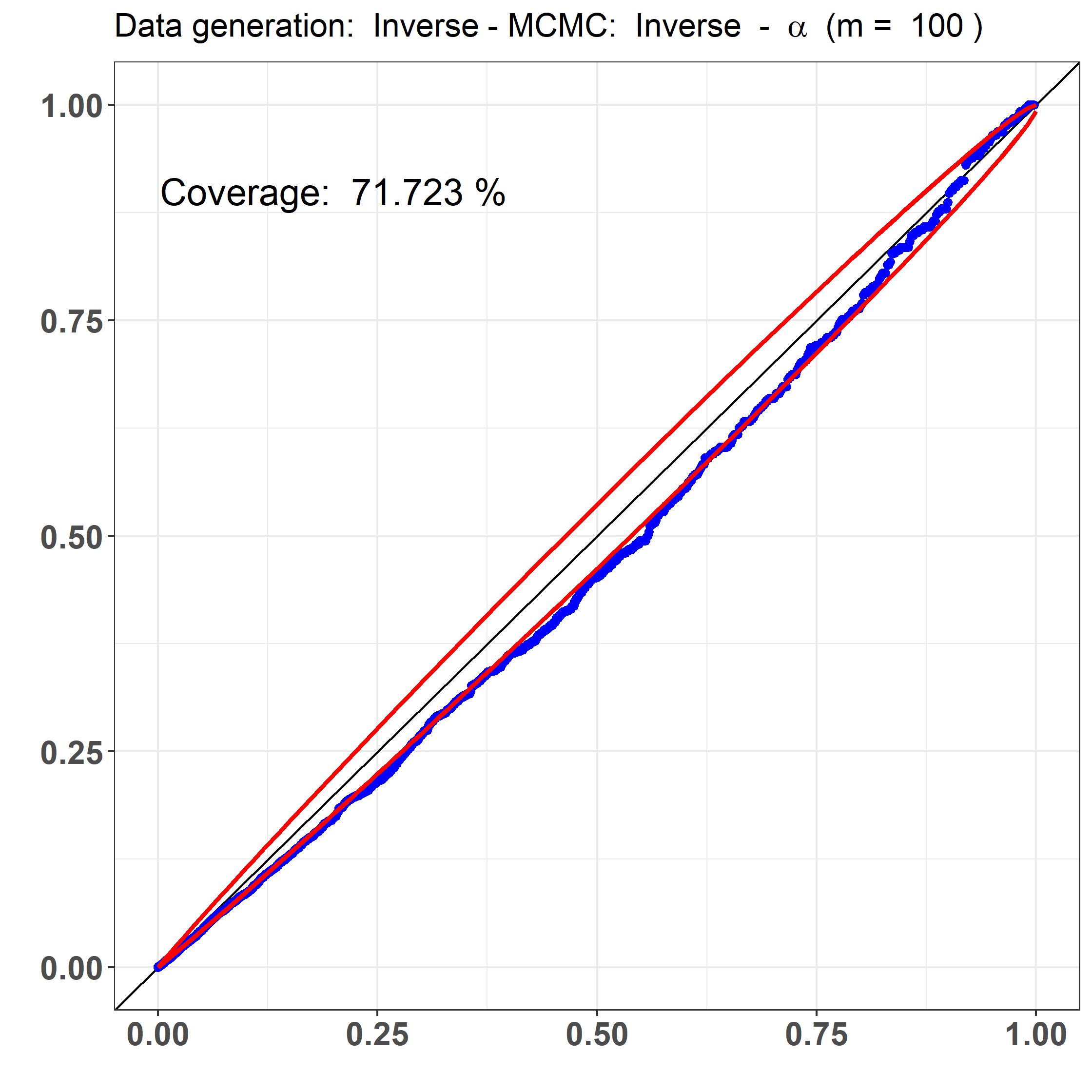}%
		\includegraphics[width=\wdIm\linewidth]{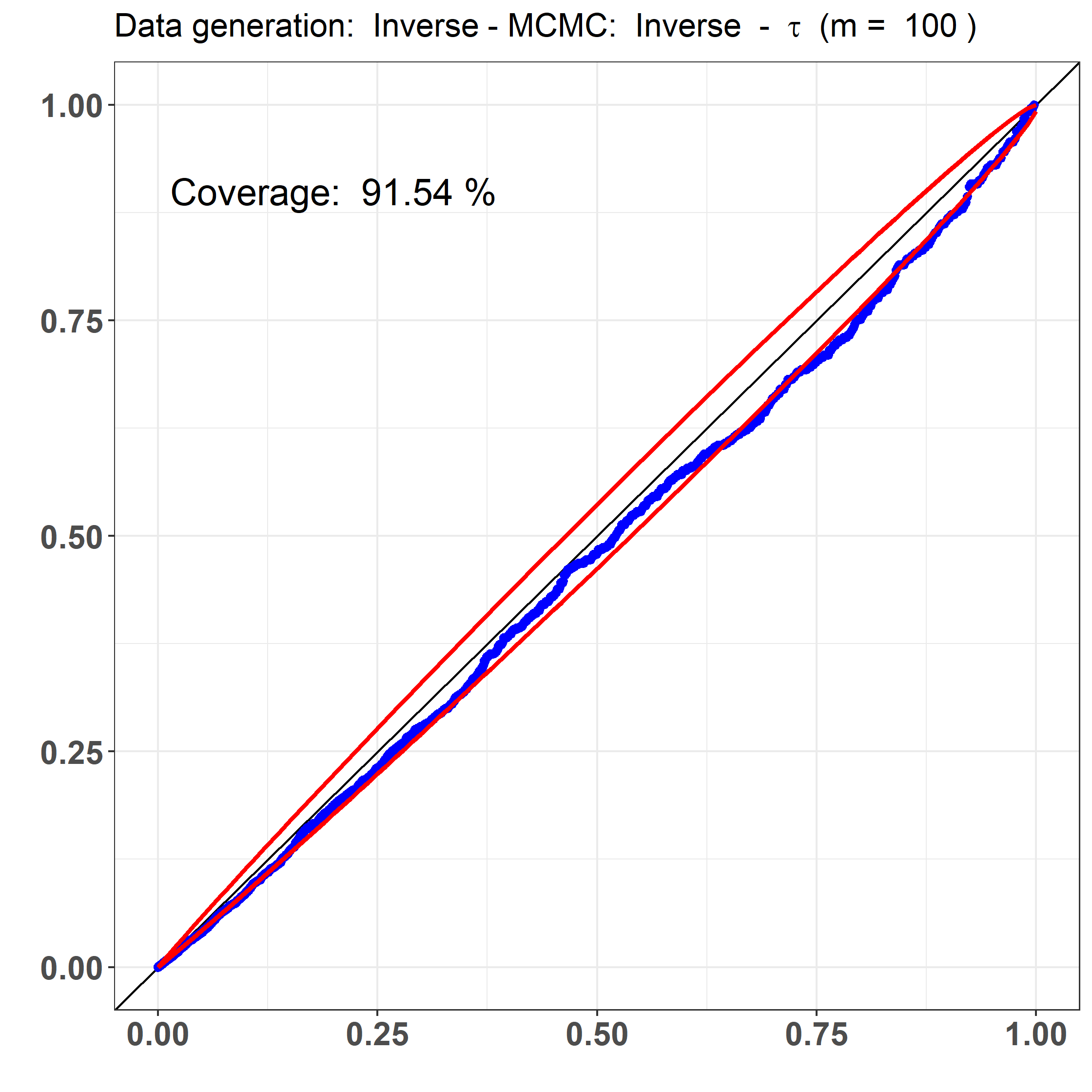}%
		\includegraphics[width=\wdIm\linewidth]{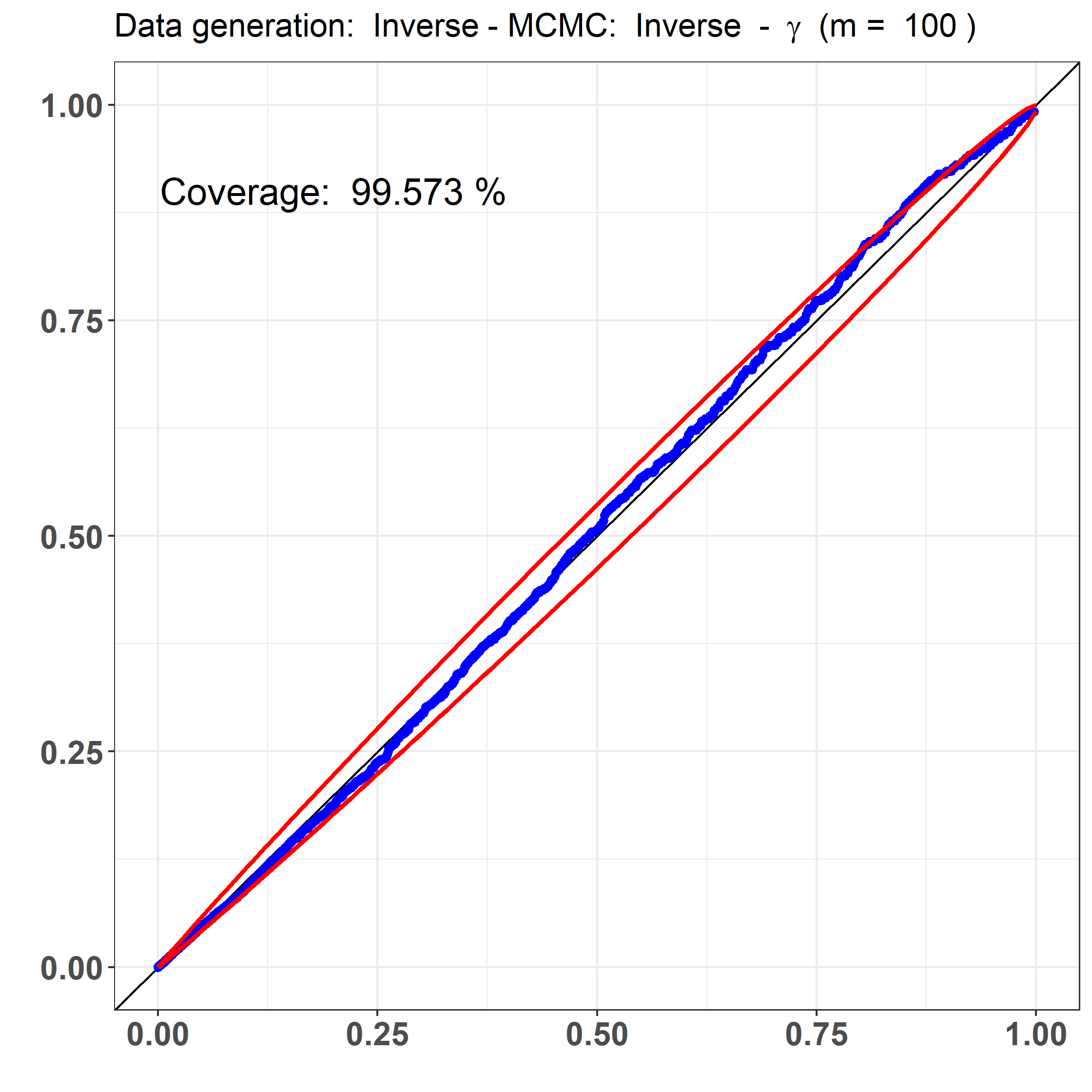} %
		\includegraphics[width=\wdIm\linewidth]{fig/ch5/inv/sbc_inv_beta[1]_100}\\%
		\includegraphics[width=\wdIm\linewidth]{fig/ch5/inv/sbc_inv_beta[2]_100}%
		\includegraphics[width=\wdIm\linewidth]{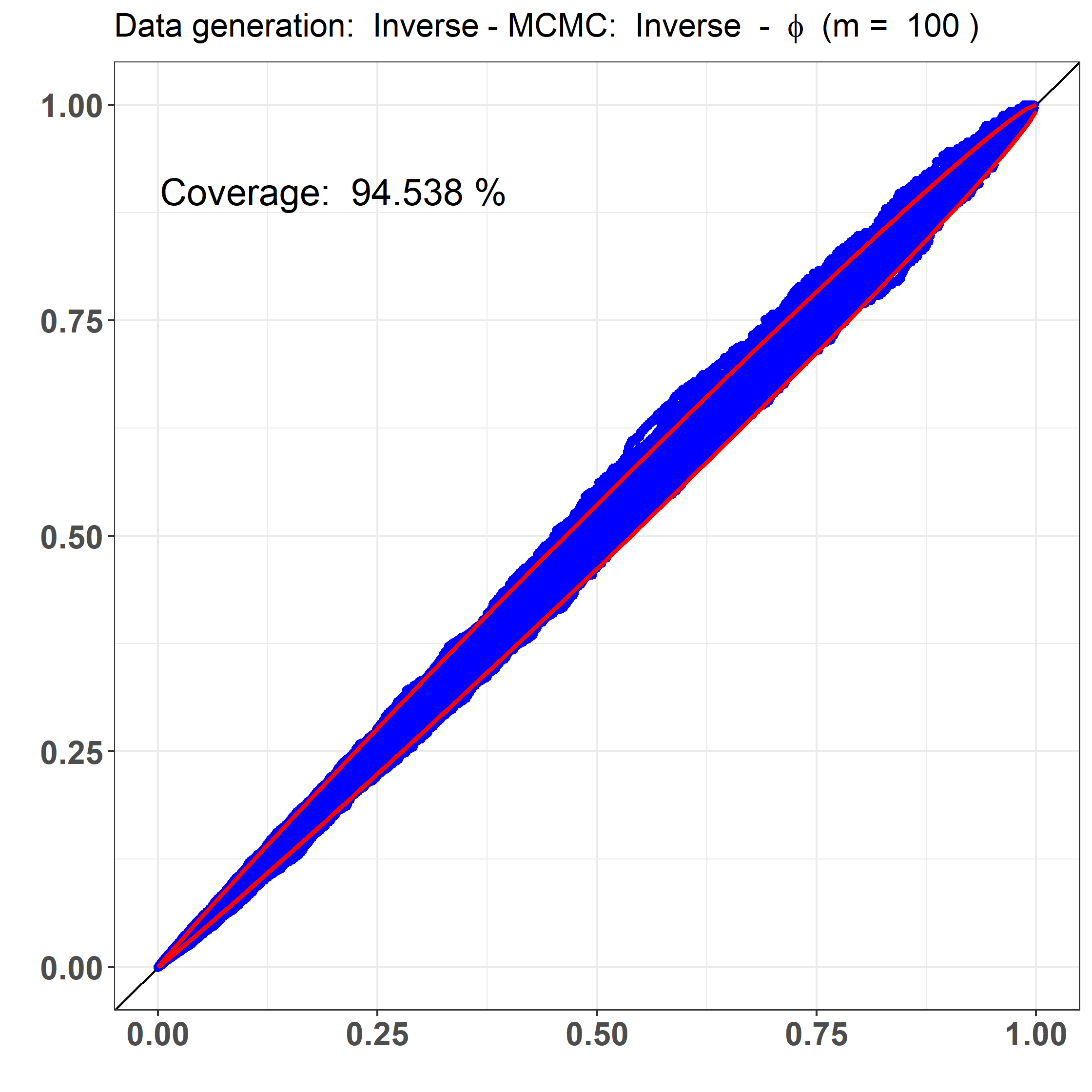}%
		\includegraphics[width=\wdIm\linewidth]{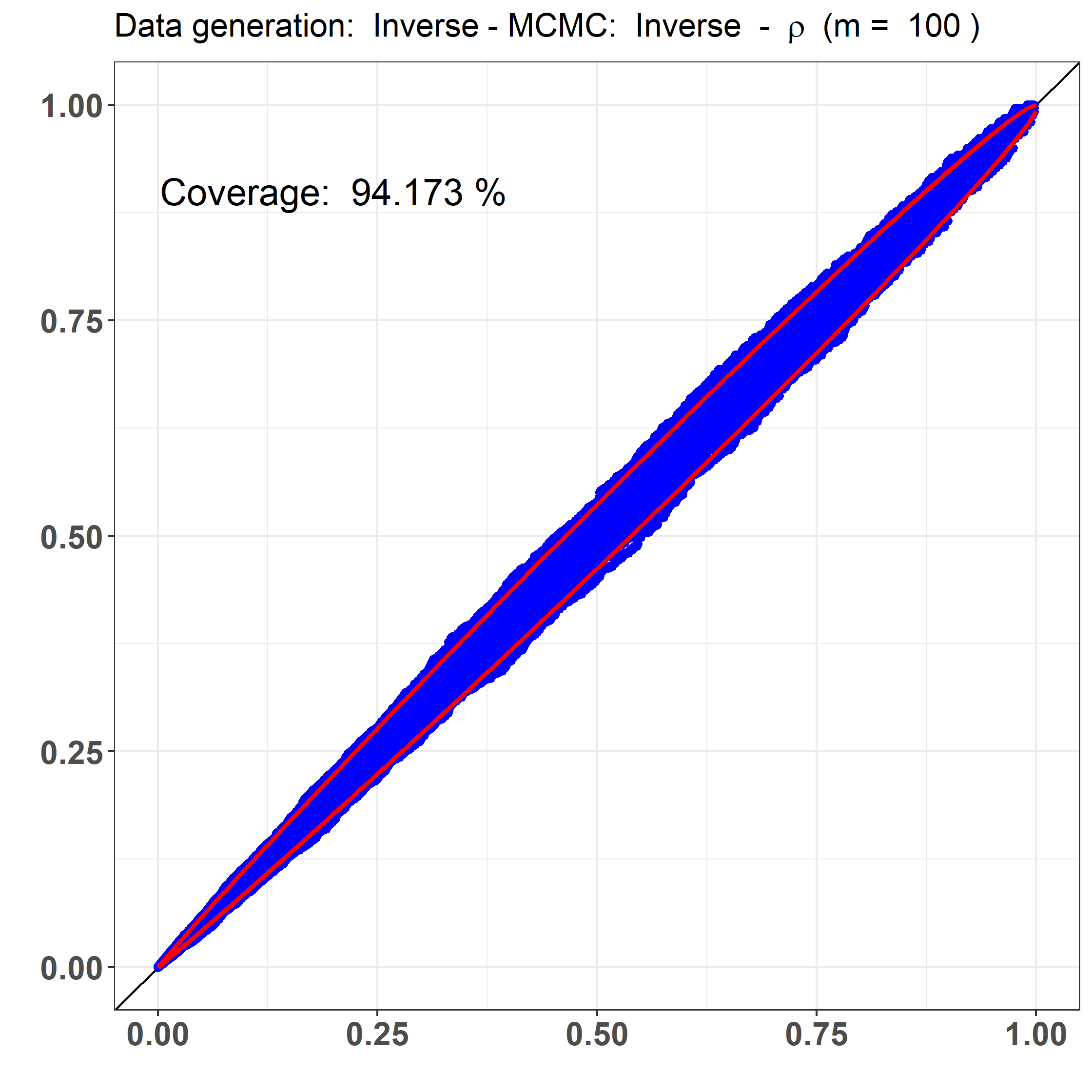}%
		\includegraphics[width=\wdIm\linewidth]{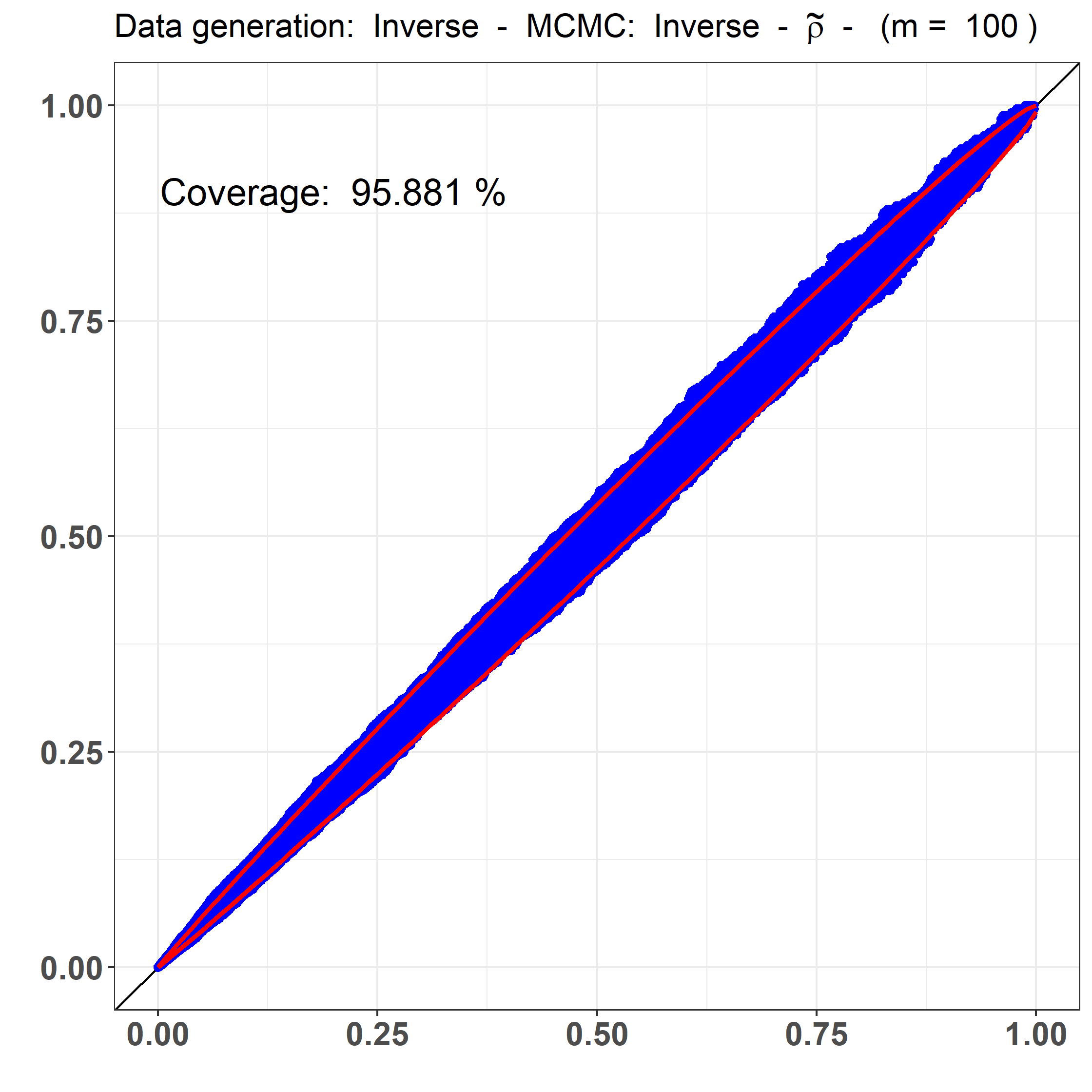}%
	\caption{Data generation: Inverse - MCMC: Inverse - $ m = 100 $}
\end{sidewaysfigure}
\clearpage

\clearpage

\section{Data analysis}
\begin{figure}[h!]
	\caption{Q-Q plots of marginal mixed predictive values versus ordered rank statistic of a uniform distribution for each model run on the SEL dataset}
	\includegraphics[width=\linewidth]{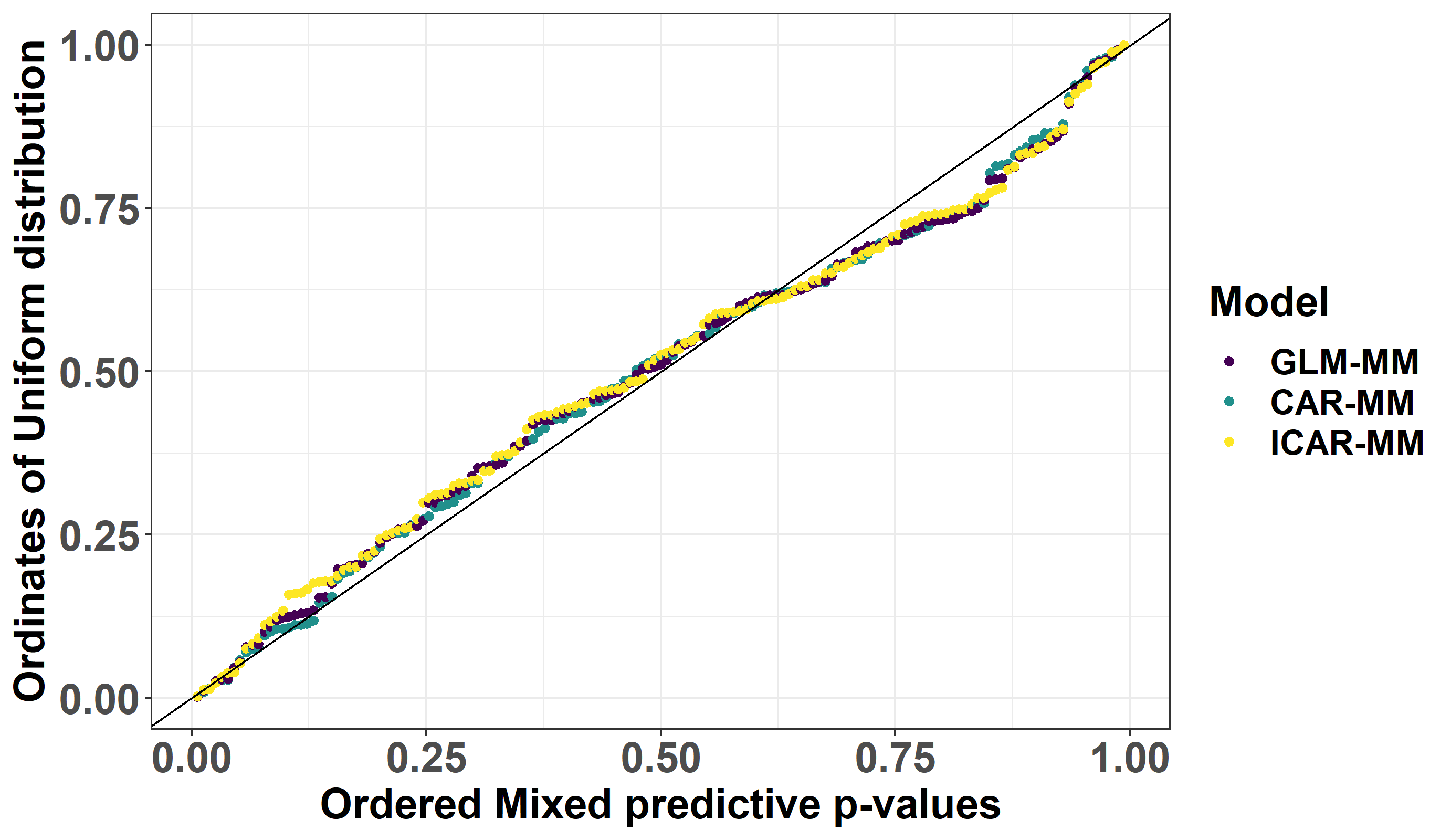}
	\label{fig:post_cal_sel}
\end{figure}

\begin{figure}[h!]
	\centering
	\caption{Posterior means for each areal relative risk ($ \bm{\rho} $ in Equation 7.2) for the ICAR-MM model}
	\includegraphics[width=\linewidth]{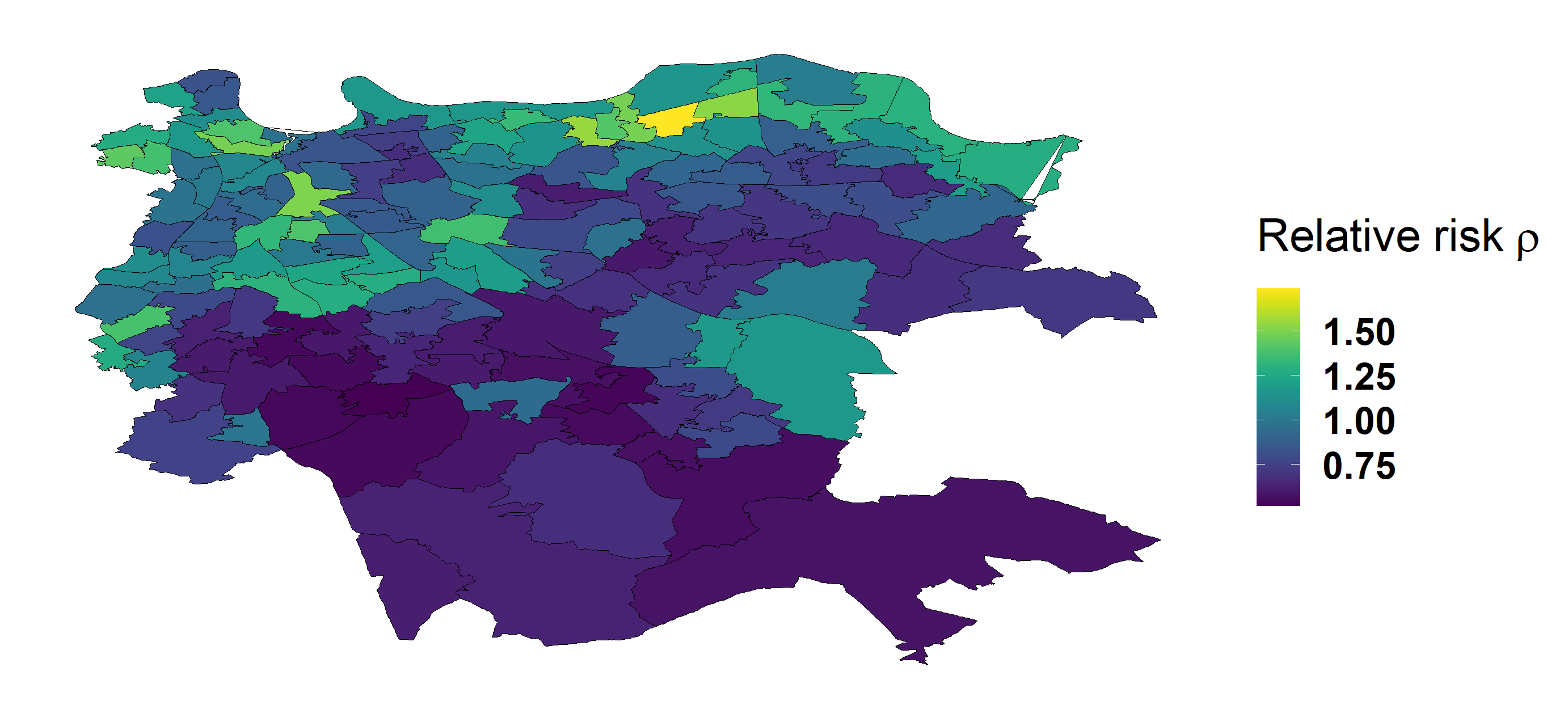}
	\label{fig:rr_map_sel_flat_nb}
\end{figure}

\begin{figure}[h!]
	\centering
	\caption{Posterior means for each areal relative risk ($ \bm{\rho} $ in Equation 7.2) for the GLM-MM model}
	\includegraphics[width=\linewidth]{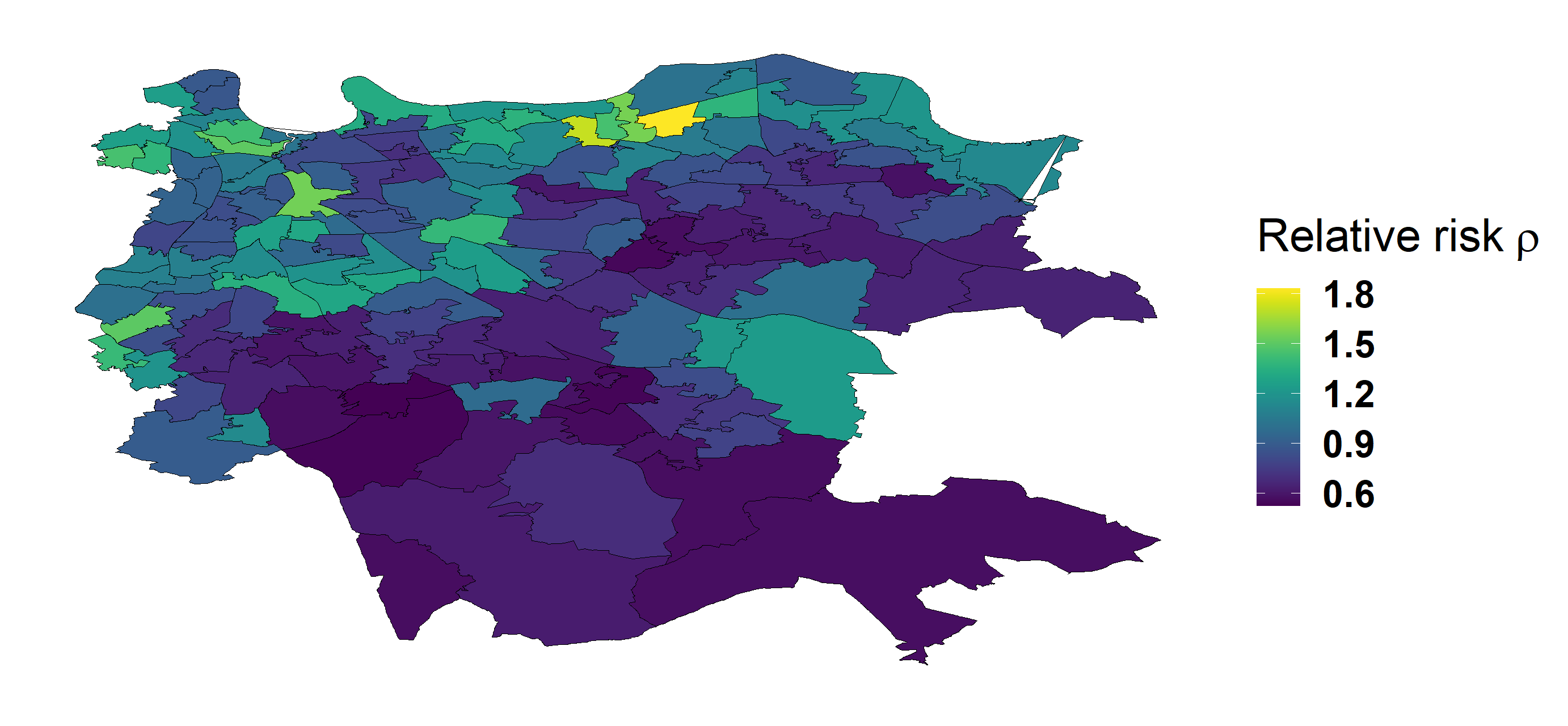}
	\label{fig:rr_map_sel_glm_nb}
\end{figure}
